\newtheorem{theorem}{Theorem}[section]
\newtheorem{lemma}[theorem]{Lemma}
\theoremstyle{definition}
\newtheorem{remark}[theorem]{Remark} 
\numberwithin{equation}{section}
\renewcommand\section{%
  \@startsection{section}%
    {1}
    {0em}
    {1.5cm \@plus 0.1ex \@minus -0.05ex}
    {0.75cm \@plus 0.2em}
    {\centering \large \scshape}
  }
  \renewcommand\subsection{%
  \@startsection{subsection}%
    {2}
    {0em}
    {0.75cm \@plus 0.1ex \@minus -0.05ex}
    {0.5cm \@plus 0.2em}
    {\scshape}
  }
\providecommand{\comment}[1]{}
\RenewDocumentCommand \newline {o}{
\hfill\\[\IfValueTF{#1}{#1}{0} pt]%
}
\newcommand{\nquad}{\mspace{-18mu}}
\newcommand{\n}{\noindent}
\newcommand{\be}{\begin{equation}}
\newcommand{\ee}{\end{equation}}
\newcommand{\Id}{\mathds{1}}
\providecommand{\abs}[1]{\lvert#1\rvert}
\renewcommand*{\vec}[1]{\boldsymbol{#1}}
\ProvideDocumentCommand \conjugate {s m}{
    \IfBooleanTF{#1}{%
        \overline{#2}%
    }{%
        \bar{#2}%
    }%
}
\renewcommand\widehat[1]{%
\savestack{\tmpbox}{\stretchto{%
    \scaleto{%
        \scalerel*[\widthof{\ensuremath{#1}}]{\kern.1pt\mathchar"0362\kern.1pt}%
        {\rule{0ex}{\textheight}}%
    }{\textheight}%
}{2.4ex}}%
\stackon[-6.9pt]{#1}{\tmpbox}%
}
\let\oldforall\forall
\renewcommand{\forall}{\oldforall\,}
\RenewDocumentCommand \to {o}{
    \IfValueTF{#1}{%
        \xrightarrow[\,#1\,]{\;}%
    }{%
        \,\rightarrow\,%
    }%
}
\newcommand{\N}{\mathbb{N}}
\NewDocumentCommand \Z {o}{%
    \IfValueTF{#1}{%
        \mathbb{Z}_{\,#1}%
    }{%
        \mathbb{Z}%
    }%
}
\thanks{The authors acknowledge the support of:
MUR grant Dipartimento di Eccellenza 2023-2027 of Dipartimento di Matematica, Politecnico di Milano, and Gran Sasso Science Institute, and Dipartimento di Matematica ``G. Castelnuovo'', ``Sapienza'' Università di Roma;
GNFM Gruppo Nazionale per la Fisica Matematica - INdAM;
the European Research Council (ERC) under the European Union’s Horizon 2020 research and innovation programme (ERC CoG UniCoSM, grant agreement n. 724939).}
\title[Rigorous derivation of the Efimov effect in a simple model]{Rigorous derivation of the Efimov effect in a simple model}
\author[D. Fermi]{Davide Fermi}
\address{\emph{Politecnico di Milano}, P.zza Leonardo da Vinci 32, 20133 (MI), Italy\\
and \emph{Istituto Nazionale di Fisica Nucleare}, Sezione di Milano, Italy\\
\textsf{davide.fermi@polimi.it}  }
\author[D. Ferretti]{Daniele Ferretti}
\address{\emph{Gran Sasso Science Institute}, Via Michele Iacobucci, 2 - 67100 (AQ), Italy\\ \textsf{daniele.ferretti@gssi.it}}
\author[A. Teta]{Alessandro Teta}
\address{\emph{Sapienza Universit\`a di Roma}, Piazzale Aldo Moro, 5 - 00185 (RM), Italy\\ \textsf{teta@mat.uniroma1.it}}
\begin{document}

\begin{abstract} 
We consider a system of three identical bosons in $\mathbb{R}^3$ with two-body zero-range interactions and a three-body hard-core repulsion of a given radius $a>0$. 
Using a quadratic form approach we prove that the corresponding Hamiltonian is self-adjoint and bounded from below for any value of $a$.
In particular this means that the hard-core repulsion is sufficient to prevent the fall to the center phenomenon found by Minlos and Faddeev in their seminal work on the three-body problem in 1961.  
Furthermore, in the case of infinite two-body scattering length, also known as unitary limit, we prove the Efimov effect, \emph{i.e.}, we show that the Hamiltonian has an infinite sequence of negative eigenvalues $E_n$ accumulating at zero  and fulfilling the asymptotic geometrical law  $\;E_{n+1} / E_n  \; \to \; e^{-\frac{2\pi}{s_0}}\,\; \,\text{for} \,\; n\to +\infty$ holds, where $s_0\approx 1.00624$. 
\newline[10]
\begin{footnotesize}
\emph{Keywords: zero-range interactions, three-body Hamiltonians, Efimov effect.}\newline
\emph{MSC 2020: 
    81Q10; 
    81Q15; 
    70F07; 
    46N50. 
}  
\end{footnotesize}
\end{abstract}
\maketitle

\section{Introduction}

The Efimov effect is an interesting physical phenomenon occurring in three-particle quantum systems in dimension three (\cite{efimov1,efimov2}, see also~\cite{NE}).
It consists in the appearing of an infinite sequence of negative eigenvalues $E_n$, with $E_n \to 0$ for $ n \to \infty$, of the three-body Hamiltonian if the two-particle subsystems do not have bound states and at least two of them exhibit a zero-energy resonance (or, equivalently, an infinite two-body scattering length). 
A remarkable feature of the effect is that the distribution of eigenvalues satisfies 
 the universal geometrical law
\begin{equation}\label{gela}
    \lim_{n\to+\infty}\frac{E_{n+1}}{E_n} \:=\: e^{-\frac{2\pi}{s}}\,,
\end{equation}
where the parameter $s>0$ depends only on the mass ratios and, possibly, on the statistics of the particles.

\n
According to an intuitive physical picture, the  three-particle bound states (or trimers) associated to the eigenvalues are determined by a long range, attractive effective interaction of kinetic origin, which is produced by the resonance condition and does not depend on the details of the two-body potentials.
Roughly speaking, in a trimer the attraction between two particles is mediated by the third one, which is moving back and forth between the two.
It should also be stressed that the Efimov effect disappears if the two-body potentials become more attractive causing the destruction of the zero-energy resonance.  
For interesting experimental evidence of Efimov quantum states see, \emph{e.g.}, \cite{kra}. 

\n
The first mathematical result on the Efimov effect was obtained by Yafaeev in 1974 \cite{yafaev}.
He studied a symmetrized form of the Faddeev equations for the bound states of the three-particle Hamiltonian and proved the existence of an infinite number of negative eigenvalues.
In 1993 Sobolev \cite{sobolev} used a slightly different symmetrization of the equations and proved the asymptotics
\begin{equation}\label{asob}
\lim_{z \to 0^-} \frac{N(z)}{ \abs{\log\abs{z}\mspace{0.75mu}}} = \frac{s}{2\pi}  \,,
\end{equation}
where $N(z)$ denotes the number of eigenvalues smaller than $z<0$.
Note that~\eqref{asob} is consistent with the law~\eqref{gela}.
In the same year Tamura \cite{tamura2} obtained the same result under more general conditions on the two-body potentials.
Other mathematical proofs of the effect were obtained by Ovchinnikov and Sigal in 1979 \cite{OS} and Tamura in 1991 \cite{tamura1} using a variational approach based on the Born-Oppenheimer approximation.
For more recent results on the subject, see ~\cite{BT} (for the case of two identical fermions and a different particle), \cite{gridnev1} (for a two-dimensional variant of the problem) and~\cite{gridnev2}.

We notice that in the above mentioned mathematical results a rigorous derivation of the law \eqref{gela} is lacking.

\n
It is also worth observing that, before the seminal works of Efimov, Minlos and Faddeev \cite{MF1, MF2} studied the problem of constructing the Hamiltonian for a system of three bosons with zero-range interactions in dimension three. It was known that such Hamiltonian cannot be defined considering only pairwise zero-range interactions. Minlos and Faddeev showed that a self-adjoint Hamiltonian can be constructed by imposing suitable two-body boundary conditions at the coincidence hyperplanes, \emph{i.e.}, when the positions of two particles coincide, and also a three-body boundary condition at the triple-coincidence point, when the positions of all the three particles coincide. 
They also proved that the Hamiltonian is unbounded form below, due to the presence of an infinite sequence of negative eigenvalues diverging to $-\infty$.
Such instability property can be seen as a fall to the center phenomenon and it is due to the fact that the interaction becomes too strong and attractive when the three particles are very close to each other.
A further interesting result of the analysis of Minlos and Faddeev, even if it is not explicitly emphasized, is the proof of the Efimov effect in the case of infinite two-body scattering length (corresponding to the resonant case), with a rigorous derivation of the law~\eqref{gela}. This in particular shows that the occurrence of the Efimov effect can be obtained also with zero-range interactions, the only crucial condition being the presence of an infinite two-body scattering length.
Such a result is somewhat tainted by the fact that the Hamiltonian in unbounded from below and therefore unsatisfactory from the physical point of view. 

\n
Our aim is to present a mathematical proof of the Efimov effect and law \eqref{gela} for a bounded from below Hamiltonian obtained by a slight modification of the Minlos and Faddeev Hamiltonian.
\n
We mention that the problem of constructing  a lower bounded Hamiltonian for a three-body system  with zero-range interactions has been recently approached in the literature (see, \emph{e.g.}, \cite{BCFT,FiTe,FeTe,miche}).
The idea is to introduce an effective three-body force acting only when the three particles are close to each other, preventing the fall to the center phenomenon.

\n
In  the present work we consider a Hamiltonian with two-body zero-range interactions and another type of three-body interaction.
More precisely, the effective three-body force is replaced by a three-body hard-core repulsion.
We shall prove that such Hamiltonian is self-adjoint and bounded from below and then prove the Efimov effect, \emph{i.e.}, the existence of an infinite sequence of negative eigenvalues satisfying~\eqref{gela} when the two-body scattering length is infinite.  

\n
Our work can be viewed as an attempt to make rigorous the original physical argument of Efimov. Indeed, Efimov takes into account three identical bosons and his approach is based on the replacement of the two-body potential with a boundary condition, which is essentially equivalent to consider a two-body zero-range interaction. Then, he introduces hyper-spherical coordinates and shows that if the two-body scattering length is infinite then the problem becomes separable and in the equation for the hyper-radius $R$ the long range, attractive effective  potential $- (s_0^2 +1/4)/R^2$ appears. The behavior for small $R$ of this potential is too singular and an extra boundary condition at short distance  must be imposed. After this ad hoc procedure, he obtains the infinite sequence of negative eigenvalues satisfying the law~\eqref{gela} as a consequence of the large $R$ behavior of the effective potential.

\n
The self-adjoint and bounded from below Hamiltonian constructed in this paper can be considered as the rigorous counterpart of the ad hoc regularization scheme mentioned above. Furthermore, we show that the eigenvalues and eigenvectors found in a formal way in the physical literature are in fact eigenvalues and eigenvectors of our Hamiltonian in a rigorous sense and, accordingly, we obtain a mathematical proof of~\eqref{gela}. 

Let us introduce some notation. Here and in the sequel: $\vec{x}_1, \vec{x}_2, \vec{x}_3 \in \mathbb{R}^3$ are the coordinates of the three bosons in a fixed inertial reference frame; the units of measure employed are such that $\hbar = m_1 = m_2 = m_3 = 1$. It is convenient to introduce the system of Jacobi coordinates $\vec{r}_{cm},\vec{x},\vec{y} \in \mathbb{R}^3$ defined as
\begin{equation*}
		\vec{r}_{cm} := {\vec{x}_1 + \vec{x}_2 + \vec{x}_3 \over 3}\,, \qquad 
		\vec{x} := \vec{x}_2 - \vec{x}_1\,, \qquad 
		\vec{y} := {2 \over \sqrt{3}} \left( \vec{x}_3 - {\vec{x}_1 + \vec{x}_2 \over 2} \right) .
	\end{equation*}
Correspondingly, we have $\vec{x}_1	 = \vec{r}_{cm} - {1 \over 2}\,\vec{x} - {1 \over 2 \sqrt{3}}\,\vec{y}$, $\vec{x}_2 = \vec{r}_{cm} + {1 \over 2}\,\vec{x} - {1 \over 2 \sqrt{3}}\,\vec{y}$, $\vec{x}_3	 = \vec{r}_{cm} + {1 \over \sqrt{3}}\,\vec{y}$. 
The transpositions $\sigma_{ij}$ ($i,j \in \{1,2,3\}$) exchanging the $i$\textsuperscript{th} and the $j$\textsuperscript{th} particles are represented by the following changes of coordinates
$\;\displaystyle{\sigma_{12} : (\vec{r}_{cm},\vec{x},\vec{y}) \to (\vec{r}_{cm},-\vec{x},\vec{y})\,,}$  
$\; \displaystyle{\sigma_{23} : (\vec{r}_{cm},\vec{x},\vec{y}) \to (\vec{r}_{cm},\tfrac{1}{ 2} \,\vec{x} + \tfrac{\sqrt{3}}{2}\,\vec{y},\tfrac{\sqrt{3}}{2}\,\vec{x} - \tfrac{1}{2}\,\vec{y}),} \; $ and 
$\displaystyle{\sigma_{31} : (\vec{r}_{cm},\vec{x},\vec{y}) \to (\vec{r}_{cm},\tfrac{1}{2}\,\vec{x} - \tfrac{\sqrt{3}}{ 2}\,\vec{y},- \tfrac{\sqrt{3}}{ 2}\,\vec{x} - \tfrac{1}{ 2}\,\vec{y}).}
$

\n
Upon factorizing the center of mass coordinate $\vec{r}_{cm}$ (\emph{i.e.}, adopting the center-of-mass reference frame) the heuristic Hamiltonian describing our three-boson system is expressed by 
\begin{equation}\label{eq:Hxy}
    H =  - \Delta_{\vec{x}} - \Delta_{\vec{y}} + V^{\mspace{1.5mu} hc}_a(\vec{x}, \vec{y})+ \delta(\vec{x}) + \delta\!\left(\tfrac{1}{2}\, \vec{x} - \tfrac{\sqrt{3}}{2}\, \vec{y}\right) + \delta\!\left(\tfrac{1}{2}\, \vec{x} + \tfrac{\sqrt{3}}{2}\, \vec{y}\right),
\end{equation}
where, at a formal level, $V^{\mspace{1.5mu} hc}_a$ indicates a hard-core potential corresponding to a Dirichlet boundary condition on the hyper-sphere of radius $a$ in $\mathbb{R}^6$, centered at $(\vec{x}, \vec{y}) = (\vec{0},\vec{0})$ and the ``$\delta$-potentials'' represent the zero-range interactions between the pair of particles $(1,2)$, $(2,3)$ and $(3,1)$ respectively.
Notice that
$$(\vec{x}_1 - \vec{x}_2)^2 + (\vec{x}_2 - \vec{x}_3)^2 + (\vec{x}_3 - \vec{x}_1)^2 = \tfrac{3}{2}\,(\abs{\vec{x}}^2 + \abs{\vec{y}}^2),$$
therefore the hard-core potential $V^{\mspace{1.5mu} hc}_a$ plays the role to prevent  the three bosons from reaching  the  triple-coincidence point $\vec{x}_1 = \vec{x}_2 = \vec{x}_3\mspace{1.5mu}$, avoiding the above mentioned fall to the center phenomenon.

\n The bosonic Hilbert space of states for our system is 
\begin{equation}\label{L2}
        L^2_{s}(\Omega_a) := \left\{ \psi \!\in\mspace{-1.5mu} L^2(\Omega_a) \left|\: \psi(\vec{x},\vec{y}) = \psi(-\vec{x},\vec{y}) = \psi\!\left(\tfrac{1}{2}\,\vec{x} + \tfrac{\sqrt{3}}{2}\,\vec{y}, \tfrac{\sqrt{3}}{2}\,\vec{x} - \tfrac{1}{2}\,\vec{y}\!\right) \right.\!\right\},
    \end{equation}
where
\begin{equation}\label{Omegaa}
    \Omega_a := \left\{(\vec{x},\vec{y}) \in \mathbb{R}^6 \,\left|\; \abs{\vec{x}}^2 + \abs{\vec{y}}^2 > a^2 \right.\right\}.   
\end{equation}
Definition~\eqref{L2} encodes the symmetry by exchange given by $\sigma_{12}$ and $\sigma_{23}$ which clearly imply also the condition corresponding to the exchange performed by $\sigma_{31}$, \emph{i.e.} $\psi(\vec{x},\vec{y}) = \psi\!\left(\frac{1}{2}\,\vec{x} - \frac{\sqrt{3}}{2}\,\vec{y}, - \frac{\sqrt{3}}{2}\,\vec{x} - \frac{1}{2}\,\vec{y}\!\right)$.
In the following we shall construct the rigorous counterpart of~\eqref{eq:Hxy} as a self-adjoint and bounded from below operator in $L^2_{s}(\Omega_a)$.  The first step is to interpret the formal unperturbed operator  $\;- \Delta_{\vec{x}} - \Delta_{\vec{y}} + V^{\mspace{1.5mu} hc}_a(\vec{x}, \vec{y})\;$ as  the Dirichlet Laplacian in $\Omega_a$, namely
    \begin{equation}\label{hd}
        \mbox{dom}\big(H_{D}\big) = L^2_{s}(\Omega_a) \cap H^1_0(\Omega_a) \cap H^2(\Omega_a)\,, \qquad H_{D} \psi = (-\Delta_{\vec{x}} - \Delta_{\vec{y}})\psi\,.  
    \end{equation}
It is well known that~\eqref{hd} is the self-adjoint and positive operator uniquely defined by  the positive quadratic form 
    \begin{equation}\label{eq:defQD}
        \mbox{dom}\big(Q_{D}\big) := L^2_{s}(\Omega_a) \cap H^1_0(\Omega_a)\,, \qquad
        Q_{D}[\psi] := \int_{\Omega_a}\hspace{-0.2cm} d\vec{x}\,d\vec{y}\, \Big( |\nabla_{\vec{x}} \psi|^2 + |\nabla_{\vec{y}} \psi|^2 \Big)\,.
    \end{equation}          
The second, and more relevant, step is to define a self-adjoint  perturbation of the Dirichlet Laplacian $H_D$  supported by the coincidence hyperplanes
\begin{equation*}
            \pi_{12} \mspace{-1.5mu}=\mspace{-1.5mu}\left\{(\vec{x},\vec{y}) \!\in\mspace{-1.5mu} \Omega_a \,\Big|\: \vec{x} \mspace{-1.5mu}= 0 \right\}\mspace{-1.5mu}, \mspace{30mu}
            \pi_{23}\mspace{-1.5mu}=\mspace{-1.5mu}\left\{ (\vec{x},\vec{y}) \!\in\mspace{-1.5mu} \Omega_a \,\Big|\: \vec{x} \mspace{-1.5mu}= \mspace{-1.5mu}\sqrt{3}\,\vec{y} \right\}\mspace{-1.5mu}, \mspace{30mu}
            \pi_{31}\mspace{-1.5mu}=\mspace{-1.5mu}\left\{ (\vec{x},\vec{y}) \!\in\mspace{-1.5mu} \Omega_a \,\Big|\: \vec{x}\mspace{-1.5mu} =\mspace{-1.5mu} -\mspace{1.5mu}\sqrt{3}\,\vec{y} \right\}\mspace{-1.5mu}.  
    \end{equation*}
Following the analogy with the one particle case~\cite{albeverio}, a natural attempt is to construct  an operator which, roughly speaking, acts   as the Dirichlet Laplacian $H_D$ outside the hyperplanes  and it is characterized by a (singular) boundary condition on each hyperplane. Specifically, given $\alpha \in \mathbb{R}$,  we demand that
    \begin{equation}\label{eq:deltabc}
        \psi(\vec{x},\vec{y}) = {\xi(\vec{y}) \over 4\pi\,\abs{\vec{x}}} + \alpha\,\xi(\vec{y}) + o (1)\,, \qquad \mbox{for fixed\, $\vec{y} \in B_{a}^{c}$\, and\, $\abs{\vec{x}} \to 0$}.
    \end{equation}
where $B_{a}^{c} := \mathbb{R}^3 \setminus B_a$ and $B_a = \big\{\vec{y} \!\in\! \mathbb{R}^3 \,\big|\, \abs{\vec{y}} \!<\! a \big\}$.
The above condition describes the interaction between particles 1 and 2. Due to the bosonic symmetry requirements, it also accounts for the interactions between the other two admissible pairs of particles. We recall that $-\alpha^{-1}$ has the physical meaning of two-body scattering length. 
The strategy for the mathematical proof is based on a quadratic form approach, similar to the one adopted  for the construction of singular perturbations of a given self-adjoint and positive operator in analogous contexts (see, \emph{e.g.}, \cite{teta, DFT, BCFT}). 

\n
More precisely, starting from the formal Hamiltonian~\eqref{eq:Hxy} characterized by the boundary condition~\eqref{eq:deltabc}, we construct the corresponding quadratic form $Q_{D,\alpha}$  and we formulate our main results (section 2).  

\n
The main technical part of the paper is the proof that  $Q_{D,\alpha}$ is closed and bounded from below  in $L^2_{s}(\Omega_a)$ (section 3).  

\n
Then, we define  the Hamiltonian $H_{D,\alpha}$ of our system as the unique self-adjoint and bounded from below operator associated to $Q_{D,\alpha}$ and we  give an explicit characterization of the domain and action of $H_{D,\alpha}$, providing also an expression for the associated resolvent operator (section 4).  

\n
Finally we show that for $\alpha=0$ the Efimov effect occurs and the law~\eqref{gela} holds (section 5).

\n
In Appendix \ref{app:g0} we collect some useful representation formulas for the integral kernel of the resolvent of the free Laplacian in $\mathbb{R}^6$ and for the integral kernel of the resolvent of the Dirichlet Laplacian in $\Omega_a$. 

\n
In Appendix \ref{bvp} we recall the solution of the eigenvalue problem in the case $\alpha=0$ following the treatment usually given in the physical literature.


\section{Formulation of the main results}

In this section we first give a heuristic argument to derive the quadratic form associated to the formal Hamiltonian~\eqref{eq:Hxy} and then we formulate our main results. 

\n
Let us introduce the potentials $G^{\lambda}_{ij} \xi_{ij}$ ($\lambda >0$)  produced by a suitable charge $\xi_{ij}$ with support concentrated on the hyperplane $\pi_{ij}$:
\begin{equation}\label{eq:GzijInt}
        \big(G^{\lambda}_{ij} \xi_{ij}\big)(\vec{X}) = \int_{\Omega_a}\!\!\! d\vec{X}'\; R^\lambda_{D}(\vec{X},\vec{X}')\;\xi_{ij}(\vec{X}')\,\delta_{\pi_{ij}}(\vec{X}')\,.
    \end{equation}
Here, for the sake of brevity, we have introduced the notation
    \begin{equation*}
        \vec{X} = (\vec{x},\vec{y})\,, \qquad \vec{X}' = (\vec{x}'\!,\vec{y}')\,,
\end{equation*}
and we have denoted by  $R^\lambda_{D}(\vec{X},\vec{X}')$  the integral kernel associated to the resolvent operator $R_{D}^{\lambda} := (H_D +\lambda)^{-1}$.

\n
For later convenience, we write the kernel $R_{D}^{\lambda}\big(\vec{X},\vec{X}'\big)$ as
    \begin{equation}\label{eq:RDG0gz}
        R_{D}^{\lambda}\big(\vec{X},\vec{X}'\big) = R_{0}^{\lambda}\big(\vec{X},\vec{X}'\big) + g^{\lambda}\big(\vec{X},\vec{X}'\big)   \,,
    \end{equation}
where  $R_{0}^{\lambda}\big(\vec{X},\vec{X}'\big)$ is the integral kernel associated to the resolvent operator of the free Laplacian in $\mathbb{R}^6$ and the  function $g^{\lambda}\big(\vec{X},\vec{X}'\big)$ is a reminder term, solving the following elliptic problem for any fixed $\vec{X}' \in \Omega_{a}$ and $\lambda>0$:
    \begin{equation}\label{eq:gzPDE}
        \left\{\begin{array}{ll} \!
            \displaystyle{(- \Delta_{\vec{X}} +\lambda) g^\lambda\big(\vec{X},\vec{X}'\big) = 0} & \displaystyle{\mbox{for\; $\vec{X} \in \Omega_{a}$}}\,, \vspace{0.15cm}\\
            \displaystyle{g^\lambda\big(\vec{X},\vec{X}'\big) = -\,R_{0}^{\lambda}\big(\vec{X},\vec{X}'\big)} & \displaystyle{\mbox{for\; $\vec{X} \in \partial\Omega_{a}$}}\,, \\
            \displaystyle{g^\lambda\big(\vec{X},\vec{X}'\big) \longrightarrow 0} & 
            \displaystyle{\mbox{for\; $\abs{\vec{X}} \!\to\! +\infty$}}\,.
        \end{array}\right.      
    \end{equation}
In Appendix~\ref{app:g0} we give explicit expressions for $R_{0}^{\lambda}\big(\vec{X},\vec{X}'\big)$, $g^{\lambda}\big(\vec{X},\vec{X}'\big)$  and then for $R_{D}^{\lambda}\big(\vec{X},\vec{X}'\big)$.

\n Furthermore,  the potentials $G^\lambda_{ij} \xi_{ij} $ fulfills the following equation in distributional sense
    \begin{equation}\label{eq:GzijEq}
        (H_D +\lambda) G^\lambda_{ij} \xi_{ij} = \xi_{ij}\, \delta_{\pi_{ij}}\,.
    \end{equation}

\n
By a slight abuse of notation, we set
    \begin{equation}\label{eq:Gz}
        G^\lambda \xi := G^\lambda_{12} \xi_{12} + G^\lambda_{23} \xi_{23} + G^\lambda_{31} \xi_{31}\,. 
    \end{equation}  
In order to ensure that $G^\lambda \xi$ actually meets the bosonic symmetries encoded in $L^2_{s}(\Omega_a)$, we must require
    \begin{equation}\label{eq:xisym}
        \xi_{12}(\vec{X}) = \xi(\vec{y})\,, \qquad\quad
        \xi_{23}(\vec{X}) = \xi_{31}(\vec{X}) = \xi(-2\vec{y})\,.
    \end{equation}
Taking this into account and noting that
    \begin{align*}
        \int_{\mathbb{R}^3}\hspace{-0.2cm} d\vec{y}'\, R^\lambda_{0}\big(\vec{x},\vec{y};\vec{x'}\!,\vec{y}'\big)
        &= {1 \over (2\pi)^6} \int_{\mathbb{R}^3}\hspace{-0.2cm} d\vec{y}' \int_{\mathbb{R}^3 \times \mathbb{R}^3} \hspace{-0.6cm} d\vec{h}\,d\vec{k}\; {e^{i \vec{h} \cdot (\vec{x} - \vec{x}') + i \vec{k} \cdot (\vec{y} - \vec{y}')} \over |\vec{h}|^2 + |\vec{k}|^2 +\lambda} \\
        &= {1 \over (2\pi)^3} \int_{\mathbb{R}^3}\!\!\!\! d\vec{h} \; {e^{i \vec{h} \cdot (\vec{x} - \vec{x}')} \over |\vec{h}|^2 +\lambda} 
        = {e^{-\sqrt{\lambda}\,|\vec{x} - \vec{x}'|} \over 4\pi\, |\vec{x} - \vec{x}'|}\;,
    \end{align*}
from Eq.~\eqref{eq:GzijInt} we infer the following:
    \begin{align*}
        & \big(G^\lambda_{12} \xi_{12}\big)(\vec{X}) 
            = \int_{B_{a}^{c}}\!\!\! d\vec{y}'\, R^\lambda_{D}\big(\vec{x},\vec{y};\vec{0},\vec{y}'\big)\;\xi(\vec{y}') \\
        & = \xi(\vec{y}) \int_{B_{a}^{c}}\!\!\! d\vec{y}'\, R^\lambda_{0}\big(\vec{x},\vec{y};\vec{0},\vec{y}'\big) 
            + \int_{B_{a}^{c}}\!\!\! d\vec{y}'\, R^\lambda_{0}\big(\vec{x},\vec{y};\vec{0},\vec{y}'\big)\,\big[ \xi(\vec{y}') - \xi(\vec{y}) \big] 
            + \int_{B_{a}^{c}}\!\!\! d\vec{y}'\, g^\lambda \big(\vec{x},\vec{y};\vec{0},\vec{y}'\big)\,\xi(\vec{y}') \\
        & = \xi(\vec{y})\,{e^{- \sqrt{\lambda}\,\abs{\vec{x}}} \over 4\pi \abs{\vec{x}}} - \xi(\vec{y}) \int_{B_{a}}\!\!\! d\vec{y}'\, R^\lambda_{0}\big(\vec{x},\vec{y};\vec{0},\vec{y}'\big) \\
            & \hspace{4cm} + \int_{B_{a}^{c}}\!\!\! d\vec{y}'\, R^\lambda_{0}\big(\vec{x},\vec{y};\vec{0},\vec{y}'\big)\,\big[ \xi(\vec{y}') - \xi(\vec{y}) \big] 
                + \int_{B_{a}^{c}}\!\!\! d\vec{y}'\, g^\lambda\big(\vec{x},\vec{y};\vec{0},\vec{y}'\big)\,\xi(\vec{y}')\,;
    \end{align*}
    \begin{gather*}
        \big(G^\lambda_{23} \xi_{23}\big)(\vec{X})
        = \int_{B_{a}^{c}}\!\! d\vec{y}'\, R^\lambda_{D}\!\left(\vec{x},\vec{y}; -\,\tfrac{\sqrt{3}}{2}\,\vec{y}'\!, -\,\tfrac{1}{2}\,\vec{y}'\right) \xi(\vec{y}')\,; \\
        \big(G^\lambda_{31} \xi_{31}\big)(\vec{X}) 
        = \int_{B_{a}^{c}}\!\! d\vec{y}'\, R^\lambda_{D}\!\left(\vec{x},\vec{y}; \tfrac{\sqrt{3}}{2}\,\vec{y}'\!, -\,\tfrac{1}{2}\,\vec{y}'\right) \xi(\vec{y}') \,. 
    \end{gather*}

\n
Notice that, due to the singularity for $\abs{\vec{x}}\to 0$,  the potential $G^\lambda_{12} \xi_{12}$ does not belong to $H^1(\Omega_a)$. Of course, the same is true for $G^\lambda_{23} \xi_{23}$ and $G^\lambda_{31} \xi_{31}$, due to the same kind of singularities for $|\vec{x} - \sqrt{3} \vec{y}| \to 0$ and  for $|\vec{x}+\sqrt{3} \vec{y}| \to 0$, respectively. 
Moreover, such a singular behavior is exactly of the same form appearing in~\eqref{eq:deltabc}. This fact suggests to write a generic element of the operator domain as
\[
    \psi = \varphi^{\lambda} + G^{\lambda} \xi\,, \qquad \mbox{with $\varphi^{\lambda} \in \mbox{dom}\big(H_{D}\big)$}\,.
\]
In view of the previous arguments, Eq.~\eqref{eq:deltabc} is equivalent to
    \begin{align}\label{eq:bc0}
        & \varphi^\lambda(\vec{0},\vec{y}) = \left(\! \alpha + \tfrac{\sqrt{\lambda}}{4\pi} \right)\! \xi(\vec{y}) 
          + \xi(\vec{y}) \int_{B_{a}}\!\!\! d\vec{y}'\, R^\lambda_{0}\big(\vec{0},\vec{y};\vec{0},\vec{y}'\big) - \int_{B_{a}^{c}} \!\!\! d\vec{y}'\, R^\lambda_{0}\big(\vec{0},\vec{y};\vec{0},\vec{y}'\big)\,\big[ \xi(\vec{y}') - \xi(\vec{y}) \big] \nonumber \\
        & - \int_{B_{a}^{c}}\!\!\! d\vec{y}' \,  g^\lambda \big(\vec{0},\vec{y};\vec{0},\vec{y}'\big)  \xi(\vec{y}')
            - \int_{B_{a}^{c}}\!\!\! d\vec{y}'\left[R^\lambda_{D}\!\left(\vec{0},\vec{y}; -\,\tfrac{\sqrt{3}}{2}\,\vec{y}'\!, -\,\tfrac{1}{2}\,\vec{y}'\right)\! + R^\lambda_{D}\!\left(\vec{0},\vec{y}; \tfrac{\sqrt{3}}{2}\,\vec{y}'\!, -\,\tfrac{1}{2}\,\vec{y}'\right) \right]\! \xi(\vec{y}') \,.
    \end{align}

\n We now proceed to compute the quadratic form associated to the formal Hamiltonian $H$ introduced in \eqref{eq:Hxy}.  
For functions of the form $\psi = \varphi^{\lambda} + G^{\lambda} \xi\mspace{1.5mu}$, taking into account that $(H_D + \lambda) G^{\lambda}\xi = 0$ outside the two-particle coincidence hyperplanes, a heuristic computation  yields
\begin{align}\label{qf1}
    \big\langle \psi\big| (H + \lambda) \psi \big\rangle &= \lim_{\varepsilon \to 0^{+}} \int_{\Omega_{a,\varepsilon} } \hspace{-0.4cm} d\vec{X}\; \conjugate*{\psi}\,\big(H_{D} + \lambda\big)\psi = \lim_{\varepsilon \to 0^{+}} \int_{\Omega_{a,\varepsilon} } \hspace{-0.4cm} d\vec{X}\; \conjugate*{(\varphi^{\lambda} + G^{\lambda} \xi)}\:\big(H_{D} + \lambda\big) (\varphi^{\lambda} + G^{\lambda} \xi) \nonumber\\
    &= \big\langle \varphi^{\lambda} \big| (H_{D} + \lambda) \varphi^{\lambda} \big\rangle + \lim_{\varepsilon \to 0^{+}} \int_{\Omega_{a,\varepsilon} } \hspace{-0.4cm} d\vec{X}\; \conjugate*{(G^{\lambda} \xi)}\:\big(H_{D} + \lambda\big) \varphi^{\lambda}\,,
\end{align}
where $\Omega_{a,\varepsilon}= \Omega_a \,\cap\, \{\abs{\vec{x}} > \varepsilon\} \,\cap\, \{\abs{\vec{x} - \sqrt{3}\vec{y}} > \varepsilon\} \,\cap\, \{\abs{\vec{x} + \sqrt{3}\vec{y}} > \varepsilon\}$. 
By means of Eqs.~\eqref{eq:GzijEq}, \eqref{eq:Gz}, \eqref{eq:xisym} and of the bosonic symmetry of $\varphi^{\lambda}$ one has
    \begin{align}\label{qf2}
        & \lim_{\varepsilon \to 0^{+}} \int_{\Omega_{a,\varepsilon} } \hspace{-0.5cm} d\vec{X}\; \conjugate*{(G^{\lambda} \xi)}\;\big(H_{D} + \lambda\big) \varphi^{\lambda} 
            = \int_{\Omega_a}\!\!\! d\vec{X}\, \Big( \delta_{\pi_{12}} \conjugate*{\xi_{12}} + \delta_{\pi_{23}} \conjugate*{\xi_{23}} + \delta_{\pi_{31}} \conjugate*{\xi_{31}} \Big)\varphi^{\lambda} 
            = 3 \int_{B^{c}_{a}}\!\!\! d\vec{y}\; \conjugate*{\xi(\vec{y})}\,\varphi^{\lambda}(\vec{0},\vec{y}) \,.
    \end{align}
Moreover, using the boundary condition~\eqref{eq:bc0} we find 
    \begin{align}\label{qf3}
        & \int_{B^{c}_{a}}\!\!\! d\vec{y}\; \conjugate*{\xi(\vec{y})}\;\varphi^{\lambda}(\vec{0},\vec{y}) 
            = \left(\alpha + \tfrac{\sqrt{\lambda}}{4\pi}\right)\! \|\xi\|_{L^2}^{2}
                +  \int_{B_{a}^{c} \times B_{a}} \!\!\! d\vec{y}\, d\vec{y}'\, R^{\lambda}_{0}\big(\vec{0},\vec{y};\vec{0},\vec{y}'\big)\, \abs{\xi(\vec{y})}^2  \nonumber \\
        &\quad -   \int_{B_{a}^{c} \times B_{a}^{c}} \mspace{-20mu} d\vec{y}\, d\vec{y}'\, R^{\lambda}_{0}\big(\vec{0},\vec{y};\vec{0},\vec{y}'\big)\,\conjugate*{\xi(\vec{y})}\big[ \xi(\vec{y}') - \xi(\vec{y}) \big]
            -  \int_{B_{a}^{c} \times B_{a}^{c}} \mspace{-20mu} d\vec{y}\, d\vec{y}'\, g^{\lambda}\big(\vec{0},\vec{y};\vec{0},\vec{y}'\big)\; \conjugate*{\xi(\vec{y})}\,\xi(\vec{y}') \nonumber \\
        &\quad - 2 \int_{B_{a}^{c} \times B_{a}^{c}} \mspace{-20mu} d\vec{y}\, d\vec{y}'\, R^{\lambda}_{D}\!\left(\vec{0},\vec{y}; \tfrac{\sqrt{3}}{2}\vec{y}'\!, -\tfrac{1}{2}\vec{y}'\mspace{-1.5mu}\right) \conjugate*{\xi(\vec{y})}\,\xi(\vec{y}')\,. 
    \end{align} 

\n
By~\eqref{qf1}, \eqref{qf2} and~\eqref{qf3} we obtain the expression for the quadratic form associated to $H$. In order to give a precise mathematical definition, we first consider the quadratic form in $L^2(B_{a}^{c})$ appearing in~\eqref{qf3}:
    \begin{gather}
        \Phi_{\alpha}^{\lambda}[\xi] := \left(\!\alpha + \tfrac{\sqrt{\lambda}}{4\pi}\right)\! \|\xi\|_{L^2}^{2}+ \Phi_{1}^{\lambda}[\xi] + \Phi_{2}^{\lambda}[\xi] + \Phi_{3}^{\lambda}[\xi] + \Phi_{4}^{\lambda}[\xi]\,, \label{phila} \\
        \mbox{dom}\big(\Phi_{\alpha}^\lambda\big) :=  H^{1/2}\big(B_{a}^{c}\big) \cap L^2_{w}\big(B_{a}^{c} \big)\,, \label{dophila}
    \end{gather}
with
    \begin{gather}
        \Phi_{1}^{\lambda}[\xi] := \int_{B_{a}^{c}}\hspace{-0.3cm} d\vec{y} \left(\int_{B_{a}}\hspace{-0.3cm} d\vec{y}'\, R^{\lambda}_{0}\big(\vec{0},\vec{y};\vec{0},\vec{y}'\big) \right) \abs{\xi(\vec{y})}^2 \,, \label{eq:Phi1def}\\
        \Phi_{2}^{\lambda}[\xi] := {1 \over 2}\int_{B_{a}^{c} \times B_{a}^{c}} \hspace{-0.6cm} d\vec{y}\, d\vec{y}'\, R^{\lambda}_{0}\big(\vec{0},\vec{y};\vec{0},\vec{y}'\big)\,\big| \xi(\vec{y}) - \xi(\vec{y}') \big|^2 \,, \label{eq:Phi2def}\\
        \Phi_{3}^{\lambda}[\xi] := - \int_{B_{a}^{c} \times B_{a}^{c}} \mspace{-24mu} d\vec{y}\, d\vec{y}'\, g^{\lambda}\big(\vec{0},\vec{y};\vec{0},\vec{y}'\big)\, \conjugate*{\xi(\vec{y})}\,\xi(\vec{y}')\,, \label{eq:Phi3def}\\
        \Phi_{4}^{\lambda}[\xi] := -\,2 \int_{B_{a}^{c} \times B_{a}^{c} } \mspace{-24mu} d\vec{y}\, d\vec{y}' \, R^{\lambda}_{D}\!\left(\vec{0},\vec{y}; \tfrac{\sqrt{3}}{2}\,\vec{y}'\!, -\,\tfrac{1}{2}\,\vec{y}'\right)\! \conjugate*{\xi(\vec{y})}\,\xi(\vec{y}')\,. \label{eq:Phi4def}
    \end{gather}
    
\n
In \eqref{dophila} we have introduced the weighted $L^2$ space
    \begin{equation*}
        L^2_{w}\big(B_{a}^{c}\big) := L^2\big(B_{a}^{c},\,w\,d\vec{x}\big)\,,
    \end{equation*}
where for any  $b > a$, the continuous function $w$ is given by 
    \begin{equation}\label{eq:wdef}
        w(\vec{x}) = {b-a \over \abs{\vec{x}} - a}\,\Id_{B_b \,\cap\, B_{a}^{c} }(\vec{x}) + \Id_{B_{b}^{c} }(\vec{x}) \;, \qquad \mbox{for\; $\vec{x} \in B_{a}^{c}$}\,,
    \end{equation}
and $H^{1/2}(B_a^c)$ denotes the Sobolev-Slobodecki\u\i~space of fractional order $1/2$ given by
\begin{equation}\label{eq:H12def}
        H^{1/2}\big(B_{a}^{c}\big) := \left\{ \xi \in L^2\big(B_{a}^{c} \big) \;\left|\; \|\xi\|_{H^{1/2}}^{2} := \|\xi\|_{L^2}^{2} + \int_{B_{a}^{c} \times B_{a}^{c}}\hspace{-0.5 cm} d\vec{y}\,d\vec{y}'\,{\big|\xi(\vec{y}) - \xi(\vec{y}')\big|^2 \over |\vec{y} - \vec{y}'|^4} < \infty \right.\right\} .
    \end{equation}
Notice that the choice of the parameter $b>a$ is irrelevant and, since $w \geqslant 1$, we have 
    \begin{equation}\label{eq:L2L2w}
        \|\xi\|_{L^2} \leqslant \|\xi\|_{L^2_{w}}\,, \qquad \mbox{for all $\xi \in L^2\big(B_{a}^{c}\big)$}\,.
    \end{equation}
We  remark that the reason for the choice of~\eqref{dophila}  as the form domain of $\Phi_{\alpha}^\lambda$ will be clear in the course of the proofs reported in section 3. 
We also point out that the form domain~\eqref{dophila} is isomorphic (as a Hilbert space) to the Lions-Magenes space $H^{1/2}_{00}\big(B_{a}^{c}\big)$ \cite[p. 66]{LM}.

We are now in position  to define the quadratic form in $L_{s}^2(\Omega_a)$ 
    \begin{gather}
        Q_{D,\alpha}[\psi] := Q_{D}\big[\varphi^{\lambda}\big] + \lambda \,\|\varphi^{\lambda}\|_{L^2}^2 - \lambda \,\|\psi\|_{L^2}^2  + 3\,\Phi_{\alpha}^{\lambda}[\xi]\,, \label{eq:Qa} \\
        \mbox{dom}\big(Q_{D,\alpha}\big) := \left\{\psi = \varphi^{\lambda}\! + G^{\lambda} \xi \,\left|\, \varphi^{\lambda} \!\in\! L^2_s(\Omega_a) \cap H^1_{0}(\Omega_a)\,,\;\, \xi \!\in\! \mbox{dom}\big(\Phi_{\alpha}^\lambda \big) \,,\;\, \lambda \!>\! 0 \right.\right\}\,, \label{eq:DomQa}
    \end{gather}
where $Q_{D}$ is the Dirichlet quadratic form defined in~\eqref{eq:defQD}. 

\n
We stress that the definition of the quadratic forms~\eqref{phila}, \eqref{dophila} and~\eqref{eq:Qa}, \eqref{eq:DomQa} is the starting point of our rigorous analysis. 

Before proceeding, let us mention an equivalent characterization of the potential $G^{\lambda} \xi$. This will play a role in some of the proofs presented in the following sections. Let $\tau_{ij} : \mbox{dom}\big(H_{D}\big) \subset H^2(\Omega_a) \to H^{1/2}(\pi_{ij})$ be the Sobolev trace operator defined as the unique bounded extension of the evaluation map $\tau_{ij} \varphi := \varphi \!\upharpoonright\! \pi_{ij}$ acting on smooth functions $\varphi \in C^{\infty}_c(\Omega_a)$. We set
    \begin{equation*}
        \vec{\tau} := \tau_{12} \oplus \tau_{23} \oplus \tau_{31} : \mbox{dom}\big(H_{D}\big) \to L^{2}(\pi_{12}) \oplus L^{2}(\pi_{23}) \oplus L^{2}(\pi_{31})\,.
    \end{equation*}
It is crucial to notice that the range of $\vec{\tau}$ actually keeps track of the bosonic symmetry encoded in $\mbox{dom}\big(H_{D}\big) \subset L^2_s(\Omega_a)$. Taking this into account, noting that $R^{\lambda}_{D} : L^2_{s}(\Omega_a) \to \mbox{dom}\big(H_D\big)$ and using the natural embedding $H^{1/2}(\pi_{ij}) \hookrightarrow L^{2}(\pi_{ij})$, it is easy to check that
    \begin{equation}\label{eq:GzHs}
        G^{\lambda} = \big(\vec{\tau}\, R^{\lambda}_{D} \big)^{*} \,:\, \mbox{dom}\big(G^{\lambda}\big) \subset L^{2}(\pi_{12}) \oplus L^{2}(\pi_{23}) \oplus L^{2}(\pi_{31}) \to L^2_{s}(\Omega_a)\,,
    \end{equation}
where, in compliance with~\eqref{eq:xisym} and~\eqref{dophila}, we put
    \begin{equation*}
        \mbox{dom}\big(G^{\lambda}\big) := \Big\{ \vec{\xi} = (\xi_{12}, \xi_{23}, \xi_{31}) \;\Big|\;
        \xi_{12}(\vec{y}) = \xi(\vec{y})\,,\; \xi_{23}(\vec{y}) = \xi_{31}(\vec{y}) = \xi(-2\vec{y})\,,\;\,
        \xi \!\in\! \mbox{dom}\big(\Phi_{\alpha}^\lambda\big) \Big\}\,.
    \end{equation*}
Accordingly, with a slight abuse of notation we can rephrase Eq.~\eqref{eq:Gz} as
    \begin{equation*}
        G^{\lambda} \vec{\xi} \equiv G^{\lambda} \xi\,, \qquad \mbox{for all\; $\vec{\xi} \in \mbox{dom}\big(G^{\lambda}\big)$}\,.
    \end{equation*}
In the sequel we shall refer especially to the bounded operator
    \begin{equation}\label{eq:deftau}
        \tau \equiv \tau_{12} : \mbox{dom}\big(H_{D}\big) \to L^2(\pi_{12}) \equiv L^2(B^{c}_{a}) \,.
    \end{equation}

In the rest of this section we formulate the main results of the paper.

\begin{theorem}[Closedness and lower-boundedness of $Q_{D,\mspace{1.5mu}\alpha}$]\label{thm:Qa} \newline 
    (i)  The quadratic form $\Phi_{\alpha}^{\lambda}$ in $L^2(B_{a}^{c})$ defined by~\eqref{phila}, \eqref{dophila} is closed and bounded from below for any $\lambda>0$. More precisely, there exists a constant $B > 0$ such that
    \begin{equation}\label{eq:upPhi}
            \big| \Phi_{\alpha}^{\lambda}[\xi] \big| \leqslant B \left(\sqrt{\lambda}\, \|\xi\|_{L^2}^2 + \|\xi\|^2_{L^2_{w}} + \|\xi\|_{H^{1/2}}^2 \right), \qquad \mbox{for any\, $\lambda > 0$}\, .
    \end{equation}  
Furthermore, there exist $\lambda_0 > 0$, $A_0 > 0$ and $A(\lambda) > 0$, with $A(\lambda) = o(1)$ for $\lambda \to +\infty$, such that
    \begin{equation}\label{eq:loPhi}
            \Phi_{\alpha}^{\lambda}[\xi] \geqslant A_{0} \sqrt{\lambda}\, \|\xi\|_{L^2}^2 + A(\lambda) \left( \|\xi\|^2_{L^2_{w}} + \|\xi\|^2_{H^{1/2}} \right), \qquad \mbox{for any\, $\lambda > \lambda_0$}\,.
    \end{equation}

\n
(ii) The quadratic form $Q_{D,\alpha}$ in $L^2_{s}(\Omega_a)$ defined by~\eqref{eq:Qa}, \eqref{eq:DomQa} is  independent of $\lambda$,  closed and lower-bounded.
\end{theorem}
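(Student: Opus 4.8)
The plan is to prove part (i) first, since the lower-boundedness and closedness of $Q_{D,\alpha}$ in part (ii) are obtained from it almost for free. For the form $\Phi_\alpha^{\lambda}$ I would work directly with the splitting into $\Phi_1^{\lambda},\dots,\Phi_4^{\lambda}$ together with the diagonal term $\big(\alpha+\tfrac{\sqrt{\lambda}}{4\pi}\big)\|\xi\|_{L^2}^2$, estimating each piece by the three reference quantities $\sqrt{\lambda}\,\|\xi\|_{L^2}^2$, $\|\xi\|_{L^2_w}^2$ and $\|\xi\|_{H^{1/2}}^2$. Using the explicit kernel of Appendix~\ref{app:g0} (a Macdonald function of $\sqrt{\lambda}\,\abs{\vec{y}-\vec{y}'}$ that behaves like $\abs{\vec{y}-\vec{y}'}^{-4}$ at short range), the term $\Phi_2^{\lambda}$ in~\eqref{eq:Phi2def} is recognised as a Gagliardo-type double integral two-sidedly comparable to the $H^{1/2}$-seminorm in~\eqref{eq:H12def}; this is the structural reason why $H^{1/2}(B_a^c)$ enters~\eqref{dophila}. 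The weight $\int_{B_a}R_0^{\lambda}(\vec{0},\vec{y};\vec{0},\vec{y}')\,d\vec{y}'$ defining $\Phi_1^{\lambda}$ in~\eqref{eq:Phi1def} develops, as $\vec{y}$ approaches $\partial B_a$, a singularity of order $(\abs{\vec{y}}-a)^{-1}$, precisely the growth built into $w$ in~\eqref{eq:wdef}, so that $0\leqslant\Phi_1^{\lambda}\lesssim\|\xi\|_{L^2_w}^2$; this is the structural reason for $L^2_w(B_a^c)$. Finally $\Phi_3^{\lambda}$ and $\Phi_4^{\lambda}$ carry non-singular kernels: $g^{\lambda}$ is the regular part, while in $\Phi_4^{\lambda}$ the kernel $R_D^{\lambda}$ is evaluated at two points of $\mathbb{R}^6$ whose separation equals $\big(\abs{\vec{y}}^2+\vec{y}\cdot\vec{y}'+\abs{\vec{y}'}^2\big)^{1/2}\geqslant a$, so it stays bounded; a Schur estimate bounds both by $\|\xi\|_{L^2_w}^2$ with a $\lambda$-dependent constant, yielding the upper bound~\eqref{eq:upPhi}.

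For the lower bound~\eqref{eq:loPhi} I would retain the two manifestly nonnegative terms $\Phi_1^{\lambda},\Phi_2^{\lambda}\geqslant0$ and the dominant diagonal term $\tfrac{\sqrt{\lambda}}{4\pi}\|\xi\|_{L^2}^2$, and absorb everything else for $\lambda$ large. The constant $\alpha\,\|\xi\|_{L^2}^2$ is trivially dominated by a fraction of $\tfrac{\sqrt{\lambda}}{4\pi}\|\xi\|_{L^2}^2$ once $\lambda>\lambda_0(\alpha)$. The decisive point, and the place where the hard core $a>0$ is indispensable, is that the separation entering $\Phi_4^{\lambda}$ is bounded below by $a$, so the decaying factor $e^{-\sqrt{\lambda}\,\abs{\vec{X}-\vec{X}'}}\leqslant e^{-\sqrt{\lambda}\,a}$ forces $\Phi_4^{\lambda}$ (and likewise the boundary-sensitive $\Phi_3^{\lambda}$, controlled by $\|\xi\|_{L^2_w}^2$) to vanish as $\lambda\to+\infty$. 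This is exactly what removes the Minlos--Faddeev fall-to-the-center instability, which corresponds precisely to the scale-invariant, critically attractive pairing of $\Phi_2^{\lambda}$ with $\Phi_4^{\lambda}$ at $a=0$. Keeping a vanishing fraction of $\Phi_1^{\lambda}$ and $\Phi_2^{\lambda}$ then reproduces $A(\lambda)\big(\|\xi\|_{L^2_w}^2+\|\xi\|_{H^{1/2}}^2\big)$ with $A(\lambda)=o(1)$ (the decay being forced by the exponential cut-off of the kernels at scale $\lambda^{-1/2}$), while the surviving part of $\tfrac{\sqrt{\lambda}}{4\pi}\|\xi\|_{L^2}^2$ provides $A_0\sqrt{\lambda}\,\|\xi\|_{L^2}^2$. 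Together, \eqref{eq:upPhi} and~\eqref{eq:loPhi} show that for fixed $\lambda>\lambda_0$ the quantity $\Phi_\alpha^{\lambda}[\xi]+c\,\|\xi\|_{L^2}^2$ is equivalent to the norm $\|\xi\|_{L^2_w}^2+\|\xi\|_{H^{1/2}}^2$ of the complete space in~\eqref{dophila}, whence $\Phi_\alpha^{\lambda}$ is closed and bounded from below.

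For part (ii) I would first establish $\lambda$-independence: the charge $\xi$ attached to $\psi$ is the same for all parameters, being fixed by the coefficient of the common $\abs{\vec{x}}^{-1}$ singularity, while the regular parts are related by the first resolvent identity $\varphi^{\mu}-\varphi^{\lambda}=(\mu-\lambda)\,R_D^{\mu}G^{\lambda}\xi\in\mathrm{dom}(H_D)$. Substituting this into~\eqref{eq:Qa} and using~\eqref{eq:GzijEq} together with the corresponding identity relating $\Phi_\alpha^{\lambda}$ and $\Phi_\alpha^{\mu}$ through the difference of the potentials $G^{\lambda},G^{\mu}$, all the $\lambda$-dependence cancels; this is the standard Kre\u\i n--Posilicano bookkeeping for form perturbations supported on a null set. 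The second, decisive observation is that~\eqref{eq:Qa} rewrites as $Q_{D,\alpha}[\psi]+\lambda\,\|\psi\|_{L^2}^2=\big(Q_D[\varphi^{\lambda}]+\lambda\,\|\varphi^{\lambda}\|_{L^2}^2\big)+3\,\Phi_\alpha^{\lambda}[\xi]$, a sum of two nonnegative terms once $\lambda\geqslant\lambda_0$, since $\Phi_\alpha^{\lambda}\geqslant0$ there by~\eqref{eq:loPhi}. This immediately gives $Q_{D,\alpha}[\psi]\geqslant-\lambda\,\|\psi\|_{L^2}^2$, i.e. lower-boundedness, with no cross-term estimate needed. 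The same rewriting shows the form norm is equivalent to $\|\varphi^{\lambda}\|_{H^1}^2+\|\xi\|_{L^2_w}^2+\|\xi\|_{H^{1/2}}^2$ (the $\varphi^{\lambda}$-block being the $H^1_0$-norm, the $\xi$-block handled by part (i)); since the decomposition $\psi=\varphi^{\lambda}+G^{\lambda}\xi$ is unique (the singular profile of $G^{\lambda}\xi$ cannot be cancelled by $\varphi^{\lambda}\in H^2$) and $\psi\mapsto(\varphi^{\lambda},\xi)$ maps onto a product of complete spaces, closedness follows.

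The main obstacle is the lower bound in part (i): controlling the attractive, critically scaling term $\Phi_4^{\lambda}$ against the positive $\Phi_2^{\lambda}$. At $a=0$ this pairing is scale invariant and its positivity fails in the bosonic $s$-wave channel, which is both the Minlos--Faddeev instability and the seat of the Efimov $-(s_0^2+\tfrac14)/R^2$ effective potential; the whole mechanism is that a strictly positive hard-core radius enforces the uniform lower bound $\geqslant a$ on the separation in $\Phi_4^{\lambda}$, turning it into a genuinely subordinate perturbation for large $\lambda$. A secondary, more clerical difficulty is the exact cancellation of the $\lambda$-dependence in part (ii), which must be checked against the explicit kernels of Appendix~\ref{app:g0} rather than taken for granted.
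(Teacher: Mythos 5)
Your part (ii) follows the paper's route essentially verbatim (same rearrangement of \eqref{eq:Qa} into nonnegative blocks, same first-resolvent-identity argument for $\lambda$-independence), and your treatment of $\Phi_1^{\lambda}$ and $\Phi_2^{\lambda}$ in part (i) --- two-sided comparison with $\|\xi\|^2_{L^2_w}$ and with the $H^{1/2}$ seminorm, with lower constants that are $o(1)$ in $\lambda$ --- is exactly what the paper's Lemma~\ref{lemma:estPhi} does. The genuine gap is your treatment of $\Phi_3^{\lambda}$ in the lower bound \eqref{eq:loPhi}. You treat it as a subordinate error, ``controlled by $\|\xi\|^2_{L^2_w}$'' with a constant that vanishes as $\lambda \to +\infty$ thanks to the exponential cut-off of the kernels. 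This is false: the Dirichlet condition in \eqref{eq:gzPDE} forces $g^{\lambda}(\vec{0},\vec{y};\vec{0},\vec{y}') = -R_0^{\lambda}(\vec{0},\vec{y};\vec{0},\vec{y}')$ whenever $\abs{\vec{y}} = a$, so near the corner $\abs{\vec{y}},\abs{\vec{y}'} \to a^{+}$, $\vec{y}'\to\vec{y}$, the kernel of $\Phi_3^{\lambda}$ has the $\lambda$-\emph{independent} singularity $\sim \abs{\vec{y}-\vec{y}'}^{-4}$ (it is not a ``non-singular kernel''). Testing on charges supported in a thin shell at $\partial B_a$ shows that $\Phi_3^{\lambda}[\xi]/\|\xi\|^2_{L^2_w}$ stays bounded away from zero uniformly in $\lambda$: there is no decay, hence nothing to absorb it with. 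Worse, even a hypothetical bound $\abs{\Phi_3^{\lambda}[\xi]} \leqslant C(\lambda)\,\|\xi\|^2_{L^2_w}$ with $C(\lambda)\to 0$ would not save the argument: the only positive term in your scheme that controls $\|\xi\|^2_{L^2_w}$ is $\Phi_1^{\lambda}$, whose lower-bound constant $A_1(\lambda)$ is proportional to $\lambda\,(a+b)\,K_2\big(\sqrt{\lambda}(a+b)\big)$, i.e.\ exponentially small of order $e^{-\sqrt{\lambda}(a+b)}$ (the weight in \eqref{eq:Phi1def} is exponentially small at distance $b-a$ from the boundary). Any $L^2_w$-sized error decaying polynomially, or even like $e^{-\sqrt{\lambda}\,a}$ --- which is also your claimed rate for $\Phi_4^{\lambda}$ --- swamps $A_1(\lambda)$ for large $\lambda$, so the absorption fails and no positive $A(\lambda)$ survives.

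The missing idea, which is the paper's resolution, is that $\Phi_3^{\lambda}$ is \emph{nonnegative}, not small. Expanding $\xi$ in spherical harmonics and using the series representation \eqref{eq:gzser} of $g^{\lambda}$ together with the Gegenbauer sum rule \eqref{eq:sumGegembauer}, one obtains
\begin{equation*}
\Phi_3^{\lambda}[\xi] = \frac{2}{\pi^2}\sum_{\ell\,=\,0}^{\infty}\,\sum_{|m|\,\leqslant\,\ell}\frac{\ell+2}{2\ell+1}\;\frac{I_{\ell+2}\big(\sqrt{\lambda}\,a\big)}{K_{\ell+2}\big(\sqrt{\lambda}\,a\big)}\left|\,\int_a^{\infty}\!\! dr\;K_{\ell+2}\big(\sqrt{\lambda}\,r\big)\,\xi_{\ell,m}(r)\right|^2 \geqslant\, 0\,,
\end{equation*}
so $\Phi_3^{\lambda}$ can simply be dropped from the lower bound; Cauchy--Schwarz sector by sector, together with $I_{\nu}(t)K_{\nu}(t)\leqslant \tfrac{1}{2\nu}$, then gives the upper estimate $\Phi_3^{\lambda}[\xi] \leqslant B_3\,\|\xi\|^2_{L^2}$ with $B_3$ uniform in $\lambda$, which is what \eqref{eq:upPhi} needs. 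Similarly, for the absorption of $\Phi_4^{\lambda}$ to work you must bound it by $\|\xi\|^2_{L^2}$ --- the paper does this via a Gaussian heat-kernel estimate on $R_D^{\lambda}$ plus a Schur test, with a $\lambda$-uniform constant --- so that it is dominated by the diagonal term $\tfrac{\sqrt{\lambda}}{4\pi}\|\xi\|^2_{L^2}$; here you correctly identify the role of the hard core (separation $\geqslant a$), but phrasing the bound in $L^2_w$ makes it unusable for the reason above. With these two corrections (positivity of $\Phi_3^{\lambda}$, $L^2$-bound for $\Phi_4^{\lambda}$) the rest of your argument goes through and coincides with the paper's proof.
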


\n
Let us define $\Gamma_{\alpha}^{\lambda}$, $\lambda > 0$,  as the unique self-adjoint and lower-bounded operator in $L^2(B_{a}^{c})$ associated with the quadratic form $\Phi_{\alpha}^{\lambda}$. Notice that $\Gamma_{\alpha}^\lambda$ is positive and has a bounded inverse whenever $\lambda>\lambda_0$ (with $\lambda_0$ as in Theorem \ref{thm:Qa}). We also recall that $\tau$ is the Sobolev trace operator on the coincidence hyperplane $\pi_{12}$, see~\eqref{eq:deftau}.

\begin{theorem}[Characterization of the Hamiltonian]\label{thm:thop}\newline
The self-adjoint and bounded from below operator $H_{D,\alpha}$ in $L^2_{s}(\Omega_a)$ uniquely associated to the quadratic form $Q_{D,\alpha}$ is characterized  as follows:
        \begin{gather}
            \mbox{\emph{dom}}\big(H_{D,\alpha}\big) := \left\{\psi = \varphi^{\lambda}\! + G^{\lambda} \xi \in \mbox{\emph{dom}}\big(Q_{D, \alpha}\big) \,\left|\, \varphi^{\lambda} \!\in\!  \mbox{\emph{dom}}\big(H_D\big)
                \,,\;\, \xi \in \mbox{\emph{dom}}\big(\Gamma_{\alpha}^\lambda) \,, \,  \tau \varphi^{\lambda} = \Gamma_{\alpha}^{\lambda}\xi\,, \;\, \lambda \!>\! 0 \right.\right\} , \nonumber \\
            (H_{D, \alpha} + \lambda) \psi = (H_{D} + \lambda) \varphi^{\lambda}\,.
        \end{gather}
For any $\lambda>\lambda_0$ (with $\lambda_0$ as in Theorem \ref{thm:Qa}), the associated resolvent operator $R_{D,\alpha}^{\lambda} := (H_{D, \alpha} +\lambda)^{-1}$ is given by the Krein formula
\begin{equation}\label{eq:Ral}
    R_{D,\alpha}^{\lambda} = R_D^{\lambda} + G^\lambda\, \big(\Gamma_{\alpha}^{\lambda}\big)^{-1}\, \tau R_D^{\lambda}\,.
\end{equation}
\end{theorem}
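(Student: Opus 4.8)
The plan is to extract $H_{D,\alpha}$ from the closed, lower-bounded form $Q_{D,\alpha}$ of Theorem~\ref{thm:Qa} by means of the first representation theorem for quadratic forms, and then to read off the domain, the action, and finally the resolvent. Throughout I fix $\lambda>0$ and exploit that, for this fixed $\lambda$, the map $\psi\mapsto(\varphi^\lambda,\xi)$ given by the decomposition $\psi=\varphi^\lambda+G^\lambda\xi$ is linear and injective on $\mbox{dom}(Q_{D,\alpha})$. The two structural ingredients are the operator identity $G^\lambda=(\vec{\tau}\,R_D^\lambda)^*$ from~\eqref{eq:GzHs}, which says that $G^\lambda$ is the adjoint of the symmetrised trace composed with the Dirichlet resolvent, and the reduction of $\vec{\tau}$ to $\tau=\tau_{12}$ afforded by the bosonic symmetry~\eqref{eq:xisym} (the same mechanism producing the factor $3$ already seen in~\eqref{qf2}).

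First I would compute the sesquilinear form. From~\eqref{eq:Qa} one has $Q_{D,\alpha}[\psi]+\lambda\|\psi\|^2=(Q_D+\lambda)[\varphi^\lambda]+3\,\Phi_\alpha^\lambda[\xi]$, and since $\psi\mapsto(\varphi^\lambda,\xi)$ is linear, polarisation yields, for $\psi=\varphi^\lambda+G^\lambda\xi$ and $\eta=u^\lambda+G^\lambda v$ in $\mbox{dom}(Q_{D,\alpha})$,
\[
    Q_{D,\alpha}(\eta,\psi)+\lambda\,\langle\eta|\psi\rangle=(Q_D+\lambda)(u^\lambda,\varphi^\lambda)+3\,\Phi_\alpha^\lambda(v,\xi)\,.
\]
By the representation theorem, $\psi\in\mbox{dom}(H_{D,\alpha})$ precisely when there is $f\in L^2_s(\Omega_a)$ making the left-hand side equal to $\langle\eta|f\rangle$ for every $\eta$. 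Expanding $\langle\eta|f\rangle=\langle u^\lambda|f\rangle+\langle v|\vec{\tau}\,R_D^\lambda f\rangle$ via~\eqref{eq:GzHs}, and reducing $\langle v|\vec{\tau}\,R_D^\lambda f\rangle=3\,\langle v|\tau R_D^\lambda f\rangle$ by~\eqref{eq:xisym}, the required identity decouples in $u^\lambda$ and $v$. Taking $v=0$ and invoking the representation theorem for the pair $(Q_D,H_D)$ forces $\varphi^\lambda\in\mbox{dom}(H_D)$ and $f=(H_D+\lambda)\varphi^\lambda$, whence $R_D^\lambda f=\varphi^\lambda$ and $\tau R_D^\lambda f=\tau\varphi^\lambda$. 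Taking then $u^\lambda=0$ and invoking the representation theorem for $(\Phi_\alpha^\lambda,\Gamma_\alpha^\lambda)$ forces $\xi\in\mbox{dom}(\Gamma_\alpha^\lambda)$ and $\Gamma_\alpha^\lambda\xi=\tau\varphi^\lambda$. This is exactly the claimed domain together with the action $(H_{D,\alpha}+\lambda)\psi=(H_D+\lambda)\varphi^\lambda$; the converse inclusion is the same computation read backwards. Because the representation-theorem characterisation of the domain is insensitive to the shift $\lambda\langle\eta|\psi\rangle$, and $\Phi_\alpha^\lambda$ is closed and lower-bounded for every $\lambda>0$ by Theorem~\ref{thm:Qa}(i), the characterisation holds for all $\lambda>0$; consistency across different $\lambda$ follows from the $\lambda$-independence of $Q_{D,\alpha}$ in Theorem~\ref{thm:Qa}(ii), the boundary condition $\tau\varphi^\lambda=\Gamma_\alpha^\lambda\xi$ being the $\lambda$-dependent form of a single domain.

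For the Krein formula~\eqref{eq:Ral} I would argue by direct construction. Fix $\lambda>\lambda_0$, so that $\Gamma_\alpha^\lambda$ is positive with bounded inverse. Given $f\in L^2_s(\Omega_a)$, set $\varphi^\lambda:=R_D^\lambda f\in\mbox{dom}(H_D)$ and $\xi:=(\Gamma_\alpha^\lambda)^{-1}\tau R_D^\lambda f\in\mbox{dom}(\Gamma_\alpha^\lambda)$, and let $\psi:=\varphi^\lambda+G^\lambda\xi$, which is precisely the right-hand side of~\eqref{eq:Ral} applied to $f$. By construction $\tau\varphi^\lambda=\tau R_D^\lambda f=\Gamma_\alpha^\lambda\xi$, so the boundary condition holds and the characterisation just proved gives $\psi\in\mbox{dom}(H_{D,\alpha})$ with $(H_{D,\alpha}+\lambda)\psi=(H_D+\lambda)\varphi^\lambda=f$. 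Hence $\psi=R_{D,\alpha}^\lambda f$, which is~\eqref{eq:Ral}.

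The genuinely delicate point is the first step, where one must control the cross terms rigorously rather than through the formal pointwise manipulation underlying~\eqref{qf1}--\eqref{qf3}. The cleanest route rests entirely on the operator identity~\eqref{eq:GzHs} combined with $R_D^\lambda(H_D+\lambda)\varphi^\lambda=\varphi^\lambda$ on $\mbox{dom}(H_D)$: together these make transparent both the emergence of the trace $\tau\varphi^\lambda$ and the factor $3$ coming from the three coincidence hyperplanes via~\eqref{eq:xisym}. One should also check that $\tau R_D^\lambda f\in L^2(B_a^c)$, so that $(\Gamma_\alpha^\lambda)^{-1}$ may be applied in the resolvent construction; this follows from $R_D^\lambda f\in\mbox{dom}(H_D)\subset H^2(\Omega_a)$ together with the boundedness of the trace operator~\eqref{eq:deftau}.
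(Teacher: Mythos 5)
Your proof of the domain and action characterisation is correct and follows essentially the same route as the paper: polarisation of $Q_{D,\alpha}$, Kato's representation theorem for the closed lower-bounded form, and testing the resulting identity separately with $v=0$ and with $u^{\lambda}=0$ so as to invoke the representation theorems for the pairs $(Q_{D},H_{D})$ and $(\Phi_{\alpha}^{\lambda},\Gamma_{\alpha}^{\lambda})$, with the factor $3$ and the trace $\tau\varphi^{\lambda}$ both produced by the identity~\eqref{eq:GzHs} and the bosonic symmetry~\eqref{eq:xisym}. The only presentational difference in this part is that the paper identifies $\tau\varphi^{\lambda}=\Gamma_{\alpha}^{\lambda}\xi$ by writing out $\Phi_{\alpha}^{\lambda}[\xi_1,\xi_2]$ explicitly in terms of the integral kernels (a computation it then reuses to obtain the explicit expression~\eqref{Gammaexp} for the action of $\Gamma_{\alpha}^{\lambda}$), whereas you appeal directly to the abstract definition of the operator associated with $\Phi_{\alpha}^{\lambda}$; both are legitimate, yours being slightly leaner at the cost of not exhibiting~\eqref{Gammaexp}. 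Where you genuinely depart from the paper is the Krein formula: the paper disposes of~\eqref{eq:Ral} by citing general resolvent identities \cite{CFP,posilicano}, while you verify it constructively, checking that $\psi:=R_{D}^{\lambda}f+G^{\lambda}\big(\Gamma_{\alpha}^{\lambda}\big)^{-1}\tau R_{D}^{\lambda}f$ satisfies the boundary condition $\tau\varphi^{\lambda}=\Gamma_{\alpha}^{\lambda}\xi$, hence lies in $\mbox{dom}\big(H_{D,\alpha}\big)$, and is mapped to $f$ by $H_{D,\alpha}+\lambda$. This is more self-contained and yields a bonus the paper leaves implicit: your construction shows that $H_{D,\alpha}+\lambda$ is surjective for $\lambda>\lambda_0$, and since this operator is self-adjoint one has $\ker\big(H_{D,\alpha}+\lambda\big)=\mathrm{ran}\big(H_{D,\alpha}+\lambda\big)^{\perp}=\{0\}$, so $-\lambda$ actually belongs to the resolvent set; this is needed for $R_{D,\alpha}^{\lambda}$ to be well defined, and your argument supplies it rather than assumes it --- it would be worth stating that one line explicitly.
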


\begin{remark}\label{rem:eig}
As a consequence of the previous Theorem, one immediately sees that $\Psi_{\mu}\in \mbox{dom}\big( H_{D,\alpha}\big)$ is an eigenvector of $H_{D, \alpha}$ associated to the negative  eigenvalue $-\mu$, with $\mu>0$,  if and only if 
    \begin{equation}
        \Psi_{\mu} = G^{\mu} \xi_{\mu} \,, \qquad  
        \xi_{\mu} \in \mbox{dom} \big(\Gamma_{\alpha}^{\mu} \big)\,, \qquad 
        \Gamma_{\alpha}^{\mu} \xi_{\mu} =0\,.
    \end{equation}
\end{remark}

The last result concerns the proof of the Efimov effect in the case of infinite two-body scattering length, \emph{i.e.}, when $\alpha=0$, also known as the unitary limit.

\begin{theorem}[Efimov effect]\label{thm:theig}\newline
The Hamiltonian $H_{D,\mspace{1.5mu}0}$ has an infinite sequence of negative eigenvalues $E_n$ accumulating at zero and fulfilling
    \begin{equation}
        E_{n} = -\,\frac{4}{a^2}\, e^{\frac{2}{s_0}(\theta\mspace{1.5mu} - \mspace{1.5mu}n \pi)} \Big( 1+ o(1) \Big)\,, \qquad \mbox{for $\;n \to +\infty$}
    \end{equation}
where $\theta= \arg \Gamma(1 + i s_0)$ and $s_0\approx 1.00624$ is the unique positive solution of the equation
\begin{align}\label{eqs0}
-s \cosh \left(\tfrac{\pi}{2}\, s\right) + \tfrac{8}{\sqrt{3}}\, \sinh \left(\tfrac{\pi}{6}\, s\right) = 0\,.
\end{align}
In particular, the geometrical law~\eqref{gela} holds. Furthermore, the eigenvector associated to $E_n$ is given by
\begin{align}\label{Psin}
\Psi_{n} \big(\vec{x}, \vec{y} \big)= \psi_{n} \big(\abs{\vec{x}}, |\vec{y}| \big) 
+ \psi_{n} \!\left( \left|- \tfrac{1}{2}\, \vec{x} + \tfrac{\sqrt{3}}{2}\, \vec{y} \right|, \left|\tfrac{\sqrt{3}}{2}\, \vec{x} + \tfrac{1}{2}\, \vec{y}  \right| \right) 
+ \psi_{n} \!\left( \left|\tfrac{1}{2}\, \vec{x} + \tfrac{\sqrt{3}}{2}\, \vec{y} \right|, \left| \tfrac{\sqrt{3}}{2}\, \vec{x} - \tfrac{1}{2}\, \vec{y} \right| \right) ,
\end{align}
where 
\begin{equation}\label{psi12n}
\psi_n(r,\rho)= \frac{C_n}{4\pi \, r\mspace{1.5mu} \rho}\, \frac{\sinh \!\left(s_0 \arctan \frac{\rho}{r}\mspace{1.5mu}\right)}{\sinh \!\left(\frac{\pi}{2}\,s_0\right)} \,  K_{is_0} \!\left( \tfrac{t_n}{a} \sqrt{r^2 + \rho^2} \mspace{1.5mu}\right)\!,
\end{equation}
$r=\abs{\vec{x}}$, $\rho=|\vec{y}|$, $C_n$ is a normalization constant, $K_{is_0}$ is the modified Bessel function of the second kind with imaginary order and $t_n$ is the $n$-th positive simple root of the equation $K_{i s_0}(t)=0$. 
\end{theorem}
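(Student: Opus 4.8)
The plan is to read off the negative eigenvalues from Remark~\ref{rem:eig}: a number $-\mu$ with $\mu>0$ belongs to the point spectrum of $H_{D,0}$ precisely when $\ker\Gamma_0^\mu\neq\{0\}$, the eigenvector being $\Psi_\mu=G^\mu\xi_\mu$ with $\Gamma_0^\mu\xi_\mu=0$. Hence the task reduces to determining the values of $\mu$ for which the equation $\Gamma_0^\mu\xi=0$ admits a non-trivial solution $\xi\in\mathrm{dom}(\Phi_0^\mu)$ and to exhibiting these solutions. Since $\Gamma_0^\mu$ commutes with the rotations of $\R^3$ acting on $\vec y$, I would restrict to the spherically symmetric ($s$-wave) sector and look for radial charges $\xi=\xi(\abs{\vec y})$; this already yields an infinite family of eigenvalues and is exactly the channel in which the Efimov states arise.

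For radial charges the potential $\Psi_\mu=G^\mu\xi$ depends only on $r=\abs{\vec x}$ and $\rho=\abs{\vec y}$, it solves $(-\Delta_{\vec x}-\Delta_{\vec y}+\mu)\Psi_\mu=0$ in $\Omega_a$ with Dirichlet data on $\partial\Omega_a$, and the condition $\Gamma_0^\mu\xi=0$ is the rigorous counterpart of the two-body boundary condition~\eqref{eq:deltabc} with $\alpha=0$. The substitution $\Psi_\mu=\phi/(r\rho)$ converts the equation into $(\partial_r^2+\partial_\rho^2)\phi=\mu\,\phi$; passing to the hyperspherical coordinates $R=\sqrt{r^2+\rho^2}$ and $\vartheta=\arctan(\rho/r)\in(0,\pi/2)$ and separating $\phi(R,\vartheta)=u(R)\,v(\vartheta)$, regularity at $\rho=0$ forces $v(\vartheta)=\sinh(s\,\vartheta)$ for a separation parameter $s$, while $u$ solves the modified Bessel equation $u''+R^{-1}u'+(s^2 R^{-2}-\mu)u=0$, whose unique solution (up to a constant) decaying at infinity is $u(R)=K_{is}(\sqrt\mu\,R)$. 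This is the rigorous skeleton of the physical computation recalled in Appendix~\ref{bvp}.

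The separation parameter is then pinned down by the two-body boundary condition on the \emph{symmetrized} wavefunction. Expanding the single channel $\psi_n$ of~\eqref{psi12n} near $\vartheta=\pi/2$ isolates the $1/r$ singularity, which defines $\xi_\mu(\rho)\propto\rho^{-1}K_{is}(\sqrt\mu\,\rho)$, together with a constant term carrying a factor $-\,s\cosh(\tfrac{\pi}{2}s)$; the two exchange channels, seen from the $(1,2)$ coincidence, sit at the hyperangle $\vartheta=\pi/6$ and contribute a constant term carrying a factor $\tfrac{8}{\sqrt3}\sinh(\tfrac{\pi}{6}s)$. Imposing that the total constant part vanish, which is the content of $\alpha=0$, gives precisely the transcendental equation~\eqref{eqs0} and fixes $s=s_0$. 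The Dirichlet condition at $R=a$ becomes $K_{is_0}(\sqrt\mu\,a)=0$, so the admissible energies are $\mu_n=t_n^2/a^2$ with $t_n$ the positive zeros of $K_{is_0}$, and the eigenvectors are exactly~\eqref{Psin}--\eqref{psi12n} with $\sqrt{\mu_n}=t_n/a$. For the quantitative law I would use the small-argument behaviour coming from $K_{is_0}=\tfrac{\pi}{2i\sinh(\pi s_0)}(I_{-is_0}-I_{is_0})$ and $I_{\pm is_0}(t)\sim(t/2)^{\pm is_0}/\Gamma(1\pm is_0)$, namely
\[
    K_{is_0}(t)=-\,c\,\sin\!\Big(s_0\ln\tfrac{t}{2}-\theta\Big)+o(1)\,,\qquad t\to0^+\,,
\]
with $c>0$ and $\theta=\arg\Gamma(1+is_0)$. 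Thus $K_{is_0}$ has infinitely many positive zeros accumulating only at $0$, located at $t_n=2\,e^{(\theta-n\pi)/s_0}(1+o(1))$, whence $E_n=-\mu_n=-\tfrac{4}{a^2}\,e^{\frac{2}{s_0}(\theta-n\pi)}(1+o(1))$ and $E_{n+1}/E_n\to e^{-2\pi/s_0}$, which is~\eqref{gela} with $s=s_0$.

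I expect the main obstacle to be the rigorous upgrade of this separation-of-variables computation to a genuine statement about the operator $\Gamma_0^{\mu_n}$: one must show that the candidate $\xi_n(\rho)=C_n\,\rho^{-1}K_{is_0}(t_n\rho/a)$ really lies in $\mathrm{dom}(\Gamma_0^{\mu_n})\subset H^{1/2}(B_a^c)\cap L^2_w(B_a^c)$ and satisfies $\Gamma_0^{\mu_n}\xi_n=0$ in the operator sense, not merely formally. Concretely, I would unfold $\Gamma_0^\mu$ on radial charges by means of the explicit kernels of $R_0^\lambda$, $g^\lambda$ and $R_D^\lambda$ collected in Appendix~\ref{app:g0}, verify that the angular integrations reassemble into the hyperangular and hyperradial identities above (so that $\Phi_0^{\mu_n}(\xi_n,\eta)=0$ for every test charge $\eta$), and check the membership in $H^{1/2}\cap L^2_w$ using that $K_{is_0}(t_n\rho/a)$ decays exponentially as $\rho\to\infty$ and vanishes at $\rho=a$. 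A secondary technical point, underpinning the geometrical law, is to establish rigorously the existence, the simplicity (e.g.\ by a Wronskian argument ruling out common zeros of $K_{is_0}$ and its derivative) and the stated asymptotics of the zeros $t_n$. Once these are in place, Remark~\ref{rem:eig} together with the characterization in Theorem~\ref{thm:thop} turns the family $\{(\mu_n,\xi_n)\}$ into the claimed infinite sequence of eigenpairs.
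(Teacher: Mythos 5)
Your proposal follows essentially the same skeleton as the paper: the reduction via Remark \ref{rem:eig} to the kernel problem $\Gamma_0^{\mu}\xi=0$, the separation of variables in hyperspherical coordinates (this is precisely the computation of Appendix \ref{bvp}, written in the complementary hyperangle), the derivation of \eqref{eqs0} from the vanishing of the constant term in the $r\to 0$ expansion of the symmetrized function, the quantization $\mu_n=t_n^2/a^2$ from the Dirichlet condition $K_{is_0}(\sqrt{\mu}\,a)=0$, and the eigenvalue law from the small-argument oscillation of $K_{is_0}$; all of these formal and asymptotic steps are correct and coincide with the paper's, including $\theta=\arg\Gamma(1+is_0)$ and the simplicity of the zeros. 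The genuine divergence is in the step you yourself single out as the main obstacle. The paper handles it by proving the lemma $\Psi_n=G^{\mu_n}\xi_n$ (Eq.~\eqref{eq:FaddeevComponent}) through a weak duality argument --- Green's second identity on $\Omega_a$ deprived of an $\varepsilon$-tube around $\pi_{12}$, tested against smooth functions vanishing on $\partial\Omega_a$, boundary-layer estimates as $\varepsilon\to 0^{+}$, then density in the graph norm of $H_D$ --- together with Lemma \ref{GammaDomain}, which identifies $\mathrm{dom}\big(\Gamma_0^{\lambda}\big)=H^1_0(B_a^c)$ and hence yields $\xi_n\in\mathrm{dom}\big(\Gamma_0^{\mu_n}\big)$ from the simplicity of the zeros of $K_{is_0}$. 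You propose instead to verify the form identity $\Phi_0^{\mu_n}[\xi_n,\eta]=0$ for every test charge by unfolding the explicit kernels of Appendix \ref{app:g0}. That route is logically coherent (a vanishing form identity against all $\eta$ in the form domain does give operator-domain membership and $\Gamma_0^{\mu_n}\xi_n=0$, so your weaker membership check in $H^{1/2}\cap L^2_w$ would then suffice), but it commits you to a much heavier computation than the paper's.

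Concretely, the kernels $g^{\mu_n}$ and $R_D^{\mu_n}$ are only available as the Gegenbauer--Bessel series \eqref{eq:gzser}. For a radial charge, the terms of \eqref{Gammaexp} built on $R_0^{\mu_n}(\vec{0},\vec{y};\vec{0},\vec{y}')$ and $g^{\mu_n}(\vec{0},\vec{y};\vec{0},\vec{y}')$ do collapse onto the $\ell=0$ sector, but the exchange term does not: there the six-dimensional arguments have inner product $-\tfrac{1}{2}\,\vec{y}\cdot\vec{y}'$, so after the angular integration all even-$\ell$ Gegenbauer sectors survive, and showing that the resulting series reproduces the factor $\tfrac{8}{\sqrt{3}}\sinh\big(\tfrac{\pi}{6}s_0\big)$ in the boundary condition \eqref{bc1} amounts to resumming the expansion of $G^{\mu_n}\xi_n$ into the closed form \eqref{psi12n} --- that is, to proving exactly the identity $\Psi_n=G^{\mu_n}\xi_n$ which the paper obtains by the lighter PDE/duality argument. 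So budget that identification as the real work, and consider adopting the Green's-identity route for it. One slip should also be fixed: for a radial charge the full symmetrized potential $G^{\mu}\xi$ is \emph{not} a function of $(\abs{\vec{x}},\abs{\vec{y}})$ alone, nor does it solve the equation in all of $\Omega_a$; only the single Faddeev component $G^{\mu}_{12}\xi$ is radial in $(r,\rho)$ and solves the equation away from $\pi_{12}$, the eigenfunction being the three-term sum \eqref{Psin}. Your subsequent treatment of the channels is correct, so this is notational rather than conceptual, but as written the sentence beginning ``For radial charges the potential...'' is false.
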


\section{Analysis of the quadratic form}

As a first step we derive upper and lower bounds for the quadratic form $\Phi_{\alpha}^{\lambda}[\xi]$ defined in~\eqref{phila}, \eqref{dophila}. The estimates reported in the forthcoming Lemma \ref{lemma:estPhi} ultimately account for the main results stated in Theorem \ref{thm:Qa}.

    
\begin{lemma}\label{lemma:estPhi}
    For any $\lambda > 0$ there holds
        \begin{equation}
            \Phi_{i}^{\lambda}[\xi] > 0\,, \qquad \mbox{for\; $i = 1,2,3$}\,.
        \end{equation}
    Moreover, there exist positive constants $A_{i}(\lambda)$ $(i = 1,2)$ and $B_{i}$ $(i = 1,2,3,4)$ such that:
        \begin{gather}
            A_{1}(\lambda)\, \|\xi\|^2_{L^2_{w}} \leqslant \Phi_{1}^{\lambda}[\xi] + \|\xi\|_{L^2}^{2} \leqslant B_{1}\, \|\xi\|^2_{L^2_{w}}\,; \label{eq:estPhi1} \\
            A_{2}(\lambda)\, \|\xi\|^2_{H^{1/2}} \leqslant \Phi_{2}^{\lambda}[\xi] + \|\xi\|_{L^2}^{2} \leqslant B_{2}\, \|\xi\|_{H^{1/2}}^2 \,; \label{eq:estPhi2} \\
            0 \leqslant \Phi_{3}^{\lambda}[\xi] \leqslant B_3\, \|\xi\|_{L^2}^2\,; \label{eq:estPhi3} \\
            \big|\Phi_{4}^{\lambda}[\xi]\big| \leqslant B_4\, \|\xi\|_{L^2}^2\,. \label{eq:estPhi4}
        \end{gather}
    In particular, the constants $A_{1}(\lambda)$ and $A_{2}(\lambda)$ fulfill
        \begin{equation}\label{eq:asyA1A2}
            A_1(\lambda) = o(1)\,, \qquad A_2(\lambda) = o(1)\,, \qquad\quad \mbox{for\; $\lambda \to +\infty$}\,.
        \end{equation}
\end{lemma}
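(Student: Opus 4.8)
The plan is to estimate the four forms separately, basing everything on the explicit kernel of Appendix~\ref{app:g0}, $R_0^\lambda(\vec{0},\vec{y};\vec{0},\vec{y}') = \frac{1}{(2\pi)^3}\,\frac{\lambda}{|\vec{y}-\vec{y}'|^2}\,K_2(\sqrt\lambda\,|\vec{y}-\vec{y}'|)$, and on the single scalar fact that $f(u):=u^2 K_2(u)$ is strictly decreasing on $(0,+\infty)$ with $f(0^+)=2$ (since $\tfrac{d}{du}[u^2K_2(u)]=-u^2K_1(u)$), so that $0<u^2K_2(u)\le2$. With $r=|\vec{y}-\vec{y}'|$ this yields both the uniform bound $0<R_0^\lambda(\vec{0},\vec{y};\vec{0},\vec{y}')\le\frac{1}{4\pi^3}\,r^{-4}$ and the monotonicity $R_0^\lambda\le R_0^{0}=\frac{1}{4\pi^3}\,r^{-4}$, which drive all the upper estimates. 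The positivity $\Phi_1^\lambda[\xi],\Phi_2^\lambda[\xi]>0$ is immediate because $R_0^\lambda>0$ while $|\xi(\vec{y})|^2$ and $|\xi(\vec{y})-\xi(\vec{y}')|^2$ are nonnegative.

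For $\Phi_1$ I would analyse the weight $F^\lambda(\vec{y}):=\int_{B_a}R_0^\lambda(\vec{0},\vec{y};\vec{0},\vec{y}')\,d\vec{y}'$ and show that $\Phi_1^\lambda[\xi]+\|\xi\|_{L^2}^2=\int_{B_a^c}(F^\lambda+1)\,|\xi|^2$ is comparable to $\|\xi\|_{L^2_w}^2$. The key is the boundary behaviour: as $d:=|\vec{y}|-a\to0^+$ the integral concentrates near the closest boundary point, where $R_0^\lambda\sim\frac{1}{4\pi^3}r^{-4}$, and a half-space computation gives $F^\lambda(\vec{y})\sim\frac{1}{4\pi^2\,d}$, matching the $\frac{b-a}{|\vec{y}|-a}$ singularity of $w$ with a $\lambda$-independent constant; away from the sphere $F^\lambda$ is bounded (by $F^0$, using $R_0^\lambda\le R_0^0$) while $w=1$. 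The constants $A_1(\lambda),B_1$ then come from bounding the continuous positive ratio $(F^\lambda+1)/w$, with $B_1$ uniform in $\lambda$ thanks to $F^\lambda\le F^0$. For $\Phi_2$ I would match the diagonal singularity of $R_0^\lambda$ with the Gagliardo kernel of $H^{1/2}(B_a^c)$: the upper bound is immediate from $R_0^\lambda\le\frac{1}{4\pi^3}r^{-4}$, and for the lower bound I split the seminorm at a fixed separation $\rho_0$, use $R_0^\lambda\ge\frac{c_0(\lambda)}{(2\pi)^3}r^{-4}$ with $c_0(\lambda)=(\sqrt\lambda\,\rho_0)^2 K_2(\sqrt\lambda\,\rho_0)$ on $\{r<\rho_0\}$, and absorb the tail $\{r\ge\rho_0\}$ into $\|\xi\|_{L^2}^2$ via $\int_{|\vec{z}|\ge\rho_0}|\vec{z}|^{-4}\,d\vec{z}<\infty$.

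For $\Phi_3$ the upper bound $|\Phi_3^\lambda[\xi]|\le B_3\|\xi\|_{L^2}^2$ rests on the fact that, in contrast to $R_0^\lambda$, the kernel $g^\lambda(\vec{0},\vec{y};\vec{0},\vec{y}')$ is bounded and exponentially decaying---hence Hilbert--Schmidt---on $B_a^c\times B_a^c$, as read off from the explicit formula of Appendix~\ref{app:g0}; its $L^2$ operator norm furnishes $B_3$, uniform in $\lambda$ since $g^\lambda\to0$ as $\lambda\to+\infty$. The positivity $\Phi_3^\lambda[\xi]\ge0$ is the conceptual heart: I would show $-g^\lambda=R_0^\lambda-R_D^\lambda\ge0$ as a quadratic form on $L^2(\Omega_a)$, which follows from the variational characterisation of the resolvent and the form-domain inclusion $H^1_0(\Omega_a)\hookrightarrow H^1(\mathbb{R}^6)$ (the Dirichlet resolvent is dominated by the free one). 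A nonnegative bounded operator with continuous kernel has a positive-definite kernel, so $-g^\lambda(\vec{X},\vec{X}')$ is positive-definite on $\Omega_a$; restriction to the slice $\{\vec{x}=\vec{0}\}\cong B_a^c$ preserves positive-definiteness, whence $\Phi_3^\lambda[\xi]\ge0$. For $\Phi_4$ the decisive point is that the two arguments $(\vec{0},\vec{y})$ and $(\tfrac{\sqrt3}{2}\vec{y}',-\tfrac12\vec{y}')$ are never close: their squared distance in $\mathbb{R}^6$ equals $|\vec{y}|^2+|\vec{y}'|^2+\vec{y}\cdot\vec{y}'\ge\tfrac12(|\vec{y}|^2+|\vec{y}'|^2)$, so the kernel is regular and, using $0\le R_D^\lambda\le R_0^\lambda$, bounded by $\frac{C}{(|\vec{y}|^2+|\vec{y}'|^2)^2}$; a symmetric Schur test with weight $|\vec{y}|^{-1}$, using $\int_{|\vec{y}'|>a}\frac{|\vec{y}'|^{-1}\,d\vec{y}'}{(|\vec{y}|^2+|\vec{y}'|^2)^2}=\frac{2\pi}{|\vec{y}|^2+a^2}\le\frac{\pi}{a}\,|\vec{y}|^{-1}$, gives $B_4$ uniform in $\lambda$.

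Finally, the asymptotics $A_1(\lambda),A_2(\lambda)=o(1)$ express that $R_0^\lambda$ becomes short-ranged, of range $\sim\lambda^{-1/2}$, as $\lambda\to+\infty$: for $A_2$ this is just $c_0(\lambda)\to0$, while for $A_1$ it is that at any fixed distance $d>0$ from the sphere $F^\lambda(\vec{y})\to0$ although $w(\vec{y})>1$ stays fixed, forcing $\inf_{B_a^c}(F^\lambda+1)/w\to0$. I expect the main obstacle to be the positivity of $\Phi_3$---specifically, turning the operator inequality $R_D^\lambda\le R_0^\lambda$ into positive-definiteness of the hyperplane-restricted kernel---whereas the sharp matching of $\Phi_1$ with the singular weight $w$ near $\partial B_a$ is the principal technical point; the bounds for $\Phi_2$ and $\Phi_4$ reduce to explicit kernel estimates and a Schur test.
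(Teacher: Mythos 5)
Your handling of $\Phi_1^{\lambda}$, $\Phi_2^{\lambda}$ and $\Phi_4^{\lambda}$ is correct and essentially identical in strategy to the paper's proof: two-sided bounds on $R_0^{\lambda}$ from the monotonicity of $t \mapsto t^2 K_2(t)$, a splitting near/away from $\partial B_a$ (resp.\ near/away from the diagonal, with the tail absorbed into $\|\xi\|_{L^2}^2$), and for $\Phi_4^{\lambda}$ the separation estimate $\abs{\vec{y}}^2+\vec{y}\cdot\vec{y}'+\abs{\vec{y}'}^2 \geqslant \tfrac12(\abs{\vec{y}}^2+\abs{\vec{y}'}^2)$ followed by a Schur-type bound (the paper uses Cauchy--Schwarz plus a supremum; same effect). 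Your positivity argument for $\Phi_3^{\lambda}$ is genuinely different from the paper's --- the paper reads positivity off an explicit Gegenbauer/Bessel expansion of $g^{\lambda}$, whereas you deduce it from the form inequality $R_D^{\lambda}\leqslant R_0^{\lambda}$ (domain monotonicity) and restriction of the positive-definite kernel $-g^{\lambda}$ to the hyperplane. That route is viable, but it needs one more step than you give: charges supported on $\pi_{12}$ are not in $L^2(\Omega_a)$, so you must approximate $\xi(\vec{y})\,\delta(\vec{x})$ by $f_\varepsilon=\xi\chi_\varepsilon\in L^2(\Omega_a)$, use $\langle f_\varepsilon,(R_0^{\lambda}-R_D^{\lambda})f_\varepsilon\rangle\geqslant0$ (note the two resolvent terms are separately finite for each $\varepsilon$ even though they diverge as $\varepsilon\to0$; only their difference stays bounded), and pass to the limit for $\xi$ in the form domain.

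The genuine gap is your upper bound for $\Phi_3^{\lambda}$. The claim that $g^{\lambda}(\vec{0},\vec{y};\vec{0},\vec{y}')$ is bounded, hence Hilbert--Schmidt, on $B_a^c\times B_a^c$ is false. The kernel does decay exponentially at infinity, but it blows up near the corner set where both arguments approach the sphere $\{\abs{\vec{y}}=a\}$ along the diagonal: on $\partial\Omega_a$ one has $R_D^{\lambda}=0$, hence $g^{\lambda}=-R_0^{\lambda}$ there, and $R_0^{\lambda}$ carries the diagonal singularity. Quantitatively, the half-space (image-charge) approximation near a boundary point gives $-g^{\lambda}(\vec{X},\vec{X}')\approx R_0^{\lambda}(\vec{X},\vec{X}'^{*})$, so that, writing $u=\abs{\vec{y}}-a$, $u'=\abs{\vec{y}'}-a$ and $\gamma$ for the angle between $\vec{y}$ and $\vec{y}'$, the slice kernel grows like $\big((u+u')^2+a^2\gamma^2\big)^{-2}$; its square fails to be integrable at the corner (the Hilbert--Schmidt integral diverges like $\int_0 du/u^{5}$), and this singular part has a fixed sign. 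Consequently no argument based only on the pointwise magnitude of $g^{\lambda}$ can deliver a bound by $\|\xi\|_{L^2}^2$: smooth test functions concentrated at scale $\varepsilon$ near a corner point have $\int\!\!\int\abs{g^{\lambda}}\,\abs{\xi}\abs{\xi'}$ of order $\varepsilon^{-1}$ against unit $L^2$ norm, and such estimates give at best a bound by $\|\xi\|^2_{L^2_w}$. The same singularity is $\lambda$-independent, so your claim that the constant is uniform in $\lambda$ ``since $g^{\lambda}\to0$'' also fails. This is exactly the point where the paper's proof does something structurally different: it expands $g^{\lambda}$ in the series of Appendix A, decomposes $\xi$ in spherical harmonics, and exploits the resulting diagonal (orthogonality) structure in the angular index together with the uniform bounds on $t^{\nu}K_{\nu}(t)$ and $I_{\nu}(t)K_{\nu}(t)\leqslant\tfrac{1}{2\nu}$; it is this operator structure, not the size of the kernel, that produces summable coefficients. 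Any repair of your argument would have to recover that structure (or settle for an $L^2_w$-type bound, which would still suffice for the paper's estimate \eqref{eq:upPhi} but is weaker than \eqref{eq:estPhi3}).
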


\begin{proof}
We discuss separately the terms $\Phi_{i}^{\lambda}[\xi]$ ($i = 1,2,3,4$) defined in~\eqref{eq:Phi1def}~-~\eqref{eq:Phi4def}.

\textsl{1) Estimates for $\Phi_{1}^{\lambda}[\xi]$.}
Making reference to the definition~\eqref{eq:Phi1def}, we first consider the decomposition 
    \begin{align*}
        \Phi_{1}^{\lambda}[\xi] 
            = \int_{B_b \,\cap\, B_{a}^{c}}\mspace{-27mu} d\vec{y}\left(\int_{B_{a}}\!\! d\vec{y}'\, R^{\lambda}_{0}\big(\vec{0},\vec{y};\vec{0},\vec{y}'\big)\! \right) \abs{\xi(\vec{y})}^2   
            + \int_{B_{b}^{c}}\!\! d\vec{y} \left(\int_{B_{a}}\!\! d\vec{y}' \,R^{\lambda}_{0}\big(\vec{0},\vec{y};\vec{0},\vec{y}'\big)\! \right) \abs{\xi(\vec{y})}^2 .
    \end{align*}
Taking into account that the integral kernel $R^{\lambda}_{0}(\vec{X},\vec{X'})$ can be written explicitly in terms of the modified Bessel function of second kind $K_{2}$ ({\sl a.k.a.} Macdonald function), see~\eqref{eq:Gz0} in Appendix \ref{app:g0}, by direct evaluation we get
    \begin{equation*}
        R^{\lambda}_{0}\big(\vec{0},\vec{y};\vec{0},\vec{y}'\big) = {\lambda \over (2\pi)^3}\,{K_{2}\big(\sqrt{\lambda}\, \left|\vec{y} - \vec{y}'\right|\big) \over |\vec{y} - \vec{y}'|^2}\,.
    \end{equation*}
Since $t \in \mathbb{R}_{+} \mapsto K_{\nu}(t)$ is a positive definite function for any fixed $\nu \geqslant 0$ \cite[\S10.37]{NIST}, it is evident that $\Phi_{1}^{\lambda}[\xi]$ is non-negative.
Moreover, from the basic relation $\frac{\mathrm{d}}{\mathrm{d}t} \big(t^\nu K_\nu(t)\big) = -t^\nu K_{\nu-1}(t)\!<\!0$ \cite[Eq. 10.29.4]{NIST}, we deduce that $t \in \mathbb{R}_{+} \mapsto t^{\nu}\, K_{\nu}(t)$ is a continuous and decreasing function.
In particular, we have $\inf_{t \in [0,c]} t^2\, K_2(t) = c^2\, K_2(c)$ for any $c > 0$ and $\sup_{t \in \mathbb{R}_{+}\!} t^2\, K_2(t) = \lim_{t \to 0^{+}\!} t^2\, K_2(t) = 2$, see \cite[Eq. 10.30.2]{NIST}.

On one side, these arguments suffice to infer that
    \begin{equation*}
        {\lambda (a+b)^2\, K_{2}\big(\sqrt{\lambda}\, (a+b)\big) \over (2\pi)^3\,|\vec{y} - \vec{y}'|^4} \leqslant R^{\lambda}_{0}\big(\vec{0},\vec{y};\vec{0},\vec{y}'\big) \leqslant {2 \over (2\pi)^3\,|\vec{y} - \vec{y}'|^4}\,, \quad\; \mbox{for all\, $\vec{y} \!\in\! B_b \!\cap\! B_{a}^{c}$, $\vec{y}' \!\in\mspace{-1.5mu} B_{a}$}\,.
    \end{equation*}
For any $\vec{y} \in B_{a}^{c}$, a direct computation yields
    \begin{align*}
        \int_{B_{a}}\!\! d\vec{y}'\, {1 \over |\vec{y} - \vec{y}'|^4}
        = 2\pi \int_{0}^{a}\! d\rho \int_{-1}^{1}\!\!\! du\;\frac{\rho^2}{(\abs{\vec{y}}^2 + \rho^2 - 2 \abs{\vec{y}}\,\rho\,u)^2} 
        = {2\pi \over \abs{\vec{y}} \!-\! a} \left[{1 \over t \!+\! 1} + {t \!-\! 1 \over 2t} \log\!\left({t \!-\! 1 \over t \!+\! 1}\right) \right]_{t \,=\, \abs{\vec{y}}/a} \!.
    \end{align*}
It is easy to check that the expression between square brackets appearing above is a positive and decreasing function of $t \in [1,+\infty)$. As a consequence, for all $\vec{y} \!\in\mspace{-1.5mu} B_{b} \cap B_{a}^{c}$ we obtain
        \begin{align*}
        \frac{2\pi}{\abs{\vec{y}} - a} \left[\frac{1}{t + 1} + \frac{t - 1}{2t}\,\log\!\left(\frac{t - 1}{t + 1}\right) \right]_{t \,=\, b/a}
        \!\leqslant \,\int_{B_{a}}\!\! d\vec{y}'\, \frac{1}{\abs{\vec{y} - \vec{y}'}^4} 
        \,\leqslant\, \frac{\pi}{\abs{\vec{y}} - a}\,.
    \end{align*}
In view of the previous considerations, this implies
        \begin{align*}
        & \frac{1}{4\pi^2} \left[a + \frac{b^2 - a^2}{2b}\,\log\!\left(\frac{b - a}{b + a}\right) \right] \lambda\,(a+b)\,K_2\big(\sqrt{\lambda}\, (a+b)\big) \int_{B_b \,\cap\, B_{a}^{c}}\mspace{-27mu} d\vec{y}\;{\abs{\xi(\vec{y})}^2 \over \abs{\vec{y}} - a} \\
        & \hspace{4cm} \leqslant\,\int_{B_b \,\cap\, B_{a}^{c}}\mspace{-27mu} d\vec{y} \left(\int_{B_{a}}\nquad d\vec{y}'\, R^{\lambda}_{0}\big(\vec{0},\vec{y};\vec{0},\vec{y}'\big) \right) \!\abs{\xi(\vec{y})}^2 
        \leqslant\, \frac{1}{4\pi^2} \int_{B_b \,\cap\, B_{a}^{c}}\mspace{-27mu} d\vec{y}\; \frac{\abs{\xi(\vec{y})}^2\!}{\abs{\vec{y}} - a}\,.
    \end{align*}    
 \comment{\begin{align*}
        & {1 \over 4\pi^2} \left[{1 \over t + 1} + {t - 1 \over 2t}\,\log\!\left(\frac{t - 1}{t + 1}\right) \right]_{t \,=\, b/a} \nquad \lambda(a+b)^2\,K_{2}\big(\sqrt{\lambda}\, (a+b)\big) \int_{B_b \,\cap\, B_{a}^{c}}\mspace{-27mu} d\vec{y}\;{\abs{\xi(\vec{y})}^2 \over \abs{\vec{y}} - a} \\
        & \hspace{4cm} \leqslant\,\int_{B_b \,\cap\, B_{a}^{c}}\mspace{-27mu} d\vec{y} \left(\int_{B_{a}}\nquad d\vec{y}'\, R^{\lambda}_{0}\big(\vec{0},\vec{y};\vec{0},\vec{y}'\big) \right) \!\abs{\xi(\vec{y})}^2 
        \leqslant\, \frac{1}{4\pi^2} \int_{B_b \,\cap\, B_{a}^{c}}\mspace{-27mu} d\vec{y}\; \frac{\abs{\xi(\vec{y})}^2\!}{\abs{\vec{y}} - a}\,.
    \end{align*}
    }
On the other side, noting that
    \begin{equation*}
        0 \leqslant R^{\lambda}_{0}\big(\vec{0},\vec{y};\vec{0},\vec{y}'\big) \leqslant {\lambda (b-a)^2 \, K_{2}\big(\sqrt{\lambda}\, (b-a)\big) \over (2\pi)^3\,|\vec{y} - \vec{y}'|^4}\,, \qquad \mbox{for all\, $\vec{y} \!\in\! B_{b}^{c}$, $\vec{y}' \!\in\! B_{a}$}\,,
    \end{equation*}
by computation similar to those outlined above (recall, in particular, that $t^2 K_2(t) \leqslant 2$ for all $t \geqslant 0$), we infer
    \begin{align*}
        0 \leqslant \int_{B_{b}^{c}}\!\!\! d\vec{y} \left(\int_{B_{a}}\!\!\! d\vec{y}' \,R^{\lambda}_{0}\big(\vec{0},\vec{y};\vec{0},\vec{y}'\big) \right) \abs{\xi(\vec{y})}^2 
        \leqslant {1 \over 4\pi^2(b-a)} \int_{B_{b}^{c}}\hspace{-0.2cm} d\vec{y}\; \abs{\xi(\vec{y})}^2\,.
    \end{align*}
    
Summing up, we obtain   
\begin{align*}
        & \frac{1}{4\pi^2} \left[a + \frac{b^2 - a^2}{2b}\,\log\!\left(\frac{b - a}{b + a}\right) \right] \lambda\,(a+b)\,K_2\big(\sqrt{\lambda}\, (a+b)\big) \int_{B_b \,\cap\, B_{a}^{c}}\mspace{-27mu} d\vec{y}\;{\abs{\xi(\vec{y})}^2 \over \abs{\vec{y}} - a} \\
        & \hspace{5cm} \leqslant \Phi_{1}^{\lambda}[\xi] 
        \leqslant \frac{1}{4\pi^2}\! \left( \int_{B_b \,\cap\, B_{a}^{c}}\mspace{-27mu} d\vec{y}\; \frac{\abs{\xi(\vec{y})}^2\!}{\abs{\vec{y}} - a} + {1 \over b-a} \int_{B_{b}^{c}}\hspace{-0.2cm} d\vec{y}\; \abs{\xi(\vec{y})}^2 \right).
\end{align*}
\comment{\begin{align*}
        & {1 \over 4\pi^2}\! \left[{1 \over t + 1} + {t - 1 \over 2t}\,\log\!\left(\frac{t - 1}{t + 1}\right) \right]_{t \,=\, b/a} \! \lambda(a+b)^2\,K_{2}\big(\sqrt{\lambda}\, (a+b)\big) \int_{B_b \,\cap\, B_{a}^{c}}\mspace{-27mu} d\vec{y}\;{\abs{\xi(\vec{y})}^2 \over \abs{\vec{y}} - a} \\
        & \hspace{5cm} \leqslant \Phi_{1}^{\lambda}[\xi] 
        \leqslant \frac{1}{4\pi^2}\! \left( \int_{B_b \,\cap\, B_{a}^{c}}\mspace{-27mu} d\vec{y}\; \frac{\abs{\xi(\vec{y})}^2\!}{\abs{\vec{y}} - a} + {1 \over b-a} \int_{B_{b}^{c}}\hspace{-0.2cm} d\vec{y}\; \abs{\xi(\vec{y})}^2 \right).
\end{align*}
}
From here we readily deduce~\eqref{eq:estPhi1}, recalling the basic relations~\eqref{eq:wdef}~\eqref{eq:L2L2w} and noting that 
    \begin{gather*}
        \int_{B_b \,\cap\, B_{a}^{c}}\mspace{-27mu} d\vec{y}\;{\abs{\xi(\vec{y})}^2 \over \abs{\vec{y}} - a}
        = \int_{B_{a}^{c}}\!\!\! d\vec{y}\;w(\vec{y})\,\abs{\xi(\vec{y})}^2 - \int_{B_{b}^{c}}\!\!\! d\vec{y}\;\abs{\xi(\vec{y})}^2
        \geqslant \|\xi\|_{L^2_{w}}^2 - \|\xi\|_{L^2}^2\,, \\
        \int_{B_b \,\cap\, B_{a}^{c}}\mspace{-27mu} d\vec{y}\; \frac{\abs{\xi(\vec{y})}^2\!}{\abs{\vec{y}} - a} + {1 \over b-a} \int_{B_{b}^{c}}\hspace{-0.2cm} d\vec{y}\; \abs{\xi(\vec{y})}^2 
        = \tfrac{1}{b-a}\, \|\xi\|_{L^2_{w}}^{2}\,.
    \end{gather*}
The claim in~\eqref{eq:asyA1A2} regarding the constant $A_1(\lambda)$ follows by elementary considerations, noting that the map $t \mapsto t^2\, K_2(t)$ vanishes with exponential rate in the limit $t \to +\infty$ \cite[Eq. 10.40.2]{NIST}.

\textsl{2) Estimates for $\Phi_{2}^{\lambda}[\xi]$.}
Recall the explicit expression~\eqref{eq:Phi2def}. By arguments similar to those described before, it is easy to see that $\Phi_{2}^{\lambda}[\xi] \geqslant 0$. Next, let us fix arbitrarily $\varepsilon > 0$ and consider the set 
    \begin{equation*}
        \Delta_{a,\varepsilon} := \big\{(\vec{y},\vec{y}') \in B_{a}^{c} \!\times\! B_{a}^{c}\;\,\big|\;\; |\vec{y} - \vec{y}'| < \varepsilon\big\}\,.
    \end{equation*}
We re-write the definition~\eqref{eq:Phi2def} accordingly as
    \begin{align*}
        & \Phi_{2}^{\lambda}[\xi] = {1 \over 2} \int_{\Delta_{a,\varepsilon}} \nquad d\vec{y}\, d\vec{y}'\, R^{\lambda}_{0}\big(\vec{0},\vec{y};\vec{0},\vec{y}'\big)\,\big| \xi(\vec{y}) - \xi(\vec{y}') \big|^2 \\
        & \hspace{3cm} + {1 \over 2} \int_{(B_{a}^{c} \times B_{a}^{c}) \,\setminus\, \Delta_{a,\varepsilon}} \hspace{-0.6cm} d\vec{y}\, d\vec{y}'\, R^{\lambda}_{0}\big(\vec{0},\vec{y};\vec{0},\vec{y}'\big)\,\big| \xi(\vec{y}) - \xi(\vec{y}') \big|^2\,.
    \end{align*}
    
On one side, considerations analogous to those reported in part 1) of this proof yield
    \begin{equation*}
        {\lambda \varepsilon^2\, K_2\big(\sqrt{\lambda}\, \varepsilon\big) \over (2\pi)^3\,|\vec{y} - \vec{y}'|^4} \leqslant R^{\lambda}_{0}\big(\vec{0},\vec{y};\vec{0},\vec{y}'\big) \leqslant {2 \over (2\pi)^3\,|\vec{y} - \vec{y}'|^4}\,, \qquad \mbox{for all\, $(\vec{y},\vec{y}') \in \Delta_{a,\varepsilon}$}\,,
    \end{equation*}
which implies, in turn,
    \begin{align*}
        & {\lambda \varepsilon^2\, K_2\big(\sqrt{\lambda}\, \varepsilon\big) \over (2\pi)^3} \int_{\Delta_{a,\varepsilon}} \nquad d\vec{y}\, d\vec{y}'\, {\big| \xi(\vec{y}) - \xi(\vec{y}') \big|^2 \over |\vec{y} - \vec{y}'|^4} \\
        & \hspace{3cm} \leqslant \int_{\Delta_{a,\varepsilon}} \nquad d\vec{y}\, d\vec{y}'\, R^{\lambda}_{0}\big(\vec{0},\vec{y};\vec{0},\vec{y}'\big)\,\big| \xi(\vec{y}) - \xi(\vec{y}') \big|^2
        \leqslant {1 \over 4\pi^3} \int_{\Delta_{a,\varepsilon}} \nquad d\vec{y}\, d\vec{y}'\, {\big| \xi(\vec{y}) - \xi(\vec{y}') \big|^2 \over |\vec{y} - \vec{y}'|^4}\,.
    \end{align*}
    
On the other side, since
    \begin{equation*}
        0 \leqslant R^{\lambda}_{0}\big(\vec{0},\vec{y};\vec{0},\vec{y}'\big) \leqslant {2 \over (2\pi)^3\,|\vec{y} - \vec{y}'|^4}\,, \qquad \mbox{for all $(\vec{y},\vec{y}') \in (B_{a}^{c} \!\times\! B_{a}^{c}) \setminus \Delta_{a,\varepsilon}$}\,,
    \end{equation*}
we readily get
    \begin{align*}
        0 \leqslant \int_{(B_{a}^{c} \times B_{a}^{c}) \,\setminus\, \Delta_{a,\varepsilon}} \hspace{-0.3cm} d\vec{y}\, d\vec{y}'\, R^{\lambda}_{0}\big(\vec{0},\vec{y};\vec{0},\vec{y}'\big)\,\big| \xi(\vec{y}) - \xi(\vec{y}') \big|^2 
        \leqslant {1 \over 4\pi^3} \int_{(B_{a}^{c} \times B_{a}^{c}) \,\setminus\, \Delta_{a,\varepsilon}} \hspace{-0.3cm} d\vec{y}\, d\vec{y}'\, {\big| \xi(\vec{y}) - \xi(\vec{y}') \big|^2 \over |\vec{y} - \vec{y}'|^4}\,.
    \end{align*}

The above arguments imply
    \begin{align*}
    {\lambda \varepsilon^2\, K_2\big(\sqrt{\lambda}\, \varepsilon\big) \over 16\pi^3} \int_{\Delta_{a,\varepsilon}} \nquad d\vec{y}\, d\vec{y}'\, {\big| \xi(\vec{y}) - \xi(\vec{y}') \big|^2 \over |\vec{y} - \vec{y}'|^4}
    \leqslant \Phi_{2}^{\lambda}[\xi] 
    \leqslant {1 \over 8\pi^3} \int_{B_{a}^{c} \times B_{a}^{c}} \hspace{-0.3cm} d\vec{y}\, d\vec{y}'\, {\big| \xi(\vec{y}) - \xi(\vec{y}') \big|^2 \over |\vec{y} - \vec{y}'|^4}\,.
    \end{align*}
This in turn accounts for~\eqref{eq:estPhi2}, recalling the definition of the regional Gagliardo–Slobodecki\u\i~semi-norm for the Sobolev space $H^{1/2}\big(B_{a}^{c}\big)$, see~\eqref{eq:H12def}, and noting that
    \begin{align*}
        & \int_{\Delta_{a,\varepsilon}} \nquad d\vec{y}\, d\vec{y}'\, {\big| \xi(\vec{y}) - \xi(\vec{y}') \big|^2 \over |\vec{y} - \vec{y}'|^4} 
        = \int_{B_{a}^{c} \times B_{a}^{c}} \nquad d\vec{y}\, d\vec{y}'\, {\big| \xi(\vec{y}) - \xi(\vec{y}') \big|^2 \over |\vec{y} - \vec{y}'|^4} 
            - \int_{(B_{a}^{c} \times B_{a}^{c}) \,\setminus\, \Delta_{a,\varepsilon}} \nquad d\vec{y}\, d\vec{y}'\, {\big| \xi(\vec{y}) - \xi(\vec{y}') \big|^2 \over |\vec{y} - \vec{y}'|^4} \\
        & \geqslant \|\xi\|_{H^{1/2}}^{2} - \|\xi\|_{L^2}^{2}
            - 4\int_{\{\, |\vec{y} - \vec{y}'| \,>\, \varepsilon \,\}} \nquad d\vec{y}\, d\vec{y}'\, {\big| \xi(\vec{y})\big|^2 \over |\vec{y} - \vec{y}'|^4} 
        = \|\xi\|_{H^{1/2}}^{2} - \|\xi\|_{L^2}^{2}
            - 16\pi\int_{B_{a}^{c}} \!\!\! d\vec{y}\;\big| \xi(\vec{y})\big|^2 \int_{\varepsilon}^{\infty}\!\!dr\; {1 \over r^2} \\
        & = \|\xi\|_{H^{1/2}}^{2} - \left(1 + {16 \pi \over \varepsilon} \right) \|\xi\|_{L^2}^{2}\,.
    \end{align*}
Also in this case, the statement about $A_2(\lambda)$ in~\eqref{eq:asyA1A2} follows from the exponential vanishing of the function $t \mapsto t^2\, K_2(t)$ in the limit $t \to +\infty$.

\textsl{3) Estimates for $\Phi_{3}^{\lambda}[\xi]$.} Let $(r,\vec{\omega}) \in \mathbb{R}_{+} \times \mathbb{S}^2$ be a set of polar coordinates in $\mathbb{R}^3$ and let $Y_{\ell,m} \equiv Y_{\ell,m}(\vec{\omega}) \in L^2(\mathbb{S}^2)$ ($\ell \in \{0,1,2,\dots\}$, $m \in \mathbb{Z}$ with $|m| \leqslant \ell$) be the associated family of normalized spherical harmonics (see, \emph{e.g.}, \cite{avery}). In the sequel, for any $\xi \in L^2(B_{a}^{c})$ we consider the representation
    \begin{align*}
        \xi(r,\vec{\omega}) \equiv \xi\big(\vec{y}(r,\vec{\omega})\big) = \sum_{\ell \,=\, 0}^{\infty}\; \sum_{|m|\,\leqslant\, \ell}\, Y_{\ell,m}(\vec{\omega})\,\xi_{\ell,m}(r)\,,
    \end{align*}
where the coefficients $\xi_{\ell,m}$ are determined by
    \begin{align*}
        \xi_{\ell,m}(r) := \int_{\mathbb{S}^2}\!\! d\vec{\omega}\;\conjugate*{Y_{\ell,m}(\vec{\omega})}\;\xi(r,\vec{\omega})\,.
    \end{align*}
It is worth noting that
    \begin{align*}
        \|\xi\|_{L^2}^2 = \int_{B_{a}^{c}}\!\! d\vec{y}\;\abs{\xi(\vec{y})}^2 
        = \sum_{\ell \,=\, 0}^{\infty}\, \sum_{|m| \,\leqslant\, \ell}\, \int_{a}^{\infty}\!\! dr\;r^2\,\big|\xi_{\ell,m}(r)\big|^2\,.
    \end{align*}
To proceed, we refer to the explicit expression~\eqref{eq:Phi3def} for $\Phi_{3}^{\lambda}[\xi]$ and consider the series expansion for $g^{\lambda}\big(\vec{X},\vec{X}'\big)$ reported in~\eqref{eq:gzser} of Appendix \ref{app:g0}. Let us mention that the Gegenbauer polynomials $C^{2}_{\ell}$ appearing therein enjoy the following identity, for any pair of angular coordinates $\vec{\omega},\vec{\omega}' \in \mathbb{S}^2$ and any $\ell \in \{0,1,2,\ldots\}$ \cite[Eq. (66)]{avery}:
    \begin{equation}\label{eq:sumGegembauer}
        C_{\ell}^{2}(\vec{\omega} \cdot \vec{\omega}') = {4\pi \over 2 \ell + 1} \, \sum_{|m|\,\leqslant\, \ell} \conjugate*{Y_{\ell,m}(\vec{\omega}')}\; Y_{\ell,m}(\vec{\omega})\,.
    \end{equation}
On account of the above considerations and of the orthogonality of the spherical harmonics $Y_{\ell,m}$ \cite[Eq. (56)]{avery},
by direct calculations we deduce
    \begin{align*}
        \Phi_{3}^{\lambda}[\xi] 
            & = {2 \over \pi^2} \int_{a}^{\infty}\!\! dr  \int_{a}^{\infty}\!\!dr' \, 
         \sum_{\ell \,=\, 0}^{\infty} {\ell + 2 \over 2 \ell + 1} \; {I_{\ell + 2}\big(\sqrt{\lambda}\,a\big) \over K_{\ell + 2}\big(\sqrt{\lambda}\,a\big)}\,K_{\ell + 2}\big(\sqrt{\lambda}\,r\big)\,K_{\ell + 2}\big(\sqrt{\lambda}\, r'\big) \, \sum_{| m |\,\leqslant\, \ell}\,\xi^{*}_{\ell,m}(r)\,  \xi_{\ell,m}(r') \\
            & = {2 \over \pi^2} \sum_{\ell \,=\, 0}^{\infty} \, \sum_{|m| \,\leqslant\, \ell}\, {\ell + 2 \over 2 \ell + 1} \; {I_{\ell + 2}\big(\sqrt{\lambda}\,a\big) \over K_{\ell + 2}\big(\sqrt{\lambda}\, a\big)} \left| \,\int_{a}^{\infty}\!\!dr\;K_{\ell + 2}\big(\sqrt{\lambda}\,r\big)\,\xi_{\ell,m}(r) \,\right|^2 .
    \end{align*}
From here and from the positivity of the Bessel functions $K_{\nu}$ \cite[\S 10.37]{NIST}, it readily follows that $\Phi_{3}^{\lambda}[\xi] \geqslant 0$. On the other hand, by Cauchy-Schwarz inequality and the already mentioned monotonicity of the map $t \in \mathbb{R}_{+} \mapsto t^{\nu} K_{\nu}(t)$ for any fixed $\nu > 0$, we get
    \begin{align*}
        \left| \int_{a}^{\infty}\!\!dr\;K_{\ell + 2}\big(\sqrt{\lambda}\,r\big)\,\xi_{\ell,m}(r) \right|^2 \!
        & \leqslant \left({1 \over \lambda^{\ell + 2}}\, \sup_{r > a} \Big( \big(\sqrt{\lambda}\, r\big)^{\ell+2}\, K_{\ell + 2}\big(\sqrt{\lambda}\,r\big) \Big)^2\int_{a}^{\infty}\!\! dr\;{1 \over r^{2\ell+6}} \right)\! \left(\int_{a}^{\infty}\!\!\!dr\;r^2\, |\xi_{\ell,m}(r)|^2\right) \\
        & \leqslant {K^2_{\ell + 2}\big(\sqrt{\lambda}\,a\big) \over a\,(2\ell+5)} \int_{a}^{\infty}\!\!\!dr\;r^2\, |\xi_{\ell,m}(r)|^2.
    \end{align*}
Taking also into account that, for any fixed $\nu > 0$, the map $t \in \mathbb{R}_{+} \mapsto I_{\nu}(t)\,K_{\nu}(t)$ is continuous, positive and strictly decreasing with $\lim_{t \to 0^{+}} I_{\nu}(t)\,K_{\nu}(t) = {1 \over 2\nu}$ (see~\cite{baricz} and \cite[Eq. 10.29.2 and \S 10.37, together with \S 10.30(i)]{NIST}), we finally obtain
    \begin{align*}
        \Phi_{3}^{\lambda}[\xi] 
            & \leqslant {2 \over \pi^2 a} \sum_{\ell \,= \,0}^{\infty} \, \sum_{|m| \,\leqslant\, \ell} {\ell + 2 \over (2 \ell + 1)(2\ell+5)} \; I_{\ell + 2}\big(\sqrt{\lambda}\,a\big)\, K_{\ell + 2}\big(\sqrt{\lambda}\,a\big) \int_{a}^{\infty}\!\!dr\;r^2\, |\xi_{\ell,m}(r)|^2 \\
            & \leqslant {1 \over \pi^2 a} \sum_{\ell \,=\, 0}^{\infty} \, \sum_{|m| \,\leqslant\, \ell} {1 \over (2 \ell + 1)(2\ell+5)} \int_{a}^{\infty}\!\!dr\;r^2\, |\xi_{\ell,m}(r)|^2 \\
            & \leqslant {1 \over 5\pi^2 a} \sum_{\ell \,=\, 0}^{\infty} \, \sum_{|m |\,\leqslant\, \ell} \int_{a}^{\infty}\!\! dr\;r^2\, |\xi_{\ell,m}(r)|^2
            = {1 \over 5\pi^2 a}\;\|\xi\|_{L^2}^2\,,
    \end{align*}
which proves the upper bound in Eq.~\eqref{eq:estPhi3}.

\textsl{4) Estimates for $\Phi_{4}^{\lambda}[\xi]$.} Let us refer to the definition~\eqref{eq:Phi4def} and recall that $R^{\lambda}_{D}(\vec{X},\vec{X}')$ is the integral kernel associated to the resolvent operator of the Dirichlet Laplacian $H_D$ on $\Omega_a \subset \mathbb{R}^6$. The corresponding heat kernel $K_D(t;\vec{X},\vec{X}')$ is known to fulfill the following Gaussian upper bound, for all $t> 0$, $\vec{X},\vec{X}' \in \Omega_a$ and some suitable $c_1,c_2 > 0$ (see, \emph{e.g.}, \cite[p. 89, Corollary 3.2.8]{davies} and~\cite{GriSC,zhang}):
    \begin{equation*}
        0 \leqslant K_D(t;\vec{X},\vec{X}') \leqslant {c_1 \over t^3}\,e^{-{|\vec{X} - \vec{X}'|^2 \over 4 c_2 t}} .
    \end{equation*}
Taking this into account and using a well-known integral representation for the Bessel function $K_{2}$ \cite[Eq. 10.32.10]{NIST}, by classical arguments \cite[p. 101, Lemma 3.4.3]{davies} we deduce
    \begin{equation}\label{RDest}
        \big|R^{\lambda}_{D}(\vec{X},\vec{X}')\big| \leqslant \int_0^{\infty}\!\! dt\;e^{-\lambda t}\,\big|K_D(t;\vec{X},\vec{X}') \big| \leqslant c_1\! \int_0^{\infty} \! {dt \over t^3}\,e^{-\lambda t -{|\vec{X} - \vec{X}'|^2 \over 4 c_2 t}}\!
        \leqslant {8\, c_1 c_2\, \lambda  \over |\vec{X} - \vec{X}'|^2}\, K_{2}\big(\sqrt{\lambda/c_2}\, |\vec{X} - \vec{X}'|\big)\,.
    \end{equation}
Then, using the elementary inequality $\vec{y} \cdot \vec{y}' \geqslant - \big(\abs{\vec{y}}^2 + \abs{\vec{y}'}^2\big)/2$ and recalling that the map $t \in \mathbb{R}_{+} \mapsto t^2 \,K_2(t)$ is decreasing with $\lim_{t \to 0^{+}}\!t^2\, K_2(t) = 2$, we infer
    \begin{align*}
        & \left| R^{\lambda}_{D}\!\left(\vec{0},\vec{y}; {\sqrt{3} \over 2}\,\vec{y}'\!, -\,{1 \over 2}\,\vec{y}'\right) \right|
        \leqslant {8\, c_1\, c_2\, \lambda (\abs{\vec{y}}^2 + \vec{y} \cdot \vec{y}' + \abs{\vec{y}'}^2)  \over (\abs{\vec{y}}^2 + \vec{y} \cdot \vec{y}' + \abs{\vec{y}'}^2)^2}\, K_{2}\!\left(\sqrt{\lambda/c_2 (\abs{\vec{y}}^2 + \vec{y} \cdot \vec{y}' + \abs{\vec{y}'}^2})\right) \\
        & \hspace{8.5cm} \leqslant {32\, c_1\, c_2\, \lambda a^2 \, K_{2}\big(\sqrt{\lambda/c_2}\, a\big)  \over \big(\abs{\vec{y}}^2 + \abs{\vec{y}'}^2\big)^2} 
        \leqslant {64\, c_1\, c_2^2  \over \big(\abs{\vec{y}}^2 + \abs{\vec{y}'}^2\big)^2}\,.
    \end{align*}
On account of the above arguments, by Cauchy-Schwarz inequality and basic symmetry considerations, from \eqref{eq:Phi4def} we infer
    \begin{align*}
        & \big|\Phi_{4}^{\lambda}[\xi]\big| 
        = \left|\, 2 \int_{B_{a}^{c} \times B_{a}^{c}} \hspace{-0.7cm} d\vec{y}\, d\vec{y}'\, R^{\lambda}_{D}\Big(\vec{0},\vec{y}; {\sqrt{3} \over 2}\,\vec{y}'\!, -\,{1 \over 2}\,\vec{y}'\Big)\, \xi(\vec{y})\,\xi(\vec{y}') \right| \\
        & \leqslant 64\, c_1\,c_2^{2} \int_{B_{a}^{c} \times B_{a}^{c}} \hspace{-0.7cm} d\vec{y}\, d\vec{y}'\, {1 \over \big(\abs{\vec{y}}^2 + \abs{\vec{y}'}^2\big)^2}\; \abs{\xi(\vec{y})}^2 
        \leqslant 64\, c_1\,c_2^{2} \left(\,\sup_{r > a}\, \int_{B_{a}^{c}} \!\! d\vec{y}'\, {1 \over (r^2 + \abs{\vec{y}'}^2)^2} \right) \|\xi\|_{L^2}^2 \,.
    \end{align*}
This in turn implies the thesis~\eqref{eq:estPhi3}, in view of the fact that
    \begin{equation*}
        \sup_{r > a}\, \int_{B_{a}^{c}}\!\!\! d\vec{y}'\, {1 \over \big(r^2 + \abs{\vec{y}'}^2\big)^2}
        = \int_{B_{a}^{c}} \!\!\! d\vec{y}'\, {1 \over \big(a^2 + \abs{\vec{y}'}^2\big)^2}
        = 4\pi \int_{a}^{\infty}\!\!\! d\rho\;{\rho^2 \over (a^2 + \rho^2)^2} = {\pi(\pi+2) \over 2a} < \infty\,.
    \end{equation*}
\end{proof}

Building on the estimates derived in Lemma \ref{lemma:estPhi}, we can now proceed to prove Theorem \ref{thm:Qa}.

\begin{proof}[Proof of Theorem \ref{thm:Qa}]
Let us first remark that, recalling the definition~\eqref{phila} of $\Phi_{\alpha}^{\lambda}[\xi]$, the claims~\eqref{eq:upPhi} and~\eqref{eq:upPhi} are direct consequences of the upper and lower bounds reported in Lemma~\ref{lemma:estPhi}.

To proceed, we refer to the definition~\eqref{eq:Qa} of $Q_{D,\alpha}[\psi]$. From~\eqref{eq:Gz}, \eqref{eq:xisym} and~\eqref{eq:GzHs} we readily infer $\|G^{\lambda} \xi\|_{L^2} \leqslant c\,\|\xi\|_{L^2}$, where $c > 0$ is a suitable constant. We further notice that $Q_{D}[\varphi_{\lambda}] + \|\varphi_{\lambda}\|_{L^2}^{2}$ is a norm on $H^1(\Omega_a)$ equivalent to the standard one. Then, using the upper bound~\eqref{eq:upPhi} for $\big|\Phi_{\alpha}^{\lambda}[\xi]\big|$, by elementary estimates we deduce that for any $\lambda > 0$ there exists a positive constant $C_1 > 0$ such that
\begin{align*}
\big| Q_{D,\alpha}[\psi] \big| 
& = \left| Q_{D}\big[\varphi^{\lambda}\big] - 2 \lambda\, \Re \big\langle \varphi^{\lambda} \big| G^{\lambda} \xi \big\rangle - \lambda \,\|G^{\lambda}\xi \|_{L^2}^2 + 3\,\Phi_{\alpha}^{\lambda}[\xi] \right| \\
& \leqslant Q_{D}\big[\varphi^{\lambda}\big] + 2 \lambda\, \big\| \varphi^{\lambda} \big\|_{L^2}\, \big\| G^{\lambda} \xi \big\|_{L^2} + \lambda \,\|G^{\lambda}\xi \|_{L^2}^2 + 3\,\big|\Phi_{\alpha}^{\lambda}[\xi]\big| \\
& \leqslant Q_{D}\big[\varphi^{\lambda}\big] + \lambda\, \big\| \varphi^{\lambda} \big\|_{L^2}^2 + 2c\,\lambda \,\|\xi \|_{L^2}^2 + 3\,B \left(\sqrt{\lambda}\, \|\xi\|_{L^2}^2 + \|\xi\|^2_{L^2_{w}} + \|\xi\|_{H^{1/2}}^2 \right) \\
& \leqslant C_1 \Big( \big\|\varphi^{\lambda}\big\|_{H^1}^2 + \|\xi\|_{H^{1/2}}^2 + \|\xi\|^2_{L^2_{w}} \Big)\,, 
\end{align*}
This suffices to prove the well-posedness of $Q_{D,\alpha}[\psi]$ on the domain~\eqref{eq:DomQa} for any $\lambda > 0$.

Let us now show that the form does not depend on $\lambda$. To this purpose, we fix $\lambda_1 > \lambda_2 > 0$ and consider the alternative representations $\psi = \varphi^{\lambda_1} + G^{\lambda_1} \xi$ and $\psi = \varphi^{\lambda_2} + G^{\lambda_2} \xi$. From the first resolvent identity and from Eq.~\eqref{eq:GzHs}, we deduce $G^{\lambda_1} \xi - G^{\lambda_2} \xi = (\lambda_1 - \lambda_2)\, R_D^{\lambda_1} G^{\lambda_2} \xi$, which entails $G^{\lambda_1} \xi - G^{\lambda_2} \xi \in \mbox{dom}\big(H_D\big)$ for any $\xi \in H^{1/2}(B_{a}^{c}) \cap L^2_{w}(B_{a}^{c})$ (see \eqref{dophila}). In particular, we can write $\varphi^{\lambda_2} = \varphi^{\lambda_1} + G^{\lambda_1} \xi - G^{\lambda_2} \xi \in H^1_0(\Omega_a)$. Taking this into account and using the definition~\eqref{eq:Qa} of $Q_{D,\alpha}[\psi]$, by a few integration by parts we get
    \begin{equation*}
        Q_{D,\alpha}\big[\varphi^{\lambda_1} + G^{\lambda_1} \xi \big] 
        = Q_{D,\alpha}\big[\varphi^{\lambda_2} + G^{\lambda_2} \xi \big]\,,
    \end{equation*} 
which proves that the form $Q_{D,\alpha}$ is independent of $\lambda > 0$.

Finally, from the lower bound~\eqref{eq:loPhi} for $\Phi_{\alpha}^{\lambda}[\xi]$ and from the asymptotic relations reported in~\eqref{eq:asyA1A2}, for any $\lambda > \lambda_0$ we infer
    \begin{equation*}
        Q_{D,\alpha}[\psi] \geqslant Q_{D}\big[\varphi^{\lambda}\big] + \lambda \,\|\varphi^{\lambda}\|_{L^2}^2 + 3\, A_{0} \sqrt{\lambda}\, \|\xi\|_{L^2}^2 + 3\,A(\lambda) \left( \|\xi\|^2_{L^2_{w}} + \|\xi\|^2_{H^{1/2}} \right) - \lambda \,\|\psi\|_{L^2}^2 \,.
    \end{equation*}
This shows that, for any fixed $\lambda > 0$ large enough there exist two positive constants $\gamma_2,C_2 > 0$ such that
    \begin{equation}\label{eq:Qalower}
        Q_{D,\alpha}[\psi] + \gamma_{2} \,\|\psi\|_{L^2}^2 \geqslant C_2 \Big( Q_{D}\big[\varphi^{\lambda}\big] + \lambda\,\big\|\varphi^{\lambda}\big\|_{L^2}^2 + \sqrt{\lambda}\,\|\xi\|_{L^2}^{2} + \|\xi\|^2_{H^{1/2}} + \|\xi\|^2_{L^2_{w}} \Big)\,,
    \end{equation}
which proves that the form $Q_{D,\alpha}$ is coercive. From here, closedness follows as well by standard arguments~\cite{teta}. To say more, since the right-hand side of~\eqref{eq:Qalower} is clearly positive, we readily get that $Q_{D,\alpha}$ is lower-bounded.
\end{proof}

\section{The Hamiltonian}

\begin{proof}[Proof of Theorem \ref{thm:thop}]
Let us first introduce the sesquilinear form defined by the polarization identity, starting from Eq.~\eqref{eq:Qa}.
With respect to the decompositions $\psi_1 = \varphi_1^{\lambda} + G^{\lambda} \xi_1$ and $\psi_2 = \varphi_2^{\lambda} + G^{\lambda} \xi_2\,$, this is given by
\begin{align*}
Q_{D,\alpha}\big[\psi_1,\psi_2\big] & = {1 \over 4} \Big( Q_{D,\alpha}\big[\psi_1+\psi_2\big] - Q_{D,\alpha}\big[\psi_1-\psi_2\big] - i\, Q_{D,\alpha}\big[\psi_1 + i \psi_2\big] + i\, Q_{D,\alpha}\big[\psi_1 - i \psi_2\big] \Big) \\
& = Q_{D}\big[\varphi_1^{\lambda},\varphi_2^{\lambda}\big] + \lambda^2 \,\big\langle \varphi_1^{\lambda} \big| \varphi_2^{\lambda} \big\rangle - \lambda^2 \,\big\langle \psi_1 \big| \psi_2 \big\rangle + 3 \,\Phi_{\alpha}^{\lambda}\big[\xi_1,\xi_2\big]\,,
\end{align*}
where we have set
    \begin{gather*}
        Q_{D}\big[\varphi_1^{\lambda},\varphi_2^{\lambda}\big] := \int_{\Omega_a}\!\!\! d\vec{x}\,d\vec{y}\, \Big( \conjugate*{\nabla_{\vec{x}} \varphi_1^{\lambda}(\vec{x},\vec{y})} \cdot \nabla_{\vec{x}} \varphi_2^{\lambda}(\vec{x},\vec{y}) +  \conjugate*{\nabla_{\vec{y}} \varphi_1^{\lambda}(\vec{x},\vec{y})} \cdot \nabla_{\vec{y}} \varphi_2^{\lambda}(\vec{x},\vec{y}) \Big)\,, \\
        \Phi_{\alpha}^{\lambda}\big[\xi_1,\xi_2\big] := \left(\alpha + {\sqrt{\lambda} \over 4\pi}\right) \langle \xi_1 | \xi_2 \rangle + \Phi_{1}^{\lambda}\big[\xi_1,\xi_2\big] + \Phi_{2}^{\lambda}\big[\xi_1,\xi_2\big] + \Phi_{3}^{\lambda}\big[\xi_1,\xi_2\big] + \Phi_{4}^{\lambda}\big[\xi_1,\xi_2\big] \,,
    \end{gather*}   
and
    \begin{gather*} 
        \Phi_{1}^{\lambda}\big[\xi_1,\xi_2\big] := \int_{B_{a}^{c}}\!\! d\vec{y} \left(\int_{B_{a}}\!\!\! d\vec{y}'\, R^{\lambda}_{0}\big(\vec{0},\vec{y};\vec{0},\vec{y}'\big) \right) \conjugate*{\xi_1(\vec{y})\mspace{-1.5mu}}\, \xi_2(\vec{y})\,, \\
        \Phi_{2}^{\lambda}\big[\xi_1,\xi_2\big] := {1 \over 2}\int_{B_{a}^{c} \times B_{a}^{c}} \hspace{-0.4cm} d\vec{y}\, d\vec{y}'\, R^{\lambda}_{0}\big(\vec{0},\vec{y};\vec{0},\vec{y}'\big)\, \conjugate*{\big(\xi_1(\vec{y}) - \xi_1(\vec{y}')\big)}\, \big(\xi_2(\vec{y}) - \xi_2(\vec{y}')\big) \,,\\
        \Phi_{3}^{\lambda}\big[\xi_1,\xi_2\big] := - \int_{B_{a}^{c} \times B_{a}^{c}} \hspace{-0.4cm}d\vec{y}\, d\vec{y}'\, g^{\lambda}\big(\vec{0},\vec{y};\vec{0},\vec{y}'\big)\, \conjugate*{\xi_1(\vec{y})\mspace{-1.5mu}}\,\xi_2(\vec{y}')\,, \\
        \Phi_{4}^{\lambda}\big[\xi_1,\xi_2\big] := -\,2 \int_{B_{a}^{c} \times B_{a}^{c}} \hspace{-0.4cm} d\vec{y}\, d\vec{y}'\, R^{\lambda}_{D}\!\left(\vec{0},\vec{y}; \tfrac{\sqrt{3}}{2}\,\vec{y}'\!, -\,\tfrac{1}{2}\,\vec{y}'\mspace{-1.5mu}\right) \conjugate*{\xi_1(\vec{y})\mspace{-1.5mu}}\,\xi_2(\vec{y}')\,.
    \end{gather*}

Since we have already proved that the form $Q_{D,\alpha}$ is closed and lower bounded, there exists a unique associated self-adjoint and lower bounded operator $H_{D,\alpha}$. Moreover, if $\psi_2 \in \mbox{dom}\big(H_{D,\alpha}\big)$ then there exists an element $w := H_{D,\alpha} \psi_2 \in L^2(\Omega_a)$ such that $Q[\psi_1,\psi_2] = \langle \psi_1| w\rangle$ for any $\psi_1 \in \mbox{dom}\big(Q_{D,\alpha}\big)$.
    
For $\xi_1 = 0$, \emph{i.e.} $\psi_1 = \varphi_1^{\lambda} \in \mbox{dom}\big(Q_{D}\big)$, we have
    \begin{align*}
        Q_{D,\alpha}\big[\varphi_1^{\lambda},\psi_2\big] 
        = Q_{D}\big[\varphi_1^{\lambda},\varphi_2^{\lambda}\big] + \lambda \,\big\langle \varphi_1^{\lambda} \big| \varphi_2^{\lambda} \big\rangle - \lambda \,\big\langle \varphi_1^{\lambda} \big| \psi_2 \big\rangle
        = \big\langle \varphi_1^{\lambda} \big| w \big\rangle\,.
    \end{align*}
Thus, $\varphi_2^{\lambda} \in \mbox{dom}\big(H_{D}\big)$ and $w = H_{D} \varphi_2^{\lambda} - \lambda\, G^{\lambda} \xi_2$, which entails the identity
    \begin{align*}
        \big(H_{D,\alpha} + \lambda\big) \psi_2 = \big(H_{D} + \lambda\big) \varphi_2^{\lambda}\,.
    \end{align*}
    
For $\xi_1 \neq 0$, demanding that $Q_{D,\alpha}[\psi_1,\psi_2] = \langle \psi_1 | w \rangle$ with $w = H_{D} \varphi_2^{\lambda} - \lambda\, G^{\lambda} \xi_2$ as before, we obtain
    \begin{equation*}
        \big\langle G^{\lambda} \xi_1 \big| (H_{D} + \lambda)\varphi_2^{\lambda} \big\rangle
        = 3\, \Phi_{\alpha}^{\lambda}\big[\xi_1,\xi_2\big]\,.
    \end{equation*}

\n 
On account of the bosonic exchange symmetries (see~\eqref{eq:GzijEq} and~\eqref{eq:xisym}), from the basic identity~\eqref{eq:GzHs} we readily deduce the following, for all $\xi_1 \in \mbox{dom}\big(\Phi_{\alpha}^{\lambda}\big)$ and $\varphi_2^{\lambda} \in \mbox{dom}\big(H_{D}\big)$:
    \begin{equation}
        \big\langle G^{\lambda} \xi_1 \big| (H_{D} + \lambda)\,\varphi \big\rangle_{L^2(\Omega_a)}
        = 3\, \langle \xi_1 \,|\, \tau\,\varphi \rangle_{L^2(B_{a}^{c})}
        = 3 \int_{B_{a}^{c}} \!\!\! d\vec{y}\;\conjugate*{\xi_{1}(\vec{y})}\, \big(\tau \varphi_2^{\lambda}\big)(\vec{y})\,,
    \end{equation}
where $\tau$ is the Sobolev trace operator introduced in~\eqref{eq:deftau}.
Let us also remark that, since $R^{\lambda}_{0}\big(\vec{0},\vec{y}';\vec{0},\vec{y}\big)\! = \!R^{\lambda}_{0}\big(\vec{0},\vec{y};\vec{0},\vec{y}'\big)$, the definition \eqref{eq:Phi2defr} can be conveniently rephrased as
\begin{align*}
    & \Phi_{2}^{\lambda}[\xi_1,\xi_2] = \int_{B_{a}^{c} \times B_{a}^{c}} \hspace{-0.5cm} d\vec{y}\, d\vec{y}'\, R^{\lambda}_{0}\big(\vec{0},\vec{y};\vec{0},\vec{y}'\big)\, \conjugate*{\xi_1(\vec{y})}\, \big(\xi_2(\vec{y}) - \xi_2(\vec{y}')\big)\,.
\end{align*}
Summing up, we obtain
    \begin{align*}
        &\int_{B_{a}^{c}} \!\!\! d\vec{y}\;\conjugate*{\xi_{1}(\vec{y})}\, (\tau \varphi_2^{\lambda})(\vec{y})
            = \left(\alpha + \frac{\sqrt{\lambda}}{4\pi}\right)\! \int_{B_{a}^{c}}\!\!\! d\vec{y}\;\conjugate*{\xi_{1}(\vec{y})}\, \xi_2(\vec{y})
            + \int_{B_{a}^{c}}\!\!\! d\vec{y} \left(\int_{B_{a}}\!\!\! d\vec{y}'\, R^{\lambda}_{0}\big(\vec{0},\vec{y};\vec{0},\vec{y}'\big)\mspace{-4.5mu} \right) \conjugate*{\xi_1(\vec{y})\mspace{-1.5mu}}\, \xi_2(\vec{y})
        \\
        & \quad + \int_{B_{a}^{c} \times B_{a}^{c}} \hspace{-0.5cm} d\vec{y}\, d\vec{y}'\, R^{\lambda}_{0}\big(\vec{0},\vec{y};\vec{0},\vec{y}'\big)\, \conjugate*{\xi_1(\vec{y})}\, \big(\xi_2(\vec{y}) - \xi_2(\vec{y}')\big) 
            - \int_{B_{a}^{c} \times B_{a}^{c}} \hspace{-0.4cm} d\vec{y}\, d\vec{y}'\, g^{\lambda}\big(\vec{0},\vec{y};\vec{0},\vec{y}'\big)\, \conjugate*{\xi_1(\vec{y})}\,\xi_2(\vec{y}')
        \\
        & \quad -2 \int_{B_{a}^{c} \times B_{a}^{c}} \hspace{-0.4cm} d\vec{y}\, d\vec{y}'\, R^{\lambda}_{D}\!\left(\vec{0},\vec{y}; \tfrac{\sqrt{3}}{2}\,\vec{y}'\!, -\,\tfrac{1}{2}\,\vec{y}'\right) \conjugate*{\xi_1(\vec{y})}\,\xi_2(\vec{y}')\,.
    \end{align*}
Then the thesis follows by the arbitrariness of $\xi_1$, recalling the definition of $\Gamma_{\alpha}^{\lambda}$ (see the comments reported after Theorem \ref{thm:Qa}).
Finally, the representation~\eqref{eq:Ral} for the resolvent operator $R_\alpha^{\lambda}$ follows by general resolvent identities, noting that $\Gamma_{\alpha}^{\lambda}$ has a bounded inverse for $\lambda > 0$ large enough (see, \emph{e.g.}, \cite{CFP,posilicano}).
\end{proof}

In view of the arguments reported in the above proof, it appears that the action of the operator $\Gamma_{\alpha}^{\lambda}$ on any $\xi \in \mbox{dom}\big(\Gamma_{\alpha}^{\lambda}\big)$ is given by
\begin{align}
        \big(\Gamma_{\alpha}^{\lambda} \xi\big)(\vec{y})
        & = \left(\alpha + \frac{\sqrt{\lambda}}{4\pi}\right)\! \xi(\vec{y})
            + \xi(\vec{y}) \int_{B_{a}}\!\!\!\! d\vec{y}'\, R^{\lambda}_{0}\big(\vec{0},\vec{y};\vec{0},\vec{y}'\big)
            + \int_{B_{a}^{c}}\!\!\!\! d\vec{y}'\, R^{\lambda}_{0}\big(\vec{0},\vec{y};\vec{0},\vec{y}'\big)\, \big(\xi(\vec{y}) - \xi(\vec{y}')\big) 
            \nonumber \\
        & \qquad 
            - \int_{B_{a}^{c}}\!\!\! d\vec{y}'\, g^{\lambda}\big(\vec{0},\vec{y};\vec{0},\vec{y}'\big)\, \xi(\vec{y}')
            - 2\int_{B_{a}^{c}}\!\!\! d\vec{y}' \, R^{\lambda}_{D}\!\left(\vec{0},\vec{y}; \tfrac{\sqrt{3}}{2}\,\vec{y}'\!, -\,\tfrac{1}{2}\,\vec{y}'\right) \xi(\vec{y}')\,. \label{Gammaexp}
    \end{align}

\begin{lemma}\label{GammaDomain}
For any $\lambda>0$, there holds $\mbox{\emph{dom}}\big(\Gamma_{\alpha}^{\lambda}\big) = H^1_0(B_{a}^{c})$.
\end{lemma}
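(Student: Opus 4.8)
The plan is to identify $\mbox{dom}\big(\Gamma_\alpha^\lambda\big)$ through the explicit action recorded in~\eqref{Gammaexp}, using that $\xi\in\mbox{dom}\big(\Gamma_\alpha^\lambda\big)$ precisely when $\xi$ lies in the form domain~\eqref{dophila} \emph{and} the functional $\eta\mapsto\Phi_\alpha^\lambda[\eta,\xi]$ extends to a bounded functional on $L^2(B_a^c)$, in which case $\Gamma_\alpha^\lambda\xi$ is its Riesz representative, given by~\eqref{Gammaexp}. First I would strip away the inessential terms: by Lemma~\ref{lemma:estPhi} the forms $\Phi_3^\lambda,\Phi_4^\lambda$ are bounded on $L^2(B_a^c)$, so the associated operators are everywhere defined and bounded and do not affect the operator domain, and the multiple of the identity is likewise harmless. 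Hence $\Gamma_\alpha^\lambda$ differs by a bounded self-adjoint operator from the self-adjoint operator generated by $\Phi_1^\lambda+\Phi_2^\lambda$, and the claim reduces to computing the domain of the latter. Writing $\Gamma_1$ for the multiplication by the radial potential $V_1(\vec{y}):=\int_{B_a}R_0^\lambda(\vec{0},\vec{y};\vec{0},\vec{y}')\,d\vec{y}'$ and $\Gamma_2$ for the positive operator of $\Phi_2^\lambda$, the estimates of Lemma~\ref{lemma:estPhi} give $V_1\simeq w$ with $w\simeq(\,|\vec{y}|-a)^{-1}$ near $\partial B_a$, while $\Phi_2^\lambda$ is comparable to the $H^{1/2}$ Gagliardo seminorm~\eqref{eq:H12def}, so $\Gamma_2$ behaves like a regional operator of order one. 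It is crucial that $\Gamma_1$ and $\Gamma_2$ stem from the \emph{same} kernel $R_0^\lambda\simeq|\vec{y}-\vec{y}'|^{-4}$ and combine into the regularised singular integral $\int_{B_a^c}R_0^\lambda(\vec{0},\vec{y};\vec{0},\vec{y}')\big(\xi(\vec{y})-\xi(\vec{y}')\big)\,d\vec{y}'+\xi(\vec{y})\int_{B_a}R_0^\lambda(\vec{0},\vec{y};\vec{0},\vec{y}')\,d\vec{y}'$.

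For the inclusion $H^1_0(B_a^c)\subseteq\mbox{dom}\big(\Gamma_\alpha^\lambda\big)$ I would take $\xi\in H^1_0(B_a^c)$ and verify the three requirements. Membership in~\eqref{dophila} follows from $H^1\hookrightarrow H^{1/2}$ together with the Hardy inequality $\int_{B_a^c}|\xi|^2/(\,|\vec{y}|-a)^2\,d\vec{y}\lesssim\|\nabla\xi\|_{L^2}^2$, which controls $\|\xi\|_{L^2_w}$; the same Hardy inequality yields $V_1\xi\in L^2$, i.e. finiteness of the local term in~\eqref{Gammaexp}. For the nonlocal term I would invoke the mapping property $H^1\to L^2$ of the regional half-Laplacian, the vanishing trace of $\xi$ ruling out a boundary singularity and ensuring $\Gamma_2\xi\in L^2$. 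The Riesz identity $\Phi_\alpha^\lambda[\eta,\xi]=\langle\eta,\Gamma_\alpha^\lambda\xi\rangle$ is then obtained by Fubini, legitimate thanks to the preceding bounds, and a density argument over $C^\infty_c(B_a^c)$.

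The reverse inclusion $\mbox{dom}\big(\Gamma_\alpha^\lambda\big)\subseteq H^1_0(B_a^c)$ is the heart of the matter. Starting from $\xi$ in the form domain with $(\Gamma_1+\Gamma_2)\xi=g\in L^2$, I would first establish interior regularity: away from $\partial B_a$ the potential $V_1$ is smooth and $\Gamma_2$ is elliptic of order one, so a bootstrap from $g\in L^2$ and $\xi\in H^{1/2}$ upgrades $\xi$ to $H^1_{loc}$ in the interior. The delicate point is the boundary, where only the \emph{sum} $V_1\xi+\Gamma_2\xi$ is known to be square integrable although each summand may be singular. The strategy is to show that near $\partial B_a$ the leading boundary singularities of the local term $V_1\xi$ and the nonlocal term $\Gamma_2\xi$ \emph{add} rather than cancel—a consequence of the positivity of the common kernel $R_0^\lambda$—so that $g\in L^2$ forces the trace of $\xi$ on $\partial B_a$ to vanish together with $\int_{B_a^c}|\xi|^2/(\,|\vec{y}|-a)^2\,d\vec{y}<\infty$; by the trace characterisation of $H^1_0$ this gives $\xi\in H^1_0(B_a^c)$.

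I expect this boundary decoupling to be the main obstacle. The order of $\Gamma_2$ is exactly the borderline value $2s=1$, at which the regional half-Laplacian of a function with non-zero trace produces a boundary singularity of the same order $(\,|\vec{y}|-a)^{-1}$ as $V_1\xi$, so one must compute the two leading coefficients and check that they do not cancel. A clean way to carry this out is to exploit the $SO(3)$-invariance of $B_a^c$ and of all the kernels: the operator $\Gamma_\alpha^\lambda$ commutes with rotations and decomposes into a direct sum of one-dimensional operators on $L^2\big((a,\infty),r^2\,dr\big)$ indexed by the angular momentum $\ell$, each built from explicit Bessel functions, in the spirit of the spherical-harmonic computation already performed for $\Phi_3^\lambda$ in the proof of Lemma~\ref{lemma:estPhi}. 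Pinning down the domain of each radial sector by ODE and Bessel asymptotics and then resumming in $\ell$, with summability controlled uniformly as in Lemma~\ref{lemma:estPhi}, would complete the identification $\mbox{dom}\big(\Gamma_\alpha^\lambda\big)=H^1_0(B_a^c)$.
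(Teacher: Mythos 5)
Your forward inclusion $H^1_0(B_a^c)\subseteq \mbox{dom}\big(\Gamma_\alpha^\lambda\big)$ is correct, and it is essentially the paper's own argument: the paper's proof of Lemma~\ref{GammaDomain} consists precisely in showing that each of the five terms in \eqref{Gammaexp} maps $H^1_0(B_a^c)$ boundedly into $L^2(B_a^c)$ — the Hardy--Kufner characterisation of $H^1_0$ for the multiplication term; for the singular-integral term the zero-extension identity $\mathscr{L}\xi=(-\Delta)^{1/2}\tilde\xi-(\mathrm{bounded})\cdot\xi/(\abs{\vec{y}}-a)$ plus Young's inequality for an $L^1$ remainder; and $L^2$-boundedness of the $g^\lambda$ and $R_D^\lambda$ terms (your cleaner observation that the latter can be discarded at the level of forms, since $\Phi_3^\lambda,\Phi_4^\lambda$ are $L^2$-bounded, is a legitimate shortcut). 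Note, however, that this inclusion is \emph{all} the paper proves, and all that is used later (to conclude $\xi_n\in\mbox{dom}\big(\Gamma_0^{\mu_n}\big)$ from $\xi_n\in H^1_0$); the reverse inclusion is nowhere addressed in the paper.

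The genuine gap is your strategy for that reverse inclusion, and it is not fixable: the central claim that near $\partial B_a$ the singularities of $V_1\xi$ and $\Gamma_2\xi$ \emph{add} rather than cancel, ``as a consequence of the positivity of the common kernel'', is exactly backwards. Since the zero-extension $\tilde\xi$ vanishes on $B_a$, one has for $\vec{y}\in B_a^c$ the pointwise identity
\begin{equation*}
V_1(\vec{y})\,\xi(\vec{y})+\big(\Gamma_2\,\xi\big)(\vec{y})
=\int_{\mathbb{R}^3}\! d\vec{y}'\;R_0^\lambda\big(\vec{0},\vec{y};\vec{0},\vec{y}'\big)\,\big(\tilde\xi(\vec{y})-\tilde\xi(\vec{y}')\big)\,,
\end{equation*}
so the sum is the restriction to $B_a^c$ of the Fourier multiplier with symbol $\tfrac{1}{4\pi}\big(\sqrt{\abs{\vec{k}}^2+\lambda}-\sqrt{\lambda}\,\big)$ applied to $\tilde\xi$. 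Positivity of the kernel gives no reinforcement here, because $\Gamma_2\xi$ is built from the differences $\xi(\vec{y})-\xi(\vec{y}')$ and near the boundary it is large and \emph{negative} exactly where $V_1\xi$ is large and positive; for the borderline profile $\xi\sim(\abs{\vec{y}}-a)^{1/2}$ the two $\mathrm{dist}^{-1/2}$ singularities cancel identically at leading order. This is the classical one-dimensional identity $(-\Delta)^{1/2}x_+^{1/2}=0$ on $(0,\infty)$: the contribution of $\{y<0\}$ (the analogue of $V_1\xi$) equals $\tfrac{1}{\pi}x^{-1/2}$ and is cancelled exactly by the regional part (the analogue of $\Gamma_2\xi$). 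Consequently a function such as $\xi_\star(\vec{y})=(\abs{\vec{y}}-a)^{1/2}\phi(\abs{\vec{y}})$, with $\phi$ a smooth cutoff, lies in the form domain $H^{1/2}\cap L^2_w$, has vanishing trace, yet violates the Hardy condition $\int\abs{\xi}^2/(\abs{\vec{y}}-a)^2<\infty$ and is not in $H^1$, while $V_1\xi_\star+\Gamma_2\xi_\star$ stays square integrable near $\partial B_a$ (the flat computation transfers to the sphere up to lower-order curvature corrections; this is the well-known $\mathrm{dist}^{s}$ boundary behaviour, at $s=1/2$, of the restricted Dirichlet fractional Laplacian in the regularity theory of Ros-Oton--Serra and Grubb). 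Hence square integrability of $g=(\Gamma_1+\Gamma_2)\xi$ does not force the trace/Hardy dichotomy you want, the same obstruction reappears in each angular-momentum sector of your proposed spherical-harmonics analysis, and the inclusion $\mbox{dom}\big(\Gamma_\alpha^\lambda\big)\subseteq H^1_0(B_a^c)$ — which, again, the paper does not prove either — is in fact very likely false, not merely hard.
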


\begin{proof}
We refer to the explicit expression~\eqref{Gammaexp} for $\Gamma_{\alpha}^{\lambda} \xi$, regarding it as a sum of five distinct terms. The first linear addendum is trivially defined on the whole space $L^2(B_{a}^{c})$. Hereafter we discuss separately the remaining terms.

\emph{1) On the second term in~\eqref{Gammaexp}.} By arguments analogous to those described in the proof of Lemma \ref{lemma:estPhi} we get
    \begin{equation*}
        \int_{B_{a}^{c}}\!\!\! d\vec{y}\, \left\lvert\,\int_{B_a}\!\!\! d\vec{y}'\, R^{\lambda}_{0}\big(\vec{0},\vec{y};\vec{0},\vec{y}'\big) \right\rvert^2 \!\abs{\xi(\vec{y})}^2\leqslant\frac{1}{16\pi^4}\int_{B_{a}^{c}}\!\!\! d\vec{y}\; \frac{\abs{\xi(\vec{y})}^2}{(\abs{\vec{y}}-a)^2} \,.
    \end{equation*}
Recalling that, for any $\xi \!\in\! H^1(B_{a}^{c})$, one has $\xi\!\in\! H^1_0(B_{a}^{c})$ if and only if $\frac{\xi}{\abs{\vec{x}}-a}\!\in\! L^2(B_{a}^{c})$ \cite[p. 74, Example 9.12]{kufner}, the above estimate proves that the second term in~\eqref{Gammaexp} is a bounded operator from $H^1_0(B_{a}^{c})$ into $L^2(B_{a}^{c})$.

\emph{2) On the third term in~\eqref{Gammaexp}.} Let us consider the decomposition
    \begin{equation}\label{eq:third}
        \int_{B_{a}^{c}}\!\!\! d\vec{y}'\,R_0^{\lambda}(\vec{0},\vec{y};\vec{0},\vec{y}')\big(\xi(\vec{y})-\xi(\vec{y}')\big)
        = \frac{1}{4\pi}\, (\mathscr{L}\xi)(\vec{y})
            - \frac{1}{4\pi^3}\int_{B_{a}^{c}}\!\!\! d\vec{y}'\,\mathscr{H}^\lambda(\vec{y}-\vec{y}')\,\big(\xi(\vec{y})-\xi(\vec{y}')\big)\,,
    \end{equation}
where we have set
    \begin{gather*}
        (\mathscr{L} \xi)(\vec{y}) := \frac{1}{\pi^2}\int_{B_{a}^{c}}\!\!\! d\vec{y}'\;\frac{\xi(\vec{y})-\xi(\vec{y}')}{|\vec{y}-\vec{y}'|^4}\,, \hspace{1.5cm}
        \mathscr{H}^{\lambda}(\vec{y}) := \frac{1}{\abs{\vec{y}}^4}\left[1-\tfrac{1}{2}\,t^2 \,K_2(t) \right]_{t \,=\,\sqrt{\lambda}\, |\vec{y}|} .
    \end{gather*}
On one side, for any given $\xi\in H^1_0(B_{a}^{c})$ we consider the extension $\tilde{\xi}\in H^1(\mathbb{R}^3)$ such that $\tilde{\xi} = \xi$ a.e. in $B_{a}^{c}$ and $\tilde{\xi}=0$ a.e. in $B_{a}$ \cite[Thm. 11.4]{LM}. Then, by an explicit calculation reported in the proof of Lemma \ref{lemma:estPhi}, for a.e. $\vec{y} \in B_{a}^{c}$ we obtain
    \begin{align*}
        (\mathscr{L} \xi)(\vec{y})
        & = \frac{1}{\pi^2}\int_{\mathbb{R}^3}\!\!\! d\vec{y}'\;\frac{\tilde{\xi}(\vec{y})-\tilde{\xi}(\vec{y}')}{|\vec{y}-\vec{y}'|^4} 
            - \left(\frac{1}{\pi^2}\int_{B_{a}}\!\!\! d\vec{y}'\;\frac{1}{|\vec{y}-\vec{y}'|^4}\right) \xi(\vec{y})\\
        & = \left[(-\Delta)^{1/2}\,\tilde{\xi}\right](\vec{y}) - \frac{2}{\pi}\!\left[\frac{1}{t+1}+\frac{t-1}{2t}\ln\!\left(\frac{t-1}{t+1}\right)\!\right]_{t \,=\, |\vec{y}|/a}\, {\xi(\vec{y}) \over |\vec{y}| - a}\,,\\
    \end{align*}    
where $(-\Delta)^{1/2} : H^1(\mathbb{R}^3) \to L^2(\mathbb{R}^3)$ is the square root of the Laplacian \cite[\S 3]{hitch}. Keeping in mind that the function between square brackets is upper and lower bounded, and recalling again that $\frac{\xi}{\abs{\vec{y}}\,-\,a}\in L^2(B_{a}^{c})$ for $\xi \in H^{1}(B_{a}^{c})$ if and only if $\xi \in H^{1}_{0}(B_{a}^{c})$, we see that $\mathscr{L}$ is a bounded operator from $H^1_0(B_{a}^{c})$ to $L^2(B_{a}^{c})$.
On the other side, noting that the function $h(t) := 1- \tfrac{1}{2}\,t^2\, K_2(t)$ ($t > 0$) is strictly increasing with $h'(t) = \tfrac{1}{2}\,t^2\, K_1(t)$, $\lim_{t \to 0^{+}} h(t)/t = 0$ and $\lim_{t \to +\infty} h(t) = 0$ \cite[Eqs. 10.29.4, 10.30.2 and 10.25.3]{NIST}, we infer by direct inspection that $\mathscr{H}^{\lambda}\in L^1(\mathbb{R}^3)$.\footnote{
More precisely, integrating by parts and using a known integral identity for the Bessel function \cite[p. 676, Eq. 6.561.16]{GR}, we obtain
$$ 
    \| \mathscr{H}^\lambda \|_{L^1}
    = \int_{\mathbb{R}^3}\!\!\! d\vec{y}\;\frac{1}{\abs{\vec{y}}^4}\left[1-\tfrac{1}{2}\,t^2 \,K_2(t) \right]_{t \,=\,\sqrt{\lambda}\, |\vec{y}|} \!
    = 4\pi \sqrt{\lambda} \int_{0}^{\infty}\!\!\! dt\;{h(t) \over t^2}
    = 4\pi \sqrt{\lambda} \int_{0}^{\infty}\!\!\! dt\;\frac{h'(t)}{t}
    = 2\pi \sqrt{\lambda} \int_{0}^{\infty}\!\!\! dt\;t\, K_1(t)
    = \pi^2 \sqrt{\lambda} \;.
$$}
As a consequence, by elementary estimates and Young's convolution inequality \cite[Thm. 4.5.1]{hormander} we infer
\begin{align*}
    & \int_{B_{a}^{c}}\!\!\!d\vec{y} \left|\int_{B_{a}^{c}}\!\!\! d\vec{y}'\,\mathscr{H}^\lambda(\vec{y}-\vec{y}')\,\big(\xi(\vec{y})-\xi(\vec{y}')\big) \right|^2 \\
    & \leqslant 2 \int_{B_{a}^{c}}\!\!\!d\vec{y} \left|\int_{B_{a}^{c}}\!\!\! d\vec{y}'\,\mathscr{H}^\lambda(\vec{y}-\vec{y}')\, \right|^2 \big|\xi(\vec{y})\big|^2 + 2 \int_{B_{a}^{c}}\!\!\!d\vec{y} \left|\int_{B_{a}^{c}}\!\!\! d\vec{y}'\,\mathscr{H}^\lambda(\vec{y}-\vec{y}')\,\xi(\vec{y}') \right|^2 \\
    & \leqslant 4\,\|\mathscr{H}^\lambda\|_{L^1}^{2}\, \|\xi\|_{L^2}^2 \,.
\end{align*}
Summing up, the previous results allow us to infer that the third term in~\eqref{Gammaexp} defines a bounded operator from $H^1_0(B_{a}^{c})$ to $L^2(B_{a}^{c})$.

\emph{3) On the fourth term in~\eqref{Gammaexp}.} Retracing the arguments described in step \emph{3)} of the proof of Lemma \ref{lemma:estPhi}, it can be shown that the term under analysis is a bounded operator in $L^2(B_{a}^{c})$. More precisely, decomposing $\xi$ into spherical harmonics $Y_{\ell,m}$ and exploiting the explicit representation~\eqref{eq:gzser} for $g^{\lambda}(\vec{X},\vec{X}')$, from the summation formula~\eqref{eq:sumGegembauer} and the orthonormality of the spherical harmonics we deduce
\begin{align*}
    & \int_{B_{a}^{c}}\!\!\! d\vec{y}\, \left\lvert\,\int_{B_{a}^{c}}\nquad d\vec{y}'\,g^{\lambda}(\vec{0},\vec{y};\vec{0},\vec{y}')\xi(\vec{y}')\right\rvert^2 \\
    & =\frac{4}{\pi^4}\int_{a}^{\infty}\!\! dr\;\!\int_{\mathbb{S}^2}\mspace{-6mu}d\vec{\omega}\;\!\left\lvert\sum_{\ell \,=\, 0}^{\infty} \frac{\ell+2}{2\ell+1}\, \frac{I_{\ell+2}\big(\sqrt{\lambda}\, a\big)}{K_{\ell+2}\big(\sqrt{\lambda}\, a\big)}\frac{K_{\ell+2}\big(\sqrt{\lambda}\, r\big)}{r}\!\sum_{|m|\,\leqslant\,\ell} Y_{\ell,m}(\vec{\omega})\int_{a}^{\infty}\!\!\! dr'\;K_{\ell+2}\big(\sqrt{\lambda}\, r'\big)\, \xi_{\ell, m}(r')\right\rvert^2\\
    & = \frac{4}{\pi^4}\int_{a}^{\infty}\!\!\! dr\, \sum_{\ell \,=\, 0}^{\infty} \frac{(\ell+2)^2}{(2\ell+1)^2}\, \frac{I_{\ell+2}^2 \big(\sqrt{\lambda}\, a\big)}{K_{\ell+2}^2 \big(\sqrt{\lambda}\, a\big)}\frac{K_{\ell+2}^2\big(\sqrt{\lambda}\, r\big)}{r^2} \sum_{|m| \, \leqslant\, \ell} \left\lvert\,\int_{a}^{\infty}\!\!\! dr'\;K_{\ell+2}\big(\sqrt{\lambda}\, r'\big)\, \xi_{\ell, m}(r')\right\rvert^2 .
\end{align*}
Then, recalling that the function $t \mapsto t^{\nu}\, K_{\nu}(t)$ ($t \in \mathbb{R}_{+}$, $\nu > 0$) is decreasing and noting that the map $t \mapsto I_{\nu}(t)\,K_{\nu}(t)$  ($t \in \mathbb{R}_{+}$, $\nu > 0$) is decreasing as well with $\lim_{t \to 0^{+}} I_{\nu}(t)\,K_{\nu}(t) = {1 \over 2\nu}$ (see~\cite{baricz} and \cite[Eq. 10.29.2 and \S 10.37, together with \S 10.30(i)]{NIST}), we obtain
\begin{align*}
    & \int_{B_{a}^{c}} \nquad d\vec{y}\, \left\lvert\,\int_{B_{a}^{c}}\nquad d\vec{y}'\,g^{\lambda}(\vec{0},\vec{y};\vec{0},\vec{y}')\xi(\vec{y}')\right\rvert^2 \\
    &\leqslant \frac{4}{\pi^4}\int_{a}^{\infty}\!\!\! dr\;\!\sum_{\ell \,=\,0}^{\infty} \frac{(\ell+2)^2}{(2\ell+1)^2}\frac{I_{\ell+2}^2\big(\sqrt{\lambda}\, a\big)}{K_{\ell+2}^2\big(\sqrt{\lambda}\, a\big)}\frac{K_{\ell+2}^2 \big(\sqrt{\lambda}\, r\big)}{r^2} \sum_{|m|\,\leqslant\, \ell} \left(\int_{a}^{\infty}\!\!\! dr'\;\frac{(r')^{2\ell+4}\,K_{\ell+2}^2 \big(\sqrt{\lambda}\, r'\big)}{(r')^{2\ell+6}}\right) \left( \int_{a}^{\infty}\!\!\! dr'\,(r')^2\,\abs{\xi_{\ell, m}(r)}^2 \right)\\
    &\leqslant \frac{4}{\pi^4 a}\int_{a}^{\infty}\!\!\! dr\;\!\sum_{\ell\, = \,0}^{\infty} \frac{(\ell+2)^2}{(2\ell+1)^2(2\ell+5)}\,I_{\ell+2}^2\big(\sqrt{\lambda}\, a\big) \frac{K_{\ell+2}^2 \big(\sqrt{\lambda}\, r\big)}{r^2} \sum_{|m| \,\leqslant \ell} \int_{a}^{\infty}\!\!\! dr'\;(r')^2\abs{\xi_{\ell, m}(r')}^2\\
    &\leqslant \frac{4}{\pi^4 a}\int_{a}^{\infty}\!\!\! dr\;\!\sum_{\ell \,=\,0}^{\infty}  \frac{(\ell+2)^2}{(2\ell+1)^2(2\ell+5)}\,\frac{I_{\ell+2}^2\big(\sqrt{\lambda}\, a\big)\, (\lambda a^2)^{\ell + 2} K_{\ell+2}^2 \big(\sqrt{\lambda}\, a\big)}{(\lambda r^2)^{\ell + 2}\, r^{2}} \sum_{|m| \,\leqslant\, \ell} \int_{a}^{\infty}\!\!\! dr'\;(r')^2\abs{\xi_{\ell, m}(r')}^2\\
    &\leqslant \frac{1}{\pi^4}\int_{a}^{\infty}\!\!\! dr\;\!\sum_{\ell \,=\,0}^{\infty}  \frac{1}{(2\ell+1)^2(2\ell+5)}\;\frac{a^{2\ell + 3}}{r^{2\ell + 6}} \sum_{|m| \,\leqslant\, \ell} \int_{a}^{\infty}\!\!\! dr'\;(r')^2\abs{\xi_{\ell, m}(r')}^2\\
    & = \frac{1}{\pi^4 a^2} \sum_{\ell \,=\,0}^{\infty} \frac{1}{(2\ell+1)^2(2\ell+5)^2} \sum_{|m| \,\leqslant\, \ell} \int_{a}^{\infty}\!\!\! dr'\;(r')^2\abs{\xi_{\ell, m}(r')}^2\\
    & \leqslant \frac{1}{25 \pi^4 a^2} \sum_{\ell \,=\,0}^{\infty} \sum_{|m| \,\leqslant\, \ell} \int_{a}^{\infty}\!\!\! dr'\;(r')^2\abs{\xi_{\ell, m}(r')}^2
        = \frac{1}{25 \pi^4 a^2} \, \|\xi\|^2_{L^2}\, .
\end{align*}

\emph{4) On the fifth term in~\eqref{Gammaexp}.} This term identifies a bounded operator in $L^2(B_{a}^{c})$. We proceed to justify this claim using arguments analogous to those reported in step \emph{4)} of the proof of Lemma \ref{lemma:estPhi}. More precisely, exploiting the upper bound~\eqref{RDest} for the off-diagonal resolvent kernel, we deduce
\begin{align*}
    & \int_{B_{a}^{c}}\!\!\! d\vec{y}\, \left\lvert\,\int_{B_{a}^{c}}\!\!\! d\vec{y}'\,R_D^{(-\lambda^2)}\!\left(\vec{0},\vec{y};\frac{\sqrt{3}}{2}\vec{y}',-\frac{1}{2}\vec{y}'\right)\!\xi(\vec{y}')\right\rvert^2 
        \leqslant B(\lambda)^2 \int_{B_{a}^{c}}\!\!\! d\vec{y}\, \left(\,\int_{B_{a}^{c}}\!\!\! d\vec{y}'\,\frac{\abs{\xi(\vec{y}')}}{(\abs{\vec{y}}^2+\abs{\vec{y}'}^2)^2}\right)^2\\
    &\leqslant B(\lambda)^2 \int_{B_{a}^{c}}\!\!\! d\vec{y}\, \left(\int_{B_{a}^{c}}\!\!\! d\vec{z}\;\frac{\abs{\xi(\vec{z})}^2}{(\abs{\vec{y}}^2+\abs{\vec{z}}^2)^2} \right) \left(\int_{B_{a}^{c}}\!\!\! d\vec{z}'\;\frac{1}{(\abs{\vec{y}}^2+\abs{\vec{z}'}^2)^2}\right) \\
    &\leqslant B(\lambda)^2 \left( \int_{B_{a}^{c}}\!\!\! d\vec{y}\;\frac{1}{(\abs{\vec{y}}^2+a^2)^2} \right) \left(\int_{B_{a}^{c}}\!\!\! d\vec{z}'\;\frac{1}{(a^2+\abs{\vec{z}'}^2)^2}\right) \|\xi\|^2_{L^2}
    = B(\lambda)^2\,\frac{\pi^2(\pi+2)^2}{4a^2}\;\|\xi\|^2_{L^2}\,.
\end{align*}
\end{proof}

\section{Efimov effect in the unitary limit}

Let us consider the eigenvalue problem in the case $\alpha=0$, corresponding to the case of infinite two-body scattering length, also known as the {\sl unitary limit}. In this section we show that the Hamiltonian $H_{D,0}$ has an infinite sequence of negative eigenvalues accumulating at zero and satisfying the Efimov geometrical law~\eqref{gela}.

\n
The first step is the construction of a sequence of eigenvectors and eigenvalues at a formal level, following the standard procedure used in the physical literature (see \emph{e.g.}~\cite{NE}). 

\n
We write a generic eigenvector associated to the negative eigenvalue $-\mu$ ($ \mu>0$) in the form
\begin{equation}\label{aut1}
\Psi_{\mu} \big(\vec{x}, \vec{y} \big)= \psi_{\mu} \big(\vec{x}, \vec{y} \big) + \psi_{\mu} \!\left(\! - \tfrac{1}{2}\, \vec{x} + \tfrac{\sqrt{3}}{2}\, \vec{y}, -\tfrac{\sqrt{3}}{2}\, \vec{x} - \tfrac{1}{2}\, \vec{y}\! \right) + \psi_{\mu} \!\left(\! - \tfrac{1}{2}\, \vec{x} - \tfrac{\sqrt{3}}{2}\, \vec{y}, \tfrac{\sqrt{3}}{2}\, \vec{x} - \tfrac{1}{2}\, \vec{y}\right),
\end{equation}

\n
where $\psi_{\mu} = G^{\mu}_{12} \xi_{\mu}$, for some suitable $\xi_{\mu} \in L^2(B_{a}^{c})$, so that $\Psi_\mu$ is decomposed in terms of the so-called Faddeev components.
To simplify the notation, from now on we drop the dependence on $\mu$. We look for $\psi$ depending  only on the radial variables $r=\abs{\vec{x}}$ and $\rho=\abs{\vec{y}}$, so with an abuse of notation we set $\psi = \psi(r, \rho)$. Hence we have
\begin{equation}\label{eqra1}
\big(- \Delta_{\vec{x}} - \Delta_{\vec{y}} + \mu \big) \psi 
= -\, \frac{1}{r^2} \frac{\partial}{\partial r} \left( r^2\, \frac{\partial \psi}{\partial r} \right) 
    -\frac{1}{\rho^2} \frac{\partial}{\partial \rho} \left( \rho^2\, \frac{\partial \psi}{\partial \rho} \right) 
    + \mu \,\psi 
= 0\,,   \qquad \mbox{in\, $D_a$}\,,
\end{equation}
where 
\begin{equation}
    D_a= \big\{ (r,\rho)\in \mathbb{R}_{+} \times \mathbb{R}_{+} \;\big|\; r^2 + \rho^2 >a^2 \big\}\,.
\end{equation}

\n
Moreover, we impose the Dirichlet boundary condition 
\begin{align}\label{dirbc}
\psi(r,\rho)=0 \qquad \mbox{in\; $\big\{(r,\rho)\in \partial D_a \,\big|\,r^2 + \rho^2 = a^2\big\}$}\,,
\end{align}
and the (singular) boundary condition $\Psi = \frac{\xi(\rho)}{4\pi\,r} + o(1)$ for $r \to 0$, see \eqref{eq:deltabc}. Taking into account that $\xi(\rho)= 4\pi\,(r \,\psi )(0, \rho)$, in view of~\eqref{aut1} the boundary condition reads
\begin{equation}\label{bc1}
\lim_{r \to 0} \left[ \psi (r,\rho) - \frac{(r\, \psi)(0,\rho)}{r} \right]\mspace{-1.5mu} + 2 \,\psi\!\left( \tfrac{\!\sqrt{3}}{2}\mspace{1.5mu}\rho, \tfrac{1}{2}\mspace{1.5mu}\rho \mspace{-1.5mu}\right)=0\,, \qquad \mbox{for\, $\rho>a$}\,.
\end{equation}

\n
It turns out that the function $\psi$ can be explicitly determined as a solution in $L^2(\mathbb{R}^6)$ to the boundary value problem~\eqref{eqra1}, \eqref{dirbc}, \eqref{bc1}.
We outline the construction in appendix~\ref{bvp} for convenience of the reader. Here we simply state the result.
Let $K_{is_0}: \mathbb{R}_{+} \to \mathbb{R}$ be the modified Bessel function of the second kind with imaginary order, where $s_0$ is the unique positive solution of~\eqref{eqs0}.

\n
Let $\{t_n\}_{n \in \N}$ be the sequence of positive simple roots of the equation $K_{is_0}(t)=0$, where $t_n \to 0$ for $n \to +\infty$. Taking into account that the asymptotic expansion of $K_{i s_0}(t)$ for $t \to 0$ is given by \cite[Eq. 10.45.7]{NIST}
\begin{equation}
    K_{i s_0}(t) = - \sqrt{\tfrac{\pi}{s_0 \sinh (\pi s_0)}} \sin \big( s_0 \log \tfrac{t}{2} - \theta \big) + O(t^2) \,, \qquad \theta= \arg \Gamma(1 + i s_0)\,,
\end{equation}
one also has the following asymptotic behavior 
\begin{align}
t_n= 2 \, e^{\frac{\theta - n \pi}{s_0}} (1 + \epsilon_n)\,, \qquad\quad \mbox{with\, $\epsilon_n \!\to\! 0$\; for\; $n \!\to\! +\infty$}\,. 
\end{align}

Then we have that, for each
\begin{align}\label{mun}
\mu= \mu_n= \left( \frac{t_n}{a} \right)^{\!2}\,,
\end{align}
the boundary value problem~\eqref{eqra1}, \eqref{dirbc}, \eqref{bc1} has a solution in $L^2(\mathbb{R}^6)$ given by~\eqref{psi12n}, \emph{i.e.},
\begin{equation*}
\psi_n(r,\rho) = \frac{C_n}{4\pi\,r \rho}\, \frac{\sinh \left( s_0 \arctan \frac{\rho}{r} \right)}{\sinh \left(s_0 \tfrac{\pi}{2}\right)} \, K_{is_0} \!\left( \tfrac{t_n}{a} \sqrt{r^2 + \rho^2} \right),
\end{equation*}
where $C_n$ is an arbitrary constant. We define the charge distribution associated to $\psi_n$ as
\begin{equation}\label{xindef}
\xi_{n}(\rho) := 4\pi \lim_{r \to 0^{+}} r\, \psi_n(r,\rho) = \frac{C_n}{\rho}\,  K_{is_0} \!\left( \tfrac{t_n}{a}\, \rho \right) \qquad (\rho > a)\,.
\end{equation}
Let us stress that $\xi_{n}$ actually keeps track of the Dirichlet boundary condition~\eqref{dirbc} for $\psi_n$; in fact, we have $\xi_{n}(a) = {C_n \over a}\,K_{i s_0}(t_n) = 0$. With a slight abuse of notation, in the sequel we refer to the function
    \begin{equation*}
        \xi_n(\mathbf{y}) \equiv \xi_n\big(|\mathbf{y}|\big) \in L^2(B_{a}^{c})\,.
    \end{equation*}

The next step is to show that the function $\psi_n$ can be written as the potential generated by the charge~\eqref{xindef} distributed on the hyperplane $\pi_{12}$, \emph{i.e.} to prove the following lemma.

\begin{lemma} 
For any fixed $n \in \{1,2,3, \dots \}$, let $\Psi_n$ and $\xi_n$ be as in Eqs.~\eqref{Psin} and~\eqref{xindef}, respectively.
Then, $\xi_n \!\in\! \mbox{dom}\big(\Gamma_{0}^{\mu_n}\big)$ and 
\begin{equation}\label{eq:FaddeevComponent}
\Psi_n = G^{\mu_n} \xi_n\,. 
\end{equation}
\end{lemma}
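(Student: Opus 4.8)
The plan is to first settle the membership $\xi_n \in \mbox{dom}\big(\Gamma_0^{\mu_n}\big)$ and then to identify $\Psi_n$ with $G^{\mu_n}\xi_n$ channel by channel. By Lemma~\ref{GammaDomain} we have $\mbox{dom}\big(\Gamma_0^{\mu_n}\big) = H^1_0(B_a^c)$, so it suffices to prove $\xi_n \in H^1_0(B_a^c)$. Since $\xi_n(\vec{y}) = \frac{C_n}{|\vec{y}|}\,K_{is_0}\!\big(\tfrac{t_n}{a}|\vec{y}|\big)$ is smooth for $|\vec{y}|>a$ and both $K_{is_0}$ and its derivative decay exponentially for large argument \cite[\S 10.40]{NIST}, a direct computation gives $\xi_n, \nabla\xi_n \in L^2(B_a^c)$, hence $\xi_n \in H^1(B_a^c)$. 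Moreover $\xi_n$ is continuous up to $\partial B_a^c = \{|\vec{y}|=a\}$ and, as already noted below~\eqref{xindef}, its boundary value $\frac{C_n}{a}K_{is_0}(t_n)$ vanishes because $t_n$ is a root of $K_{is_0}$ (see~\eqref{mun}). Thus the Sobolev trace of $\xi_n$ is zero and $\xi_n \in H^1_0(B_a^c)$.

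\textbf{Symmetrisation.} By~\eqref{eq:Gz} and~\eqref{eq:xisym}, $G^{\mu_n}\xi_n = G^{\mu_n}_{12}\xi_n + G^{\mu_n}_{23}\xi_{23} + G^{\mu_n}_{31}\xi_{31}$ is, by construction, bosonic-symmetric and equals the symmetrisation of its component $G^{\mu_n}_{12}\xi_n$ supported on $\pi_{12}$. On the other hand, because $\psi_n$ depends only on $(|\vec{x}|,|\vec{y}|)$ and the exchange maps $\sigma_{ij}$ act as orthogonal transformations of $(\vec{x},\vec{y})$ leaving $\Omega_a$ and $H_D$ invariant, the three summands in~\eqref{Psin} are precisely $\psi_n$ precomposed with the coordinate changes carrying $\pi_{12}$ onto $\pi_{12},\,\pi_{23},\,\pi_{31}$; their singular sets are $\vec{x}=\vec{0}$, $\vec{x}=\sqrt{3}\,\vec{y}$ and $\vec{x}=-\sqrt{3}\,\vec{y}$. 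Consequently $\Psi_n$ is exactly the bosonic symmetrisation of $\psi_n$, and by the equivariance of the potential construction it is enough to prove the single-channel identity $\psi_n = G^{\mu_n}_{12}\xi_n$ on $\Omega_a$.

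\textbf{Identification of the $12$-channel.} I would characterise $u := G^{\mu_n}_{12}\xi_n$, via~\eqref{eq:GzijInt}--\eqref{eq:GzijEq}, as the unique solution of
\[
    (H_D + \mu_n)\,u = \xi_n\,\delta_{\pi_{12}} \ \text{ in } \Omega_a, \qquad u \to 0 \ \text{ as } |\vec{X}|\to\infty,
\]
with Dirichlet data on $\partial\Omega_a$; uniqueness is granted by the bounded invertibility of $H_D+\mu_n$ for $\mu_n>0$. Then I would check that $\psi_n$ solves the same problem. As $\psi_n$ is radial in $\vec{x}$ and in $\vec{y}$, the six-dimensional operator $-\Delta_{\vec{x}}-\Delta_{\vec{y}}+\mu_n$ reduces to the radial operator in~\eqref{eqra1}, so $(-\Delta_{\vec{x}}-\Delta_{\vec{y}}+\mu_n)\psi_n=0$ classically in $\Omega_a\setminus\pi_{12}$; the Dirichlet condition on $\{|\vec{x}|^2+|\vec{y}|^2=a^2\}$ is~\eqref{dirbc}, and the exponential decay of $K_{is_0}$ yields $\psi_n\in L^2$ with the required decay. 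Expanding~\eqref{psi12n} as $r\to 0$, using $\arctan\frac{\rho}{r}\to\frac{\pi}{2}$ and $\sqrt{r^2+\rho^2}\to\rho$, gives $\psi_n(r,\rho) = \frac{\xi_n(\rho)}{4\pi r} + O(1)$, the singular behaviour demanded in~\eqref{eq:deltabc} with $\alpha=0$ and the leading singularity of $u$. Testing $(H_D+\mu_n)\psi_n$ against $\phi\in C^\infty_c(\Omega_a)$ and integrating by parts over $\{|\vec{x}|>\varepsilon\}$, the bulk contribution vanishes while the surface integral on $\{|\vec{x}|=\varepsilon\}$ tends, as $\varepsilon\to 0^+$, to $\int_{B_a^c} \xi_n(\vec{y})\,\phi(\vec{0},\vec{y})\,d\vec{y}$, exactly because of the coefficient $\tfrac{1}{4\pi r}$; hence $(H_D+\mu_n)\psi_n = \xi_n\,\delta_{\pi_{12}}$ in $\mathcal{D}'(\Omega_a)$. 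The difference $\psi_n - u$ then has matching $\tfrac{1}{r}$ singularities, so it is bounded near $\pi_{12}$, lies in $\mbox{dom}(H_D)$ by elliptic regularity, solves $(H_D+\mu_n)(\psi_n-u)=0$ with zero Dirichlet data, and therefore vanishes. This yields $\psi_n=G^{\mu_n}_{12}\xi_n$ and, by the previous step, the claim~\eqref{eq:FaddeevComponent}.

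\textbf{Main obstacle.} The delicate point is the distributional identity producing the source $\xi_n\,\delta_{\pi_{12}}$: one must control the regular remainder in $\psi_n = \frac{\xi_n(\rho)}{4\pi r}+O(1)$ uniformly enough to pass to the limit in the boundary integral on $\{|\vec{x}|=\varepsilon\}$ and to ensure that $\psi_n - G^{\mu_n}_{12}\xi_n$ is genuinely bounded across $\pi_{12}$, so that elliptic regularity places it in $\mbox{dom}(H_D)$ and the uniqueness argument applies. The supporting integrability and decay bounds rely on the monotonicity and exponential decay of $t\mapsto t^{\nu}K_{\nu}(t)$ already exploited in Lemma~\ref{lemma:estPhi}.
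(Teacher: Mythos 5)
Your core computation coincides with the paper's: the membership $\xi_n \in \mbox{dom}\big(\Gamma_0^{\mu_n}\big)$ via Lemma~\ref{GammaDomain} (you argue through vanishing of the Sobolev trace; the paper uses simplicity of the zeros of $K_{is_0}$ and the Hardy-type criterion $\xi_n(\rho)/(\rho-a)\in L^2(B_a^c)$ — both are fine), the reduction by bosonic symmetry to the single channel $\pi_{12}$, and the Green's-identity computation on $\{\abs{\vec{x}}>\varepsilon\}$ with the surface term converging to $\int_{B_a^c} d\vec{y}\,\xi_n(\vec{y})\,\varphi(\vec{0},\vec{y})$ thanks to the $\tfrac{1}{4\pi r}$ coefficient. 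Where you genuinely diverge is the finish. The paper never characterizes $G^{\mu_n}_{12}\xi_n$ by a boundary value problem: it tests $\Psi_n$ against $(H_D+\mu_n)\varphi$ for smooth $\varphi$ vanishing on $\partial\Omega_a$, observes that $\big\langle G^{\mu_n}\xi_n \big| (H_D+\mu_n)\varphi\big\rangle = 3\,\langle \xi_n | \tau\varphi\rangle$ holds \emph{by definition}, since $G^\lambda = \big(\vec{\tau}\,R_D^\lambda\big)^{*}$ by \eqref{eq:GzHs}, and then concludes by density in the graph norm: as $(H_D+\mu_n)$ maps $\mbox{dom}\big(H_D\big)$ onto $L^2_s(\Omega_a)$, the weak identity pins down $\Psi_n = G^{\mu_n}\xi_n$ in $L^2$ with no information needed about the local behaviour of $G^{\mu_n}\xi_n$ near $\pi_{12}$ and no uniqueness theorem for singular solutions.

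Your route through the distributional equation $(H_D+\mu_n)u = \xi_n\,\delta_{\pi_{12}}$ plus uniqueness carries unpaid debts that are essentially the content of the lemma itself. First, \eqref{eq:GzijEq} and the vanishing of $G^{\mu_n}_{12}\xi_n$ on $\partial\Omega_a$ appear in the paper only as heuristics motivating the construction; the rigorous object is defined by \eqref{eq:GzHs}, so your characterization of $u$ as the unique solution of that boundary value problem would itself have to be proved, and likewise the expansion $u = \frac{\xi_n(\rho)}{4\pi r} + O(1)$ that you invoke for the "matching singularities". Second, the uniqueness step: bounded invertibility of $H_D+\mu_n$ gives uniqueness only within $\mbox{dom}\big(H_D\big)$, so everything hinges on placing $w = \psi_n - u$ there. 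Note that smoothness of $w$ \emph{across} $\pi_{12}$ is automatic once both distributional identities hold (then $(-\Delta+\mu_n)w = 0$ in the interior and hypoellipticity applies), so the boundedness-near-$\pi_{12}$ issue you flag as the main obstacle is not the real one; the real one is elliptic regularity up to $\partial\Omega_a$ — in particular near the sphere $\{\vec{x}=\vec{0},\,\abs{\vec{y}}=a\}$ where the closure of $\pi_{12}$ meets the hard-core boundary — together with control at infinity, needed to obtain $w \in H^1_0(\Omega_a)\cap H^2(\Omega_a)$ and to justify the integrations by parts. All of this is standard elliptic theory and could be completed, and it would yield a more concrete PDE picture of the Faddeev component; but it is precisely the technical overhead that the paper's adjoint-plus-density argument is designed to bypass.
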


\begin{proof}
Many of the arguments presented in this proof rely on direct inspection of the explicit expressions~\eqref{Psin}, \eqref{psi12n} and~\eqref{xindef}. In particular, we shall often refer to a well-known integral representation of the Bessel function $K_{i s_0}$, namely \cite[Eq. 10.32.9]{NIST}
\begin{equation}\label{Kint}
    K_{i s_0}(t) = \int_{0}^{\infty}\!\! dz\;\, cos(s_0\mspace{1.5mu} z)\,e^{- \, t \cosh z} \qquad (t > 0)\,.
\end{equation}

Firstly, using~\eqref{Kint} it is easy to check that $K_{i s_0}$ is smooth on the (open) positive real semi-axis and that it vanishes with exponential rate at infinity, together with all its derivatives. This ensures, in particular, that $\xi_n\!\in\! H^1\big(B_{a}^{c}\big)$. To say more, given that all the zeros of $K_{is_0}$ are simple~\cite[\S 10.21(i)]{NIST}, we have $\frac{\xi_n(\rho)}{\rho\, - \mspace{1.5mu}a}\in L^2 (B_{a}^{c})$. In view of \cite[Example 9.12]{kufner} and of the previous considerations, we deduce that $\xi_n \in H^1_0(B_{a}^{c})$, which implies in turn $\xi_{n} \in \mbox{dom}\big(\Gamma_{0}^{\lambda}\big) \subset \mbox{dom}\big(\Phi_{0}^{\lambda}\big)$ by Lemma \ref{GammaDomain}. Incidentally, we remark that $G^{\mu_n} \xi_n$ is well defined since $\xi_n \!\in\! \mbox{dom}\big(\Phi_{0}^{\lambda}\big)$, see~\eqref{eq:Gz} and \eqref{eq:GzHs}.
    
Let us now proceed to prove~\eqref{eq:FaddeevComponent}, to be regarded as an identity of elements in $L^2_{s}(\Omega_{a})$. To this avail it suffices to show that, for all $\varphi \in \mbox{dom}\big(H_{D}\big)$, there holds
    \begin{equation*}
        \big\langle \Psi_n \big| (H_D + \mu_n) \varphi \big\rangle
        = \big\langle G^{\mu_n} \xi_n \big| (H_D + \mu_n) \varphi \big\rangle \,.
    \end{equation*}
Noting that $\mu_{n} \!<\! 0$ belongs to the resolvent set of $H_{D}$, from~\eqref{eq:GzHs} we deduce $\big\langle G^{\mu_n} \xi_n \big| (H_D + \mu_n) \varphi \big\rangle = 3\, \langle \xi_n \big| \tau \varphi \rangle$, where $\tau$ is the Sobolev trace on $\pi_{12}$ (see~\eqref{eq:deftau}). Then, the thesis follows as soon as we prove that, for all $\varphi \!\in\! \mbox{dom}\big(H_{D}\big)$,
    \begin{equation*}
        \big\langle \Psi_n \big| (H_D + \mu_{n}) \varphi \big\rangle
        = 3\, \langle \xi_n | \tau \varphi \rangle\,.
    \end{equation*}
Equivalently, due to the bosonic symmetry (see~\eqref{Psin}), we must show that 
    \begin{equation}\label{eq:thesis}
        \big\langle \psi_n \big| (H_D + \mu_{n}) \varphi \big\rangle
        = \langle \xi_n | \tau \varphi \rangle\,.
    \end{equation}

As an intermediate step, we henceforth derive~\eqref{eq:thesis} for all $\varphi \in \mathcal{D}(\Omega_a)$, where
    \begin{equation*}
        \mathcal{D}(\Omega_a) := \Big\{ \varphi \!\in\! C^{\infty}\big(\,\overline{\Omega}_a\big)\;\Big|\; \varphi\!\upharpoonright\! \partial \Omega_a = 0\,, \; \mbox{with\, $\mbox{supp}\mspace{1.5mu} \varphi$\, compact} \Big\}\, .
    \end{equation*}
Using Green's second identity, we infer
    \begin{align*}
        & \big\langle \psi_n \big| (H_D + \mu_{n}) \varphi \big\rangle
        = \lim_{\varepsilon \to 0^{+}} \int_{\Omega_a \setminus\, \pi_{12}^{\varepsilon}} \hspace{-0.6cm} d\vec{x}\,d\vec{y}\; \overline{\psi_n}\, (- \Delta_{\vec{x}} - \Delta_{\vec{y}} + \mu_{n}) \varphi \\
        & = \lim_{\varepsilon \to 0^{+}}\! \left[ \int_{\Omega_{a} \setminus\, \pi_{12}^{\varepsilon}} \hspace{-0.9cm} d\vec{x}\,d\vec{y}\; \overline{(- \Delta_{\vec{x}} - \Delta_{\vec{y}} + \mu_{n}) \psi_n}\; \varphi 
            + \int_{\partial \pi_{12}^{\varepsilon}} \hspace{-0.5cm} d\Sigma\, \Big( \overline{\partial_{\vec{\nu}} \psi_n}\, \varphi - \overline{\psi_n}\, \partial_{\vec{\nu}} \varphi\Big) \right] \!,
    \end{align*}
where $\pi_{12}^{\varepsilon} = \big\{(\vec{x},\vec{y}) \in \Omega_{a}\,\big|\; \abs{\vec{x}} > \varepsilon \big\}$, while $d\Sigma$ and $\vec{\nu}$ denote respectively the natural surface measure and the outer unit normal in the boundary integral.

\n
The remaining integral over $\Omega_{a}\!\setminus\! \pi_{12}^{\varepsilon}$ is zero for all $\varepsilon > 0$, given that $\psi_{n}$ solves the eigenvalue equation associated to $\mu_{n}$ outside of the coincidence hyperplane $\pi_{12}$. We shall now examine the integral over $\partial \pi_{12}^{\varepsilon}$. On one hand, using the explicit expression~\eqref{psi12n} for $\psi_n$ together with the identity~\eqref{Kint}, we get
    \begin{align*}
        & \left|\int_{\partial \pi_{12}^{\varepsilon}} \hspace{-0.5cm} d\Sigma\; \overline{\psi_n}\; \partial_{\vec{\nu}} \varphi\right| 
            \leqslant \varepsilon^2\! \int_{B_{a}^{c}}\!\!\!\! d\vec{y} \int_{\mathbb{S}^2}\!\!\!\! d\vec{\omega} \left|\psi_n\big(\varepsilon,|\vec{y}|\big)\right|\, \big|\nabla \varphi(\varepsilon \vec{\omega},\vec{y})\big| 
            \leqslant 16\, \pi^2 \varepsilon^2\,\|\varphi\|_{C^{1}}\! \int_{a}^{\infty}\!\! d\rho\;\rho^2\, \big|\psi_n(\varepsilon,\rho) \big| \\
        & \leqslant 4\pi\,C_n\, \varepsilon\,\|\varphi\|_{C^{1}}\! \int_{a}^{\infty}\!\! d\rho\;\rho\, \frac{\sinh \!\left( s_0 \arctan \frac{\rho}{\varepsilon} \right)}{\sinh \left( s_0\, \tfrac{\pi}{2}\right)} \int_{0}^{\infty}\!\! dz\; e^{- \, \tfrac{t_n}{a} \sqrt{\varepsilon^2 + \rho^2} \cosh z} \\
        & \leqslant 4\pi\,C_n\, \varepsilon\,\|\varphi\|_{C^{1}} \left( \int_{a}^{\infty}\!\! d\rho\;\rho\,e^{- \, \tfrac{t_n}{2a} \rho} \right) \left(\int_{0}^{\infty}\!\!\! dz\; e^{- \, \tfrac{t_n}{2} \cosh z} \right)
            \,\xrightarrow{\varepsilon \to 0^{+}}\; 0\,.
    \end{align*}
On the other hand, by an elementary telescopic argument we obtain
    \begin{align}
        & \int_{\partial \pi_{12}^{\varepsilon}} \hspace{-0.5cm} d\Sigma\; \overline{\partial_{\vec{\nu}} \psi_n}\, \varphi 
        = - \,\varepsilon^2 \int_{B_{a}^{c}}\!\!\! d\vec{y} \int_{\mathbb{S}^2}\!\!\! d\vec{\omega}\, \big[\,\partial_{r} \psi_n(r,\rho)\,\big]_{r \,=\, \varepsilon,\;\rho \,=\, |\vec{y}|}\; \varphi(\varepsilon \vec{\omega},\vec{y}) \nonumber \\
        & = - \int_{B_{a}^{c}}\!\!\! d\vec{y} \int_{\mathbb{S}^2}\!\!\! d\vec{\omega} \left[\,r^2\, \partial_{r} \psi_n(r,\rho) + {\xi_n(\rho) \over 4\pi}\,\right]_{r \,=\, \varepsilon,\; \rho\,=\,|\vec{y}|}\, \varphi(\varepsilon \vec{\omega},\vec{y}) \nonumber \\
        & \hspace{3cm} + \int_{B_{a}^{c}}\!\!\! d\vec{y} \int_{\mathbb{S}^2}\!\!\! d\vec{\omega}\; {\xi_{n}(\vec{y}) \over 4\pi}\, \big[ \varphi(\varepsilon\vec{\omega},\vec{y}) - \varphi(\vec{0},\vec{y})\big]
            + \int_{B_{a}^{c}}\!\!\! d\vec{y}\; \xi_n(\vec{y})\, \varphi(\vec{0},\vec{y})\,.
        \label{proofzero}
    \end{align}
By direct computations and simple estimates,\footnote{In particular, we point out that
\begin{align*}
\left|K_{is_0} \!\left( \tfrac{t_n}{a} \sqrt{r^2 + \rho^2} \right) - K_{is_0} \!\left( \tfrac{t_n}{a}\, \rho \right)\right|
\leqslant \frac{t_n}{a} \int_{0}^{r}\! dt\; {t \over \sqrt{t^2 + \rho^2}} \left| K'_{is_0} \!\left( \tfrac{t_n}{a} \sqrt{t^2 + \rho^2} \right) \right|
\leqslant \frac{t_n}{a\,\rho} \int_{0}^{r}\! dt\; t \int_{0}^{\infty}\!\! dz\; \cosh z\, e^{- \, \tfrac{t_n}{a} \sqrt{t^2 + \rho^2} \cosh z}    \,,
\end{align*}
and further notice that
$$ \sup_{z > 0} \left[ {1 \over z} \left(1 - \frac{\sinh\!\left( s_0 \arctan \tfrac{1}{z}\right)}{\sinh\! \left( s_0\, \tfrac{\pi}{2}\right)} \right) \right]
= \lim_{z \to 0^{+}}\! \left[ {1 \over z} \left(1 - \frac{\sinh\!\left( s_0 \arctan \tfrac{1}{z}\right)}{\sinh\! \left( s_0\, \tfrac{\pi}{2}\right)} \right) \right]
= {s_0\,\cosh\!\left(s_0 \tfrac{\pi}{2} \right) \over \sinh\!\left(s_0 \tfrac{\pi}{2} \right)}\,. $$} 
for all $r > 0$ and $\rho > a$ we deduce
    \begin{align*}
    & \left|r^2\, \partial_{r} \psi_n(r,\rho) + {\xi_n(\rho) \over 4\pi}\right|
    = {C_n \over 4\pi \rho} \bigg| K_{is_0} \!\left( \tfrac{t_n}{a}\, \rho \right) -\, \frac{\sinh \!\left( s_0 \arctan \frac{\rho}{r} \right)}{\sinh\! \left( s_0\, \tfrac{\pi}{2}\right)} \, K_{is_0} \!\left( \tfrac{t_n}{a} \sqrt{r^2 + \rho^2} \right) \\
    & \qquad - {s_0\,r\,\rho \over r^2 + \rho^2}\,\frac{\cosh \!\left( s_0 \arctan \frac{\rho}{r} \right)}{\sinh \!\left( s_0 \tfrac{\pi}{2}\right)} \, K_{is_0} \!\left( \tfrac{t_n}{a} \sqrt{r^2 + \rho^2} \right) 
        + {t_n\,r^2  \over a \sqrt{r^2 + \rho^2}}\,\frac{\sinh \!\left( s_0 \arctan \frac{\rho}{r} \right)}{\sinh \!\left( s_0 \tfrac{\pi}{2}\right)}\, K'_{is_0} \!\left( \tfrac{t_n}{a} \sqrt{r^2 + \rho^2} \right) \bigg| \\
    & \leqslant {C_n r \over 4\pi \rho^2} \Bigg[ {\rho \over r} \left|\frac{\sinh \!\left( s_0 \arctan \frac{\rho}{r} \right)}{\sinh\! \left( s_0\, \tfrac{\pi}{2}\right)} - 1 \right| \int_{0}^{\infty}\!\! dz\; e^{- \, \tfrac{t_n}{a} \sqrt{r^2 + \rho^2} \cosh z}
        + {t_n \over a r} \int_{0}^{r}\! dt\; t \int_{0}^{\infty}\!\!\! dz\; \cosh z\; e^{- \, \tfrac{t_n}{a} \sqrt{t^2 + \rho^2} \cosh z}   \\
    & \hspace{4cm} + {s_0\,\cosh \!\left( s_0 \tfrac{\pi}{2} \right) \over \sinh \!\left( s_0 \tfrac{\pi}{2}\right)}\, \int_{0}^{\infty}\!\! dz\;e^{- \, \tfrac{t_n}{a} \sqrt{r^2 + \rho^2} \cosh z}
        + {t_n\,r  \over a} \int_{0}^{\infty}\!\! dz\; \cosh z \,e^{- \, \tfrac{t_n}{a} \sqrt{r^2 + \rho^2} \cosh z} \Bigg] \\
    & \leqslant {C_n r \over 4\pi \rho^2} \left[ {2 s_0\,\cosh\!\left(s_0 \tfrac{\pi}{2} \right) \over \sinh\!\left(s_0 \tfrac{\pi}{2} \right)} \int_{0}^{\infty}\!\! dz\;  e^{- \, \tfrac{t_n}{2} \cosh z} + {3t_n r \over 2 a} \int_{0}^{\infty}\!\!\! dz\; \cosh z\; e^{- \, \tfrac{t_n}{2} \cosh z} \right] e^{- \, \tfrac{t_n}{2a} \rho}\,,
    \end{align*}
which in turn implies
    \begin{align*}
        & \left| \int_{B_{a}^{c}}\!\!\! d\vec{y} \int_{\mathbb{S}^2}\!\!\! d\vec{\omega} \left[\,r^2\, \partial_{r} \psi_n(r,\rho) + {\xi_n(\rho) \over 4\pi}\,\right]_{r \,=\, \varepsilon,\; \rho\,=\,|\vec{y}|}\, \varphi(\varepsilon \vec{\omega},\vec{y}) \right| \\
        & \leqslant 4 \pi C_n\, \varepsilon\left({2 s_0\,\cosh\!\left(s_0 \tfrac{\pi}{2} \right) \over \sinh\!\left(s_0 \tfrac{\pi}{2} \right)} + {3t_n \varepsilon \over 2 a} \right) \|\varphi\|_{C^{0}} \left( \int_{0}^{\infty}\!\! dz\; \cosh z\; e^{- \, \tfrac{t_n}{2} \cosh z}\right) \left(\int_{a}^{\infty}\!\!\! d\rho\;e^{- \, \tfrac{t_n}{2a} \rho} \right)
        \xrightarrow{\varepsilon \to 0^{+}} 0\,.
    \end{align*}
Similar arguments yield
    \begin{align*}
        & \left| \int_{B_{a}^{c}}\!\!\! d\vec{y} \int_{\mathbb{S}^2}\!\!\! d\vec{\omega}\; {\xi_{n}(\vec{y}) \over 4\pi}\, \big[ \varphi(\varepsilon\vec{\omega},\vec{y}) - \varphi(\vec{0},\vec{y})\big] \right|
            \leqslant 4\pi\,C_n\; \varepsilon\, \|\varphi\|_{C^{1}}\! \int_{a}^{\infty}\!\!\! d\rho\;\rho  \int_{0}^{\infty}\!\!\! dz\; e^{- \, \tfrac{t_n}{a}\, \rho \cosh z} \\
        & \leqslant 4\pi\,C_n\; \varepsilon\, \|\varphi\|_{C^{1}}\! \left(\int_{a}^{\infty}\!\!\! d\rho\;\rho\; e^{- \, \tfrac{t_n}{2a}\, \rho} \right) \left( \int_{0}^{\infty}\!\!\! dz\; e^{- \, \tfrac{t_n}{2} \cosh z} \right) 
            \xrightarrow{\varepsilon \to 0^{+}}\, 0\,.
    \end{align*}
Finally, we remark that the last term in~\eqref{proofzero} coincides with $\langle \xi_{n}\,|\, \tau \varphi \rangle$ for any smooth $\varphi$.

The above results prove~\eqref{eq:thesis} for all $\varphi \in \mathcal{D}(\Omega_a)$. Now the thesis follows by plain density arguments. In fact, for any given $\varphi \in \mbox{dom}\big(H_{D}\big) = H^1_{0}(\Omega_a) \cap H^2(\Omega_a)$ there exists an approximating sequence $\{\varphi_{j}\}_{j \in \mathbb{N}} \subset \mathcal{D}(\Omega_a)$, converging to $\varphi$ in the natural topology on $\mbox{dom}\big(H_{D}\big)$ induced by the graph norm, such that
    \begin{align*}
    & \left| \big\langle \psi_n \big| (H_D + \mu_{n}) \varphi \big\rangle - \big\langle \xi_n \big| \tau \varphi \big\rangle \right| \\
    & \leqslant \left| \big\langle \psi_n \big| (H_D + \mu_{n}) ( \varphi - \varphi_j ) \big\rangle \right| 
        + \left|\big\langle \psi_n \big| (H_D + \mu_{n}) \varphi_{j} \big\rangle - \langle \xi_n \big| \tau \varphi_j \rangle \right| 
        + \left| \langle \xi_n \big| \tau ( \varphi_j - \varphi ) \rangle \right| \\
    & \leqslant \|\psi_n\|_{L^2}\,\big( \|H_D ( \varphi - \varphi_j )\|_{L^2} +|\mu_{n}|\, \|\varphi - \varphi_j\|_{L^2} \big)
        + \|\xi_{n}\|_{L^2}\, \|\varphi_j - \varphi \|_{H^2}
        \,\xrightarrow{j \to +\infty}\, 0\,.
    \end{align*}    
\end{proof}

We are now ready to prove Theorem \ref{thm:theig}.

\begin{proof}[Proof of Theorem \ref{thm:theig}]
Recall that, for all $n \in \mathbb{N}$, the function $\Psi_{n}$ given by~\eqref{Psin}, \eqref{psi12n} is a formal eigenfunction of $H_{D,0}$ by construction. Lemma \ref{eq:FaddeevComponent} further ensures that $\Psi_n = G^{\mu_n} \xi_n$, where $\mu_n$ is fixed according to~\eqref{mun} and we are employing the slightly abusive notation~\eqref{eq:Gz}. Since $\xi_n \in \mbox{dom}\big(\Gamma_{0}^{\lambda}\big)$, this suffices to infer that $\Psi_n \in \mbox{dom}\big(H_{D,0}\big)$. Moreover, making reference to Remark \ref{rem:eig}, let us stress that the boundary condition for $\Psi_n = 0 + G^{\mu_n} \xi_n$ encoded in $\mbox{dom}\big(H_{D,0}\big)$ reduces to $\Gamma_{0}^{\lambda} \xi_{n} = 0$.
\end{proof}


\newpage

\appendix
\section{On the integral kernel for the Dirichlet resolvent}\label{app:g0}

In this appendix we collect some results regarding the integral kernel $R_{D}^{\lambda}\big(\vec{X},\vec{X}'\big)$ associated to the Dirichlet resolvent $R^\lambda_D := (H_D + \lambda)^{-1}$ ($\lambda > 0$). We recall that throughout the paper we refer to the decomposition~\eqref{eq:RDG0gz}, namely,
    \begin{equation*}
        R_{D}^{\lambda}\big(\vec{X},\vec{X}'\big) = R_{0}^{\lambda}\big(\vec{X},\vec{X}'\big) + g^{\lambda}\big(\vec{X},\vec{X}'\big)   \,,
    \end{equation*}
where $R_{0}^{\lambda}\big(\vec{X},\vec{X}'\big)$ is the resolvent kernel associated to the free Laplacian in $\mathbb{R}^6$ and, for fixed $\vec{X}' \in \Omega_a$, $g^{\lambda}\big(\vec{X},\vec{X}'\big)$ is the solution of the elliptic problem~\eqref{eq:gzPDE}, \emph{i.e.},
    \begin{equation*}
        \left\{\begin{array}{ll} \!
            \displaystyle{(- \Delta_{\vec{X}} +\lambda) g^\lambda\big(\vec{X},\vec{X}'\big) = 0} & \displaystyle{\mbox{for\; $\vec{X} \in \Omega_{a}$}}\,, \vspace{0.15cm}\\
            \displaystyle{g^\lambda\big(\vec{X},\vec{X}'\big) = -\,R_{0}^{\lambda}\big(\vec{X},\vec{X}'\big)} & \displaystyle{\mbox{for\; $\vec{X} \in \partial\Omega_{a}$}}\,, \\
            \displaystyle{g^\lambda\big(\vec{X},\vec{X}'\big) \to 0} & 
            \displaystyle{\mbox{for\; $\abs{\vec{X}} \to +\infty$}}\,.
        \end{array}\right.      
    \end{equation*}
    
We first remark that by elementary spectral arguments it follows that
    \begin{equation}\label{RDR0sym}
        R_{D}^{\lambda}\big(\vec{X},\vec{X}'\big) = R_{D}^{\lambda}\big(\vec{X}'\!,\vec{X}\big)\,, \qquad
        R_{0}^{\lambda}\big(\vec{X},\vec{X}'\big) = R_{0}^{\lambda}\big(\vec{X}'\!,\vec{X}\big)\,,
    \end{equation}
which entails, in turn,
    \begin{equation*}
        g^{\lambda}\big(\vec{X},\vec{X}'\big) = g^{\lambda}\big(\vec{X}'\!,\vec{X}\big)\,.
    \end{equation*}

Regarding $R_{0}^{\lambda}\big(\vec{X},\vec{X}'\big)$, a well known computation yields
    \begin{equation}\label{eq:Gz0}
        R_{0}^{\lambda}\big(\vec{X},\vec{X}'\big) 
        = {1 \over (2\pi)^6}\int_{\mathbb{R}^6}\hspace{-0.1cm} d\vec{K}\; {e^{i \vec{K} \cdot (\vec{X} - \vec{X}')} \over |\vec{K}|^2 +\lambda}
        = {\lambda \over (2\pi)^3}\,{K_{2}\big(\sqrt{\lambda}\, |\vec{X} - \vec{X}'|\big) \over |\vec{X} - \vec{X}'|^2}\,, 
    \end{equation}
where $K_2$ is the modified Bessel function of second kind, \emph{a.k.a.} Macdonald function. Then, using a noteworthy summation theorem for Bessel functions \cite[p. 940, Eq. 8.532 1]{GR}, for $ \abs{\vec{X}} \neq |\vec{X}'|$ we obtain\footnote{The identity~\eqref{eq:Gz0ser} holds, in principle, only for $\abs{\vec{X}} <|\vec{X}'|$. Yet, it can be readily extended to the whole set $\abs{\vec{X}} \neq |\vec{X}'|$ using the basic symmetry relation for $R_{0}^{\lambda}\big(\vec{X},\vec{X}'\big)$ in~\eqref{RDR0sym}.}
    \begin{equation}\label{eq:Gz0ser}
        R_{0}^{\lambda}\big(\vec{X},\vec{X}'\big) = {1 \over 2\pi^3} \sum_{\ell = 0}^{\infty} (\ell + 2)\,C_{\ell}^{2}\!\left({\vec{X} \cdot \vec{X}' \over \abs{\vec{X}}\,|\vec{X}'|}\right) {I_{\ell + 2}\big(\sqrt{\lambda}\, \abs{\vec{X}}\big) \over \abs{\vec{X}}^2}\,{K_{\ell + 2}\big(\sqrt{\lambda}\, |\vec{X}'|\big) \over |\vec{X}'|^2}\;,
    \end{equation}
    where $C_{\ell}^{2}$ are the Gegenbauer (ultraspherical) polynomials defined by the identity \cite[\S 8.930]{GR}
        \begin{equation}\label{eq:Gegen}
            {1 \over (1 - 2 s u + u^2)^2} = \sum_{\ell = 0}^{\infty} C_{\ell}^{2}(s)\,u^\ell\,, \qquad \mbox{for\; $s \in [-1,1]$,\, $u \in (-1,1)$\,.}
        \end{equation}  

In the last part of this appendix we derive a series representation for the remainder function $g^{\lambda}(\vec{X},\vec{X}')$. Without loss of generality, we shall henceforth assume $\vec{X}' = (\vec{x}',\vec{y}')$ to lie on the 6\textsuperscript{th} axis, namely $\vec{X}' = y'_3\,\vec{e}_6$ with $\vec{e}_6 = (0,0,0,0,0,1)$. To simplify the notation, in the sequel we drop the dependence on $\vec{X}'$ and put
\begin{equation}
    g^\lambda(\vec{X}) \equiv g^\lambda(\vec{X},\vec{X}')\,.
\end{equation}
To proceed, we refer to the classical representation of the Laplace operator in hyper-spherical coordinates~\cite{avery}. More precisely, let us introduce the set of coordinates $(r,\vec{\omega}) \in \mathbb{R}_{+} \times \mathbb{S}^5$ and recall that
\begin{equation}\label{eq:Lap6D}
    - \Delta_{\vec{X}} = - \,{1 \over r^5}\, \partial_r \big(r^5\,\partial_r \cdot\big) - \Delta_{\mspace{1.5mu}\mathbb{S}^5}\,,
\end{equation}
where $\Delta_{\mspace{1.5mu}\mathbb{S}^5}$ is the Laplace-Beltrami operator on $\mathbb{S}^5$. The latter operator is essentially self-adjoint in $L^2(\mathbb{S}^5)$ and has pure point spectrum consisting of degenerate eigenvalues $\sigma(-\Delta_{\mspace{1.5mu}\mathbb{S}^5}) = \{\ell(\ell + 4)\,|\,\ell  = 0,1,2,\,\dots\, \}$. Correspondingly, a complete orthonormal set of eigenfunctions is given by the hyper-spherical harmonics $\mathcal{Y}_{\ell,\vec{m}}$, with $\ell \in \N_0$ and $\vec{m} = (m_0, \ldots , m_4) \in \Z^5$ such that $\ell = m_0 \geqslant m_1 \geqslant m_2 \geqslant m_3 \geqslant |m_4| \geqslant 0$.
Taking this into account, we make the following ansatz for the generic solution of the differential equation $(-\Delta_{\vec{X}} +\lambda)g^\lambda = 0$:
\begin{equation*}
    g^\lambda(r,\vec{\omega}) \equiv g^\lambda\big(\vec{X}(r,\vec{\omega})\big) = \sum_{\ell,\vec{m}} \mathcal{R}^{\lambda}_{\ell,\vec{m}}(r)\,\mathcal{Y}_{\ell,\vec{m}}(\vec{\omega})\,.
\end{equation*}
Using the basic identity~\eqref{eq:Lap6D}, we obtain an ODE for $\mathcal{R}^{\lambda}_{\ell,\vec{m}}$. The solutions are of the form
\begin{equation*}
    \mathcal{R}^{\lambda}_{\ell,\vec{m}}(r) = \alpha^{(I)}_{\ell,\vec{m}}\,{I_{\ell + 2}\big(\sqrt{\lambda}\,r\big) \over r^{2}} + \alpha^{(K)}_{\ell,\vec{m}}\,{K_{\ell + 2}\big(\sqrt{\lambda}\,r\big) \over r^{2}}  \qquad \mbox{for some constants $\alpha^{(I)}_{\ell,\vec{m}}, \alpha^{(K)}_{\ell,\vec{m}} \in \mathbb{R}$}\,.
\end{equation*}
Considering the asymptotic behavior of the Bessel functions $I_{\nu},K_{\nu}$ with large arguments \cite[\S 10.30(ii)]{NIST}, it is necessary to fix $\alpha^{(I)}_{\ell,\vec{m}} = 0$ to fulfill the condition $g^\lambda \to 0$ for $r \to +\infty$. Furthermore, let us point out that the solution $g^{\lambda}$ has to be invariant under rotations around the fixed vector $\vec{X}'$. Keeping in mind that we chose $\vec{X}'$ to lie on the 6\textsuperscript{th} axis, this means that we have to fix $\alpha^{(K)}_{\ell,\vec{m}} = 0$ for all $\vec{m} \neq (\ell,0,0,0,0)$ ($\ell \in \N_0$). 
The previous arguments, together with a sum rule for the hyper-spherical harmonics $\mathcal{Y}_{\ell,\vec{m}}$ \cite[p. 1372, Eq. 66]{avery}, entail
\begin{equation*}
    g^\lambda(r,\vec{\omega}) = \sum_{\ell \,=\, 0}^{+ \infty} \alpha_{\ell}\,C_\ell^2(\vec{\omega} \cdot \vec{e}_6)\, {K_{\ell + 2}\big(\sqrt{\lambda}\,r\big) \over r^{2}}\,, 
\end{equation*}
or, equivalently,
\begin{equation}\label{eq:gzserAPP}
    g^\lambda(\vec{X}) = \sum_{\ell \,=\, 0}^{+ \infty} \alpha_{\ell}\,C_\ell^2\left({\vec{X} \cdot \vec{X}' \over \abs{\vec{X}}\,|\vec{X}'|}\right){K_{\ell + 2}\big(\sqrt{\lambda}\,\abs{\vec{X}}\big) \over \abs{\vec{X}}^{2}}\, .
\end{equation}
Here, $C_{\ell}^2$ are the Gegenbauer polynomials defined by~\eqref{eq:Gegen} and $(\alpha_{\ell})_{\ell \,=\,0,1,2,\ldots} \subset \mathbb{R}$ are suitable coefficients. We now fix these coefficients so as to fulfill the non-homogeneous Dirichlet boundary condition in the second line of~\eqref{eq:gzPDE}. In view of the identity~\eqref{eq:Gz0ser}, the said boundary condition for $\abs{\vec{X}} = a$ becomes
\begin{equation*}
    \sum_{\ell \,=\, 0}^{+ \infty} \alpha_{\ell}\,C_\ell^2\left({\vec{X} \cdot \vec{X}' \over \abs{\vec{X}}\,|\vec{X}'|}\right) {K_{\ell + 2}\big(\sqrt{\lambda}\,a\big) \over a^{2}} 
    = -\,{1 \over 2\pi^3} \sum_{\ell \,=\, 0}^{\infty} (\ell + 2)\,C_{\ell}^{2}\!\left({\vec{X} \cdot \vec{X}' \over \abs{\vec{X}}\,|\vec{X}'|}\right) {I_{\ell + 2}\big(\sqrt{\lambda}\, a\big) \over a^2}\,{K_{\ell + 2}\big(\sqrt{\lambda}\, |\vec{X}'|\big) \over |\vec{X}'|^2}\,.
\end{equation*}
Upon varying $\vec{X} \in \partial\Omega_a$, this implies
\begin{equation*}
     \alpha_{\ell} = -\,{1 \over 2\pi^3}\, (\ell + 2)\, {I_{\ell + 2}\big(\sqrt{\lambda}\, a\big) \over K_{\ell + 2}\big(\sqrt{\lambda}\,a\big)}\,{K_{\ell + 2}\big(\sqrt{\lambda}\, |\vec{X}'|\big) \over |\vec{X}'|^2}\,,
\end{equation*}
which, together with Eq.~\eqref{eq:gzserAPP}, ultimately yields
    \begin{equation}\label{eq:gzser}
        g^{\lambda}\big(\vec{X}\,;\vec{X}'\big) = -\,{1 \over 2\pi^3} \sum_{\ell \,=\, 0}^{\infty}\, (\ell + 2)\,C_{\ell}^{2}\!\left({\vec{X} \cdot \vec{X}' \over \abs{\vec{X}}\,|\vec{X}'|}\right) {I_{\ell + 2}\big(\sqrt{\lambda}\, a\big) \over K_{\ell + 2}\big( \sqrt{\lambda}\, a\big)}\,{K_{\ell + 2}\big(\sqrt{\lambda}\, \abs{\vec{X}}\big) \over \abs{\vec{X}}^2}\,{K_{\ell + 2}\big(\sqrt{\lambda}\, |\vec{X}'|\big) \over |\vec{X}'|^2}\,.
    \end{equation}

    Let us mention the following asymptotic expansions, for any fixed $s \in [-1,1]$ and $t > 0$ \cite[p. 256, Eqs. 10.41.1-2 and p.450, Eq. 18.14.4, together with p.136, Eq. 5.2.5 and p.140, Eq. 5.11.3]{NIST}:
    \begin{equation*}
        I_{\nu}(t) \lesssim {1 \over \sqrt{2\pi \nu}} \left({e\,t \over 2\nu}\right)^{\!\nu}, \qquad
        K_{\nu}(t) \lesssim \sqrt{{\pi \over 2 \nu}} \left({e\,t \over 2\nu}\right)^{\!-\nu}, \qquad
        C^{2}_{\nu}(s) \lesssim \nu^3, \qquad\quad
        \mbox{for\, $\nu \to \infty$}\,.
    \end{equation*}
    Taking these into account it is easy to see that, for any fixed $\vec{X},\vec{X}' \in \Omega_a$, the series in Eq.~\eqref{eq:gzser} behaves as
    \begin{equation*}
        \sum_{\ell \,=\, 1}^{\infty} \ell^3 \left({a \over \abs{\vec{X}}}\right)^{\!\!\ell} \left({a \over |\vec{X}'|}\right)^{\!\!\ell} ,  
    \end{equation*}
    which suffices to infer that~\eqref{eq:gzser} makes sense as a pointwise convergent series. 


\section{Derivation of \texorpdfstring{$\mu_n$}{the eigenvalues} and \texorpdfstring{$\psi_n$}{the eigenvectors}}\label{bvp}

\n
Let $\psi$ be the solution of the boundary value problem~\eqref{eqra1}, \eqref{dirbc}, \eqref{bc1}. If we define the function $\zeta(r,\rho) = r \,\rho \,\psi(r,\rho)$, then the corresponding problem for $\zeta$ reads
\begin{align}
&- \frac{\partial^2 \zeta}{\partial r^2} - \frac{\partial^2 \zeta}{\partial \rho^2} +\mu\, \zeta=0  \hspace{2.7cm} \text{in\; $D_{a}$}\,, \\
&\zeta (r,\rho)=0 \quad \text{for $r^2 + \rho^2 =a^2$}\,, \hspace{1.6cm} \zeta(r,0)=0 \quad\text{for\; $r \geqslant a$}\,, \\
&\frac{\partial \zeta}{\partial r} (0,\rho) +\frac{8}{\sqrt{3}\,\rho} \, \zeta\left(\tfrac{\sqrt{3}}{2}\, \rho, \tfrac{1}{2}\, \rho\right) = 0 \hspace{1cm} \text{for\; $\rho \geqslant a$}\,.
\end{align}
Using polar coordinates $(r,\omega) \in \mathbb{R}_{+} \times (0,\pi/2)$, the above problem can be solved by separation of variables. Indeed, defining $\eta(R,\omega)= \zeta (R \sin \omega, R \cos \omega)$, one finds
\begin{align}
&-\frac{\partial^2 \eta}{\partial R^2} - \frac{1}{R} \frac{\partial \eta}{\partial R} - \frac{1}{R^2} \frac{\partial^2 \eta}{\partial \omega^2} + \mu\, \eta =0 \hspace{1.5cm} \text{for\, $R>a$,\, $\omega \in (0, \tfrac{\pi}{2})$}\,, \\
& \eta \left(R, \tfrac{\pi}{2}\right) =0 \quad \text{for\, $R\geq a$}\,, \hspace{3.2cm} \eta(a,\omega)=0 \quad \text{for\, $\omega \in (0,\tfrac{\pi}{2})$}\,, \\
&\frac{\partial \eta}{\partial \omega} (R,0)+ \frac{8}{\sqrt{3}} \, \eta\left(R, \tfrac{\pi}{3}\right) =0 \hspace{2.75cm} \text{for\, $R > a$}\,.
\end{align}
Let us now look for solutions in the product form $\eta (R,\omega)= f(R)\,g(\omega)$. Then
\begin{align}\label{prg}
g'' - \nu g =0\,, \qquad g(\tfrac{\pi}{2})=0\,, \qquad g'(0) + \frac{8}{\sqrt{3}}\, g\left(\tfrac{\pi}{3}\right)=0\,,
\end{align}
and
\begin{align}\label{eqf}
f'' + \frac{1}{R}\, f'+ \left( \frac{\nu}{R^2} - \mu \right) f=0\,, \qquad f(a)=0\,,
\end{align}
where $\nu$ is a real separation constant. We are interested in the case $\nu>0$ and it turns out that in this case the only solution (apart from a multiplicative factor) of problem~\eqref{prg} is 
\begin{align}\label{g0}
g_0(\omega)= \sinh \left[ s_0 \left(\omega - \tfrac{\pi}{2}\right)\right]\,,
\end{align}
where $\nu= s_0^2$ and $s_0>0$ is the only positive solution of Eq.~\eqref{eqs0}. 
Next, one solves problem~\eqref{eqf} with $\nu=s_0^2$. The only solution (apart from a multiplicative factor) of the differential equation going to zero for $R\to \infty$ is the modified Bessel function of imaginary order $K_{is_0}\big(\sqrt{\mu} \,R\big)$. It remains to impose the Dirichlet boundary condition $K_{is_0}\big(\sqrt{\mu} \,a\big)=0$, which dictates the choice $\mu=\mu_n$ as in~\eqref{mun}. 
Accordingly, the problem~\eqref{eqf} for $\mu=\mu_n$ has a solution going to zero for $R\to \infty$ given by
\begin{align}\label{fnR}
f_n(R)= K_{is_0}\!\left( \frac{t_n}{a} \,R\right) .
\end{align}
By~\eqref{g0} and~\eqref{fnR} we reconstruct the solution~\eqref{psi12n} for the boundary value problem ~\eqref{eqra1}, \eqref{dirbc}, \eqref{bc1}. 
\vspace{0.5cm}\\

\n
\textbf{Acknowledgments}\\
We warmly thank Andrea Posilicano for helpful discussions on the subject of this work.

\newpage

\newpage

\end{document}